\newcommand\ie{{\em i.e.}~}
\newcommand\eg{{\em e.g.}~}
\newcommand\cf{{\em cf.}~}
\def\B{\mathscr B}
\def\C{\mathbb C}
\def\CC{\mathscr C}
\def\d{\mathrm{d}}
\def\D{\mathscr D}
\def\E{\mathcal E}
\def\F{\mathscr F}
\def\G{\mathcal G}
\def\H{\mathcal H}
\def\j{\mathsf j}
\def\K{\mathscr K}
\def\L{\mathscr L}
\def\N{\mathbb N}
\def\O{\mathcal O}
\def\P{\mathcal P}
\def\R{\mathbb R}
\def\S{\mathscr S}
\def\T{\mathcal T}
\def\U{\mathcal U}
\def\V{\mathcal V}
\def\X{\mathfrak X}
\def\12{{\textstyle\frac12}}
\def\dom{\mathcal D}
\def\lone{\mathsf{L}^{\:\!\!1}}
\def\ltwo{\mathsf{L}^{\:\!\!2}}
\def\ltwoloc{\mathsf{L}^{\:\!\!2}_{\rm loc}}
\def\linf{\mathsf{L}^{\:\!\!\infty}}
\def\e{\mathop{\mathrm{e}}\nolimits}
\def\supp{\mathop{\mathrm{supp}}\nolimits}
\def\slim{\mathop{\hbox{\rm s-}\lim}\nolimits}
\def\ad{\mathop{\mathrm{ad}}\nolimits}
\renewcommand\Delta{\triangle}
\def\ds{\mathrm{ds}}
\def\Mc{M_{\rm c}}
\def\Minf{{M_\infty}}
\def\dv{\mathrm{dv}}
\def\id{\mathrm{id}_\mathbb{R}}
\def\div{\mathrm{div}}
\def\Ran{\mathop{\mathsf{Ran}}\nolimits}
\def\ad{{\rm ad}}
\def\muL{\mu_{\rm L}}
\def\muS{\mu_{\rm S}}
\newtheorem{Theorem}{Theorem}[section]
\newtheorem{Remark}[Theorem]{Remark}
\newtheorem{Lemma}[Theorem]{Lemma}
\newtheorem{Assumption}[Theorem]{Assumption}
\newtheorem{Corollary}[Theorem]{Corollary}
\newtheorem{Proposition}[Theorem]{Proposition}
\begin{document}


\title{Spectral analysis and time-dependent scattering theory\\
on manifolds with asymptotically cylindrical ends}

\author{S. Richard$^1$\footnote{
Supported by the Japan Society for the Promotion of Science and by Grants-in-Aid for
scientific Research. A two weeks stay in Santiago de Chile and a one week stay at the
Centre Interfacultaire Bernoulli (EPFL, Lausanne) are also acknowledged.}~~and
R. Tiedra de Aldecoa$^2$\footnote{Supported by the Fondecyt
Grant 1090008 and by the Iniciativa Cientifica Milenio ICM P07-027-F ``Mathematical
Theory of Quantum and Classical Magnetic Systems''.}}

\date{\small}
\maketitle \vspace{-1cm}

\begin{quote}
\emph{
\begin{itemize}
\item[$^1$] Graduate School of Pure and Applied Sciences,
University of Tsukuba,
1-1-1 Tennodai, Tsukuba,
Ibaraki 305-8571, Japan. \\
On leave from Universit\'e de Lyon; Universit\'e
Lyon 1; CNRS, UMR5208, Institut Camille Jordan, 43 blvd du 11 novembre 1918, F-69622
Villeurbanne-Cedex, France
\item[$^2$] Facultad de Matem\'aticas, Pontificia Universidad Cat\'olica de Chile,\\
Av. Vicu\~na Mackenna 4860, Santiago, Chile
\item[] \emph{E-mails:} richard@math.univ-lyon1.fr, rtiedra@mat.puc.cl
\end{itemize}
}
\end{quote}


\begin{abstract}
We review the spectral analysis and the time-dependent approach of scattering theory
for manifolds with asymptotically cylindrical ends. For the spectral analysis, higher
order resolvent estimates are obtained via Mourre theory for both short-range and
long-range behaviors of the metric and the perturbation at infinity. For the
scattering theory, the existence and asymptotic completeness of the wave operators is
proved in a two-Hilbert spaces setting. A stationary formula as well as mapping
properties for the scattering operator are derived. The existence of time delay and
its equality with the Eisenbud-Wigner time delay is finally presented. Our analysis
mainly differs from the existing literature on the choice of a simpler comparison
dynamics as well as on the complementary use of time-dependent and stationary
scattering theories.
\end{abstract}

\textbf{2000 Mathematics Subject Classification:} 58J50, 81Q10, 47A40.

\smallskip

\textbf{Keywords:} Manifolds, spectral analysis, scattering theory, conjugate
operator.


\section{Introduction}\label{Intro}
\setcounter{equation}{0}

Manifolds with asymptotically cylindrical ends are certainly some of the most studied
manifolds in spectral and scattering theory, and many results related to them are
already available in the literature, see for example
\cite{Ch95,Ch02, Ch09,CZ95,Gui89,IKL10,Mel95,Mul98,MS10}. The aim of the present paper
is to complement this bulk of information and to apply recent technics or results in
commutator methods, time-dependent scattering theory, stationary methods and quantum
time delay to these manifolds. As examples of new results, we provide higher order
resolvent estimates for both short-range and long-range behaviors of the metric and the
perturbation at infinity, we deduce mapping properties of the scattering operator,
and we also prove the existence and the equality of global and Eisenbud-Wigner time
delays. Also, we emphasize that our analysis differs from much of the existing
literature on the choice of a simpler reference dynamics.

At the origin of this research stand our three recent works on spectral and scattering
theory in an abstract framework \cite{RT10,RT11,RT12}. In the first two article, it
is shown that, given a scattering process, particular choices of asymptotic reference
systems are better suited than others and automatically lead to richer results. On
manifolds with asymptotically cylindrical ends, this idea can be particularly well
illustrated. In the third article, a comparison scheme for deducing a Mourre estimate
for a pair of self-adjoint operators $(H,A)$ in a Hilbert space $\H$ from a similar
estimate for a second pair of operators $(H_0,A_0)$ in an auxiliary Hilbert space
$\H_0$ has been put into evidence. Again, a clever choice of the reference system
$(\H_0,H_0)$ is of much help. However, this comparison scheme, though at the root of
the time-dependent scattering theory, has not yet been systematically implemented in
Mourre theory. This article can also be regarded as an attempt to fill in this gap in
the context of manifolds with asymptotically cylindrical ends (see also
\cite{DHS92,Doi99,Don99,FH89,IN10} for related works).

Let us now be more precise about the model. We consider a smooth, non-compact,
complete Riemannian manifold $M$ of dimension $n+1\ge2$ without boundary. We assume
that $M$ is of the form $M=\Mc\cup \Minf$, with $\Mc$ relatively compact and $\Minf$
open in $M$. Moreover, we suppose $\Minf$ diffeomorphic to $(0,\infty)\times\Sigma$,
with $\Sigma$ the disjoint union of a finite number of smooth, compact, connected
Riemannian manifolds of dimension $n\ge1$ without boundary. The Riemannian metric
$g|_{M_\infty}$ on $M_\infty$ converges at infinity (in a suitable sense) to the
product metric on $(0,\infty)\times\Sigma$. The usual volume form on $M$ is denoted
by $\dv$, while the one on $\Sigma$ is denoted by $\ds$. In the Hilbert space
$\H:=\ltwo(M,\dv)$, we consider the self-adjoint operator $H:=\Delta_M+V$, where
$\Delta_M$ is the (Dirichlet) Laplace-Beltrami operator on $M$ and $V$ is a
multiplication operator by a smooth bounded function on $M$.

As a reference system, we consider the Laplace-Beltrami operator
$H_0:=\Delta_{\R\times\Sigma}$ in the Hilbert space
$\H_0:=\ltwo(\R\times\Sigma,\d x\otimes\ds)$. This choice of reference system instead
of the more usual Laplacian $\Delta_{(0,\infty)\times\Sigma}$ in
$\ltwo\big((0,\infty)\times\Sigma,\d x\otimes\ds\big)$ with a Neumann or Dirichlet
condition at the origin is inspired by the following considerations. On the first
hand, it involves no arbitrariness when defining $H_0$, since the Laplacian
$\Delta_{\R\times\Sigma}$ is the only natural choice for the comparison operator in
$\ltwo(\R\times\Sigma,\d x\otimes\ds)$. On the second hand, it allows to take
constantly advantage of the existence of a simple conjugate operator $A_0$ for $H_0$
and a simple spectral representation for $H_0$. Finally, it permits to define easily
a family $\{H_0(x)\}_{x\in\R}$ of mutually commuting self-adjoint operators in
$\H_0$, which plays an important role for the proof of the existence of quantum time
delay (the operators $H(x)$ are simply the translated operators
$\e^{-ix\Phi_0}H_0\e^{ix\Phi_0}$, with $\Phi_0:=Q\otimes1$ and $Q$ the position
operator in $\ltwo(\R,\d x)$).

In order to link the dynamics $H$ in $\H$ to the reference dynamics $H_0$ in $\H_0$
we use, as is usual in scattering theory, an identification operator
$J\in\B(\H_0,\H)$. Essentially, $J$ acts as the zero operator on vectors
$\varphi\in\H_0$ having support in $(-\infty,1)\times\Sigma$ and maps isometrically
vectors $\varphi\in\H_0$ having support in $(2,\infty)\times\Sigma$ onto vectors
$J\varphi\in\H$ having support in $M_\infty$. With these tools at hand, and by using
extensively the two-Hilbert scheme of \cite{RT12}, we are able to establish various
novel results for the operator $H$ and the scattering triple $(H_0,H,J)$ that we now
describe.

In Section \ref{sec_spect}, we perform the spectral analysis of $H$ when both the
metric $g$ and the potential $V$ are the sum of two terms, one having a short-range
type behavior at infinity and one having a long-range type behaviour at infinity. We
start in Section \ref{Sec_conjugate} by defining an appropriate conjugate operator
$A$ for $H$. Following the general scheme of \cite[Sec.~3]{RT12}, we simply use the
operator $A=JA_0J^*$, with $A_0$ the generator of dilations along the $\R$-axis in
$\H_0$. With this operator, we establish a Mourre estimate for $H$ in Proposition
\ref{Mourre}. Then, by using an abstract result of \cite{BG96}, we prove the
Zygmund-H\"older regularity of the map
\begin{equation}\label{ZygZyg}
\R\ni\lambda\mapsto
\langle A\rangle^{-s}(H-\lambda\mp i0)^{-1}\langle A\rangle^{-s}\in\B(\H)
\end{equation}
for suitable $s$ and away from the critical values of $H$. This result implies in
particular higher order resolvent estimates for $H$ and higher order
differentiability of the map \eqref{ZygZyg} (see Proposition \ref{yeye} for a precise
statement). As a by-product, formulated in Proposition \ref{natureH}, we obtain the
absence of singular continuous spectrum and the finiteness of the point spectrum of
$H$ away from the set $\T$ of eigenvalues of the (transverse) Laplacian
$\Delta_\Sigma$ on $\Sigma$. In the particular case where the metric $g|_{M_\infty}$
is purely short-range with decay $\langle x\rangle^{-\mu}$, $\mu>1$, at infinity this
result is comparable with the one recently obtained in \cite[Thm.~3.10]{IKL10} with
alternative technics.

In Section \ref{Sec_Scatt}, we present the time-dependent scattering theory for the
triple $(H_0,H,J)$ when the metric $g|_{M_\infty}$ on $M_\infty$ decays as
$\langle x\rangle^{-\mu}$, $\mu>1$, at infinity. In Proposition \ref{existence}, we
prove that the generalized wave operators
$$
W_\pm:=\slim_{t\to\pm\infty}\e^{itH}J\e^{-itH_0}
$$
exist and are partial isometries with initial subspaces
$
\H_0^\pm
:=\big\{\varphi\in\H_0\mid\supp(\F\otimes1)\varphi\subset\R_\pm\times\Sigma\big\}
$. Here $\F$ denotes the Fourier transform in $\ltwo(\R)$. Then, we establish in
Proposition \ref{asymptotc} the asymptotic completeness of the wave operators $W_\pm$
by using an abstract criterion of \cite{RT12}. This implies in particular the
existence and the unitarity of the scattering operator $S:=W_+^*W_-:\H_0^-\to\H_0^+$.
In Section \ref{Sec_Sta-Sta}, we pursue our study by deriving a precise stationary
formula for the scattering matrix $S(\lambda)$ at energy $\lambda$ (see Theorem
\ref{formule_S}). This formula, together with the Zygmund-H\"older regularity of the
resolvent map, allows us to prove that the map $\lambda\mapsto S(\lambda)$ is locally
$k$-times H\"older continuously differentiable away from the critical values of $H$
if $\mu>k+1$ (see Corollary \ref{Cor_diff} for details). This result implies in turns
a mapping property of the scattering operator $S$, which is crucial (and usually
considered as the difficult part) for the proof of the existence of time delay (see
Proposition \ref{Prop_map}). Finally, we prove in Section \ref{Sec_Time} the
existence of time delay and its equality with Eisenbud-Wigner time delay using the
abstract method of \cite{RT11}.

As a final comment, let us stress that even if manifolds with asymptotically
cylindrical ends are certainly a piece of folklore for experts in global analysis,
most of the results contained in this paper are either new or presented in a more
systematic form than the ones already existing in the literature. Furthermore, the
abstract framework underlying our analysis as well as the our scheme of
investigations can serve again for further investigations on other types of
manifolds. We intend to perform such investigations in the near future.\\

\noindent
{\bf Notations:}
$\S(\R)$ denotes the Schwartz space on $\R$. The operators $P$ and $Q$ are
respectively the momentum and the position operators in $\ltwo(\R)$, \ie
$(P\varphi)(x):=-i\varphi'(x)$ and $(Q\varphi)(x):=x\varphi(x)$ for each
$\varphi\in\S(\R)$ and $x\in\R$. $\N:=\{0,1,2,\ldots\}$ is the set of natural numbers
and $\H^s_t(\R)$, $s,t\in\R$, are the weighted Sobolev spaces over $\R$
\cite[Sec.~4.1]{ABG} (with the convention that $\H^s(\R):=\H^s_0(\R)$ and
$\H_t(\R):=\H^0_t(\R)$). The one-dimensional Fourier transform $\F$ is a topological
isomorphism of $\H^s_t(\R)$ onto $\H^t_s(\R)$ for any $s,t\in\R$. Finally, $\otimes$
(resp. $\odot$) stands for the closed (resp. algebraic) tensor product of Hilbert
spaces or of operators.

\section{Reference system}\label{SecFree}
\setcounter{equation}{0}

We introduce in this section the asymptotic reference system $(\H_0,H_0)$. As
explained in the introduction, the configuration space subjacent to the Hilbert
space $\H_0$ is a direct product $\R\times\Sigma$, where $\Sigma$ is the disjoint
union of $N\ge1$ Riemannian manifolds $\Sigma_\ell$. So, we start by defining each
manifold $\Sigma_\ell$ separately.

Let $(\Sigma_\ell,h_\ell)$ be a smooth, compact, orientable, connected Riemannian
manifold of dimension $n\ge1$, without boundary. On a chart $(\O_\ell,\omega_\ell)$
of $\Sigma_\ell$, the Riemannian metric
$h_\ell:\X(\Sigma_\ell)\otimes \X(\Sigma_\ell)\to C^\infty(\Sigma_\ell)$ is given by
the collection of functions $(h_\ell)_{jk}\in C^\infty(\O_\ell)$,
$j,k\in\{1,\ldots,n\}$, defined by
$$
(h_\ell)_{jk}:=h_\ell\left(\frac\partial{\partial\omega_\ell^j},
\frac\partial{\partial\omega_\ell^k}\right).
$$
The contravariant form of the metric tensor $h_\ell$ has components $(h_\ell)^{jk}$
determined by the matrix relation $\sum_i(h_\ell)_{ij}(h_\ell)^{ik}=\delta^k_j$, and
the volume element $\ds_\ell$ on $\Sigma_\ell$ is given by
$$
\ds_\ell:={\frak h}_\ell\,\d\omega_\ell\quad\hbox{with}\quad
{\frak h}_\ell:=\sqrt{\det\big\{(h_\ell)_{jk}\big\}}.
$$
The Laplace-Beltrami operator $\Delta_{\Sigma_\ell}$ in the Hilbert space
$\ltwo(\Sigma_\ell):=\ltwo(\Sigma_\ell,\ds_\ell)$ is defined on each chart by
$$
\Delta_{\Sigma_\ell}\varphi:=-\sum_{j,k=1}^n{\frak h}_\ell^{-1}
\frac\partial{\partial\omega_\ell^j}\;\!{\frak h}_\ell(h_\ell)^{jk}
\frac\partial{\partial\omega_\ell^k}\;\!\varphi,
\quad\varphi\in C^\infty(\Sigma_\ell).
$$
It is known that $\Delta_{\Sigma_\ell}$ is essentially self-adjoint on
$C^\infty(\Sigma_\ell)$ \cite[Thm.~3]{Cor72} and that the closure of
$\Delta_{\Sigma_\ell}$ (which we denote by the same symbol) has a spectrum
$\sigma(\Delta_{\Sigma_\ell})$ consisting in an unbounded sequence of finitely
degenerated eigenvalues
$
0=\tau_{\ell,0}<\tau_{\ell,1}\le\tau_{\ell,2}\le\ldots
$
repeated according to multiplicity \cite[Thm.~1.29]{Ros97}.

For $N\ge 1$, let  $\Sigma:=\bigsqcup_{\ell=1}^N\Sigma_\ell$ be the disjoint union
of the manifolds $\Sigma_\ell$. When endowed with the metric $h$ defined by
$$
[h(X,Y)](\ell,p):=(h_\ell)_p\big(X_{(\ell,p)},Y_{(\ell,p)}\big),\quad
(\ell,p)\in\Sigma,~X_{(\ell,p)},Y_{(\ell,p)}\in T_p\Sigma_\ell,
$$
the set $\Sigma$ becomes a Riemannian manifold. Its volume element $\ds$ is given by
$$
\ds(\V_1,\ldots,\V_N):=\sum_{\ell=1}^N\ds_\ell(\V_\ell),
\quad(\V_1,\ldots,\V_N)\subset\Sigma.
$$
The Laplace-Beltrami operator
$\Delta_\Sigma\simeq\bigoplus_{\ell=1}^N\Delta_{\Sigma_\ell}$ in
$
\ltwo(\Sigma)
:=\ltwo(\Sigma,\ds)\simeq\bigoplus_{\ell=1}^N\ltwo(\Sigma_\ell,\ds_\ell)
$
is essentially self-adjoint on
$C^\infty(\Sigma)\simeq\bigoplus_{\ell=1}^NC^\infty(\Sigma_\ell)$ and has purely
discrete spectrum $\T:=\{\tau_j\}_{j\in\N}$ (the values $\tau_j$ being the elements
of $\big\{\tau_{\ell,k}\mid\ell=1,\ldots,N,~k\in\N\big\}$ arranged in ascending order
and repeated according to multiplicity).

Then, we define in the Hilbert space
$
\H_0:=\ltwo(\R\times\Sigma,\d x\otimes\ds)\simeq\ltwo(\R)\otimes\ltwo(\Sigma)
$
the operator $H_0:=P^2\otimes1+1\otimes\Delta_\Sigma$. The operator $H_0$ is
essentially self-adjoint on $\S(\R)\odot C^\infty(\Sigma)$ and has domain
\cite[Sec.~3]{BG92}
$$
\dom(H_0)
=\big\{\ltwo(\R)\otimes\dom(\Delta_\Sigma)\big\}
\cap\big\{\H^2(\R)\otimes\ltwo(\Sigma)\big\},
$$
endowed with the intersection topology. The spectral measure of $H_0$ is purely
absolutely continuous and admits the tensorial decomposition \cite[Ex.~8.21]{Wei80}:
\begin{equation}\label{decompo}
E^{H_0}(\;\!\cdot\;\!)
=\sum_{j\in \N}E^{P^2+\tau_j}(\;\!\cdot\;\!)\otimes\P_j,
\end{equation}
where $\{\P_j\}_{j\in\N}$ is the family of one-dimensional eigenprojections of
$\Delta_\Sigma$. In particular, the spectrum $\sigma(H_0)$ and the absolutely
continuous spectrum $\sigma_{\rm ac}(H_0)$ of $H_0$ satisfy the identities:
$$
\sigma(H_0)=\sigma_{\rm ac}(H_0)=[0,\infty).
$$

In order to give some results on the spectral representation of $H_0$, one needs to
introduce extra quantities: The fibre $\H_0(\lambda)$ at energy $\lambda\ge0$ in the
spectral representation of $H_0$ is
$$
\H_0(\lambda):=\bigoplus_{j\in\N(\lambda)}
\big\{\P_j\;\!\ltwo(\Sigma)\oplus\P_j\;\!\ltwo(\Sigma)\big\}
\quad\hbox{with}\quad\N(\lambda):=\big\{j\in\N\mid\tau_j\le\lambda\big\}.
$$
Since $\H_0(\lambda)$ is naturally embedded in
$$
\H_0(\infty):=\bigoplus_{j\in\N}
\big\{\P_j\;\!\ltwo(\Sigma)\oplus\P_j\;\!\ltwo(\Sigma)\big\},
$$
we shall sometimes write $\H_0(\infty)$ instead of $\H_0(\lambda)$.
For $\xi\in\R$, we
let $\gamma(\xi):\S(\R)\to\C$ be the restriction operator given by
$\gamma(\xi)\varphi:=\varphi(\xi)$. For $\lambda\in[0,\infty)\setminus\T$, we define
the operator $T_0(\lambda):\S(\R)\odot\ltwo(\Sigma)\to\H_0(\lambda)$ by
\begin{equation}\label{eq_T_0}
\big[T_0(\lambda)\varphi\big]_j:=(\lambda-\tau_j)^{-1/4}
\big\{\big[\gamma\big(-\sqrt{\lambda-\tau_j}\big)\otimes\P_j\big]\varphi,
\big[\gamma\big(\sqrt{\lambda-\tau_j}\big)\otimes\P_j\big]\varphi\big\},
\quad j\in\N(\lambda).
\end{equation}
We can now state the main properties for the operators
$F_0(\lambda):=2^{-1/2}\;\!T_0(\lambda)(\F\otimes 1)$. For shortness, we write
$\widehat\H_0$ for the Hilbert space
$\int_{[0,\infty)}^\oplus\d\lambda\,\H_0(\lambda)$.

\begin{Lemma}[Spectral transformation for $H_0$]\label{Lem_F_0}
Let $t\in\R$. Then
\begin{enumerate}
\item[(a)] For each $\lambda\in[0,\infty)\setminus\T$ and $s>1/2$, the operator
$F_0(\lambda)$ extends to an element of
$\B\big(\H^t_s(\R)\otimes\ltwo(\Sigma),\H_0(\infty)\big)$.
\item[(b)] For each $s>k+1/2$ with $k\in\N$, the function
$
[0,\infty)\setminus\T\ni\lambda\mapsto F_0(\lambda)
\in\B\big(\H^t_s(\R)\otimes\ltwo(\Sigma),\H_0(\infty)\big)
$
is locally $k$-times H\"older continuously differentiable.
\item[(c)] The mapping $F_0:\H_0\to\widehat \H_0$
given for all $\varphi\in\S(\R)\odot\ltwo(\Sigma)$ and every
$\lambda\in[0,\infty)\setminus\T$ by
$$
(F_0\varphi)(\lambda):=F_0(\lambda)\varphi,
$$
extends to a unitary operator, and
$$
F_0H_0F_0^{-1}=\int_{[0,\infty)}^\oplus\d\lambda\,\lambda\;\!.
$$
Furthermore, for any $\phi\in\widehat \H_0$ with
$\phi(\lambda)=\big\{\phi(\lambda)_j^-,\phi(\lambda)_j^+\big\}_{j\in\N(\lambda)}$ for
almost every $\lambda\in[0,\infty)$, one has
$$
F_0^{-1}\phi=\big(\F^{-1}\otimes1\big)\widetilde\phi\quad\hbox{with}\quad
\widetilde\phi(\xi,\;\!\cdot\;\!):=
\begin{cases}
\sqrt{2|\xi|}\sum_{j\in\N}\phi(\xi^2+\tau_j)_j^- & \hbox{for almost every}~~\xi<0\\
\sqrt{2|\xi|}\sum_{j\in\N}\phi(\xi^2+\tau_j)_j^+ & \hbox{for almost every}~~\xi\ge0.
\end{cases}
$$
\end{enumerate}
\end{Lemma}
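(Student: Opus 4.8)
The plan is to reduce the whole statement to the one-dimensional spectral theory of $P^2$ in $\ltwo(\R)$, tensored with the eigenbasis of $\Delta_\Sigma$. Using $\ltwo(\Sigma)=\bigoplus_{j\in\N}\P_j\ltwo(\Sigma)$ (with each $\P_j$ one-dimensional) together with the decomposition \eqref{decompo} and the fact that $H_0$ acts as $(P^2+\tau_j)\otimes1$ on the block $\ltwo(\R)\otimes\P_j\ltwo(\Sigma)$, everything comes down to the classical fact that $P^2$ is diagonalised by the Fourier transform followed by the change of variables $\xi=\pm\sqrt{\mu}$: on the $j$-th block, $P^2+\tau_j$ is unitarily equivalent to multiplication by $\lambda$ on $\int_{[\tau_j,\infty)}^\oplus\d\lambda\,\C^2$, the two $\C$-components corresponding to the two preimages $\xi=\mp\sqrt{\lambda-\tau_j}$, and the factor $(\lambda-\tau_j)^{-1/4}$ appearing in \eqref{eq_T_0} is, up to the explicit $2^{-1/2}$ in $F_0(\lambda)=2^{-1/2}T_0(\lambda)(\F\otimes1)$, the square root of the Jacobian $\big|\tfrac{\d\xi}{\d\lambda}\big|=\big(2\sqrt{\lambda-\tau_j}\,\big)^{-1}$ of that substitution. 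After the identification $\P_j\ltwo(\Sigma)\cong\C$, this is exactly what the operators $T_0(\lambda)$ and $F_0(\lambda)$ encode.

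For part (a), I would observe that for fixed $\lambda\in[0,\infty)\setminus\T$ the operator $F_0(\lambda)$ has only the finitely many nonzero components indexed by $j\in\N(\lambda)$, so that there is no convergence issue in mapping into $\H_0(\infty)$ and it suffices to bound each component separately. Since $\F$ is a topological isomorphism of $\H^t_s(\R)$ onto $\H^s_t(\R)$ (as recalled in the Notations) and since for $s>1/2$ the evaluation map $\gamma(\xi)\colon\H^s_t(\R)\to\C$ is bounded -- a Sobolev embedding, the weight $\langle Q\rangle^t$ being irrelevant to local regularity -- the boundedness of $\varphi\mapsto(\lambda-\tau_j)^{-1/4}\big[\gamma(\pm\sqrt{\lambda-\tau_j})\otimes\P_j\big](\F\otimes1)\varphi$ from $\H^t_s(\R)\otimes\ltwo(\Sigma)$ to $\P_j\ltwo(\Sigma)$ is immediate. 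For part (b), I would localise near a point $\lambda_0\in[0,\infty)\setminus\T$, on a neighbourhood of which $\N(\lambda)$ is constant and finite, so that $F_0(\cdot)$ is a finite sum of terms, each the composition of $\xi\mapsto\gamma(\xi)\otimes\P_j$ with the $C^\infty$ map $\lambda\mapsto\big(\mp\sqrt{\lambda-\tau_j},(\lambda-\tau_j)^{-1/4}\big)$ on $\{\lambda>\tau_j\}$. The $m$-th derivative in $\xi$ of $\xi\mapsto\gamma(\xi)$ is the evaluation map $f\mapsto f^{(m)}(\xi)$, which is bounded from $\H^s_t(\R)$ to $\C$ and locally Hölder continuous in $\xi$ precisely when $s>m+1/2$ (again a Sobolev embedding, $\H^s_t(\R)\hookrightarrow C^{m,\alpha}(\R)$ locally). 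Combining this with the higher-order chain rule (Fa\`a di Bruno) and the hypothesis $s>k+1/2$ then yields the claimed local $k$-times Hölder continuous differentiability.

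For part (c), I would first show that $F_0$ is an isometry: writing $g_j(\xi):=\big\|[\gamma(\xi)\otimes\P_j](\F\otimes1)\varphi\big\|^2_{\ltwo(\Sigma)}$ for $\varphi\in\S(\R)\odot\ltwo(\Sigma)$ and using the substitutions $\mu=\lambda-\tau_j$, $\xi=\pm\sqrt{\mu}$, one obtains
\[
\|F_0\varphi\|_{\widehat\H_0}^2=\tfrac12\sum_{j\in\N}\int_{\tau_j}^{\infty}\frac{g_j(-\sqrt{\lambda-\tau_j})+g_j(\sqrt{\lambda-\tau_j})}{\sqrt{\lambda-\tau_j}}\,\d\lambda=\sum_{j\in\N}\int_\R g_j(\xi)\,\d\xi ,
\]
and the last expression equals $\sum_{j\in\N}\|(1\otimes\P_j)\varphi\|^2_{\H_0}=\|\varphi\|^2_{\H_0}$ by the Plancherel theorem; density of $\S(\R)\odot\ltwo(\Sigma)$ then extends $F_0$ to an isometry on $\H_0$. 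Surjectivity together with the explicit inverse formula I would obtain at one stroke by checking directly that, for $\phi\in\widehat\H_0$, the stated $\widetilde\phi$ satisfies $F_0\big((\F^{-1}\otimes1)\widetilde\phi\big)=\phi$: applying $\P_j$ kills all but the $j$-th summand of $\widetilde\phi(\xi,\cdot)$, and evaluating the $\xi<0$ (resp. $\xi\ge0$) branch at $\xi=-\sqrt{\lambda-\tau_j}$ (resp. $\sqrt{\lambda-\tau_j}$) produces $\sqrt2\,(\lambda-\tau_j)^{1/4}\phi(\lambda)_j^-$ (resp. $\sqrt2\,(\lambda-\tau_j)^{1/4}\phi(\lambda)_j^+$), which after the prefactors $(\lambda-\tau_j)^{-1/4}$ and $2^{-1/2}$ reproduces $\phi(\lambda)$. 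Finally, the intertwining relation $F_0H_0F_0^{-1}=\int_{[0,\infty)}^\oplus\d\lambda\,\lambda$ follows because $(\F\otimes1)H_0(\F^{-1}\otimes1)$ acts as multiplication by $\xi^2+\tau_j$ on the $j$-th block, which equals $\lambda$ at $\xi=\pm\sqrt{\lambda-\tau_j}$, whence $F_0(\lambda)H_0\varphi=\lambda\,F_0(\lambda)\varphi$.

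The step I expect to be the main obstacle is part (b): carefully tracking how many $\xi$-derivatives of the evaluation map remain bounded on the weighted Sobolev space $\H^s_t(\R)$, checking that the Hölder exponent is preserved through the composition with the threshold-dependent substitution $\lambda\mapsto\sqrt{\lambda-\tau_j}$, and verifying that all these estimates are uniform on compact subsets of $[0,\infty)\setminus\T$. Parts (a) and (c), by contrast, reduce to the standard Fourier and change-of-variables manipulations once one has noticed the finite-rank structure of $F_0(\lambda)$ at fixed $\lambda$ and invoked the Plancherel identity.
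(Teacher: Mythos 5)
Your argument is correct and follows essentially the same route as the paper: the paper proves (a) and (c) by invoking the analogous waveguide results (Lemma 2.4(a) and Proposition 2.5 of \cite{Tie06}), which are exactly your Sobolev-trace bound and your Plancherel/change-of-variables computation with the explicit inverse, and it proves (b) by reducing to the H\"older differentiability of the restriction map $\gamma$ on $\H^s_t(\R)$ (citing \cite[Lemma~A.1]{Tie09} and the identity $\gamma(\xi)\langle Q\rangle^t=\langle\xi\rangle^t\gamma(\xi)$), which is precisely your evaluation-map argument combined with the smooth substitution $\lambda\mapsto\pm\sqrt{\lambda-\tau_j}$ away from $\T$. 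You simply supply directly the details that the paper delegates to those references, so there is no substantive difference.
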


\begin{proof}
Point (a) can be shown as in Lemma 2.4.(a) of \cite{Tie06}. For (b), a look at the
expression \eqref{eq_T_0} for $T_0(\lambda)$ shows it is sufficient to prove that the
function $\gamma:\R\to\B\big(\H^s_t(\R),\C\big)$ is $k$-times H\"older continuously
differentiable. But, we already know from \cite[Lemma~A.1]{Tie09} that $\gamma$ is
$k$-times H\"older continuously differentiable from $\R$ to
$\B\big(\H^s(\R),\C\big)$. This fact, together with the identity
$$
\gamma(\xi)\langle Q\rangle^t\varphi=\langle\xi\rangle^t\gamma(\xi)\varphi,
\quad\varphi\in\H^s(\R),~\xi,t\in\R,
$$
implies the desired differentiability.

Finally, the result of point (c) can be shown as in Proposition 2.5 of \cite{Tie06}.
\end{proof}

\section{Manifold with asymptotically cylindrical ends}\label{secfull}
\setcounter{equation}{0}

Let $(M,g)$ be a smooth, second countable, complete Riemannian manifold of dimension
$n+1$, without boundary. Assume that $M$ is of the form $M=\Mc\cup\Minf$, with $\Mc$
relatively compact and $\Minf$ open in $M$. Moreover, suppose that $\Minf$ (with the
induced atlas) can be identified to $(0,\infty)\times\Sigma$ (with the direct product
atlas) in the following sense: There exists a diffeomorphism
$\iota:\Minf\to(0,\infty)\times\Sigma$ mapping each local chart of $\Minf$ to a local
chart of $(0,\infty)\times\Sigma$. In other terms, if the collection
$\{(\V_\alpha,\rho_\alpha)\}$ stands for the atlas on $\Minf$, then the collection
$$
\big\{\big(\U_\alpha,(x,\omega_\alpha)\big)\big\}
:=\big\{\big(\iota(\V_\alpha),\rho_\alpha\circ\iota^{-1}\big)\big\}
$$
defines an equivalent atlas on $(0,\infty)\times\Sigma$. We also assume that
$\iota(\Mc\cap\Minf)\subset(0,1)\times\Sigma$.

We denote by $g_{jk}$ the components of $g$ on a chart $(W,\zeta)$ of $M$, we set
$\{g^{jk}\}:=\{g_{jk}\}^{-1}$, and we define the volume element $\dv$ on $M$
as
$$
\dv:={\frak g}\,\d\zeta\quad\hbox{with}\quad{\frak g}:=\sqrt{\det\{g_{jk}\}}\;\!.
$$
In the Hilbert space $\H:=\ltwo(M,\dv)$ we consider the operator $H$ given by
$$
H\psi:=\big(\Delta_M+V\big)\psi,\quad\psi\in C^\infty_{\rm c}(M),
$$
where $\Delta_M$ is the (Dirichlet) Laplace-Beltrami operator on $M$ and $V$ belongs
to the set $C_{\rm b}^\infty(M)$ of smooth functions on $M$ with all derivatives
bounded (note that we use the same notation for a function and for the corresponding
multiplication operator). Since $M$ is complete and $V$ is bounded, the operator $H$
is essentially self-adjoint on $C^\infty_{\rm c}(M)$ \cite[Thm.~3]{Cor72}, and $H$
acts as
$$
H\psi:=-\sum_{j,k=1}^{n+1}{\frak g}^{-1}
\frac\partial{\partial\zeta^j}\;\!{\frak g}\;\!g^{jk}
\frac\partial{\partial\zeta^k}\;\!\psi+V\psi,
\quad\psi\in C_{\rm c}^\infty(M),
$$
on each chart $(W,\zeta)$ of $M$. Now, on each complete Riemannian manifold $\cal M$, one can
define the Sobolev spaces $W^k(\cal M)$ given in terms of the covariant derivatives
and the Sobolev spaces $\H^{2k}(\cal M)$ given in terms of the Laplace-Beltrami
operator (see \cite[Sec.~0]{Sal01}).
Therefore, the domain $\dom(H)$ of $H$ satisfies in our situation
$$
\dom(H)=\dom(\Delta_M)=\H^2(M).
$$

In the next two lemmas, we recall a compacity criterion and a result on elliptic
regularity that will be used in various instances. In the first lemma, the set of
continuous bounded functions on $M$ is denoted by $C_{\rm b}(M)$, and the ideal of
compact operators of $\B(\H)$ is denoted by $\K(\H)$. In the second lemma, the
Laplacian $\Delta_M$ contained in $H\equiv\Delta_M+V$ is regarded as the
distributional Laplacian on $\ltwoloc(M)$ (the distributional Laplacian coincides
with the usual Dirichlet Laplacian on the subset $\H^2(M)\subset\ltwoloc(M)$, see
\cite[Sec.~4.1-4.2]{Gri09} for details).

\begin{Lemma}\label{lemmecompact}
Let $m\in C_{\rm b}(M)$ satisfy
$
\lim_{x\to\infty}
\big\|\big(m\circ\iota^{-1}\big)(x,\;\!\cdot\;\!)\big\|_{\linf(\Sigma)}=0.
$
Then the product $m(H\pm i)^{-1}$ belongs to $\K(\H)$.
\end{Lemma}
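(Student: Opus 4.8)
The plan is to reduce the statement to a localization argument on the cylindrical end, combined with the classical Rellich-type compactness on the compact piece $\Mc$. First I would write $(H\pm i)^{-1}=(\Delta_M+V\pm i)^{-1}$ and note that $V$ is bounded, so it suffices to control $m(\Delta_M+1)^{-1}$ up to a bounded factor (the standard resolvent identity $(\Delta_M+V\pm i)^{-1}=(\Delta_M+1)^{-1}\big(1-(V\mp i+1)(\Delta_M+V\pm i)^{-1}\big)$ does this, since the bracketed operator is bounded). Equivalently, since $\dom(H)=\H^2(M)$, the operator $(\Delta_M+1)^{-1}$ maps $\H$ boundedly into $\H^2(M)$, so the real content is that multiplication by $m$ is compact from $\H^2(M)$ to $\H=\ltwo(M,\dv)$ under the stated decay hypothesis on $m$.

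Next I would split $m=m\,\chi_{\Mc}+m\,(1-\chi_{\Mc})$, where $\chi_{\Mc}$ is (a smooth version of) the characteristic function of $\Mc$. On the relatively compact part, $\chi_{\Mc}$ times the embedding $\H^2(M)\hookrightarrow\H$ factors through the compact Sobolev embedding $\H^2$ into $\ltwo$ on a precompact neighborhood of $\Mc$ (Rellich--Kondrachov, valid because $M$ is a smooth complete manifold and $\Mc$ is relatively compact), hence $m\,\chi_{\Mc}(\Delta_M+1)^{-1}\in\K(\H)$. For the end, I transport everything to $(0,\infty)\times\Sigma$ via the diffeomorphism $\iota$: set $f(x):=\big\|(m\circ\iota^{-1})(x,\;\!\cdot\;\!)\big\|_{\linf(\Sigma)}$, which tends to $0$ as $x\to\infty$ by hypothesis. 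Pick for each $\varepsilon>0$ an $R$ with $f(x)<\varepsilon$ for $x>R$, and write $m\,(1-\chi_{\Mc})=m\,\eta_R+m\,(1-\chi_{\Mc})(1-\eta_R)$ where $\eta_R$ is a smooth cutoff supported in $\iota^{-1}\big((0,2R)\times\Sigma\big)$ and equal to $1$ on $\iota^{-1}\big((0,R)\times\Sigma\big)$. The first summand is again supported in a relatively compact region, so $m\,\eta_R(\Delta_M+1)^{-1}\in\K(\H)$ by the same Rellich argument; the second summand has operator norm bounded by $\sup_{x>R}f(x)<\varepsilon$ (here I use that $\|(\Delta_M+1)^{-1}\|\le1$ and that multiplication by a function bounded by $\varepsilon$ has norm at most $\varepsilon$). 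Thus $m(1-\chi_{\Mc})(H\pm i)^{-1}$ is approximated in norm, to within a constant times $\varepsilon$, by compact operators, hence is itself compact. Adding back the compact piece on $\Mc$ finishes the proof.

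The step that needs the most care is verifying that multiplication by a cutoff supported in a relatively compact set, composed with $(\Delta_M+1)^{-1}$, is genuinely compact on $\H$: one must know that $(\Delta_M+1)^{-1}$ has range in $\H^2_{\rm loc}(M)$ and invoke the compact embedding of $\H^2$ into $\ltwo$ on a precompact open set with suitable (e.g. Lipschitz) boundary containing the support of the cutoff. This is standard elliptic/Sobolev theory on manifolds, but it is the one place where the geometric hypotheses ($M$ smooth, $\Mc$ relatively compact, completeness for self-adjointness) are all used; the decay hypothesis on $m$ enters only through the elementary $\varepsilon$-truncation on the end and costs nothing beyond $\|(\Delta_M+1)^{-1}\|\le1$. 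I would present these two ingredients — compactness on compacts, smallness at infinity — as the two lemmatic pillars and keep the combination short.
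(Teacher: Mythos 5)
Your proof is correct and follows essentially the same route as the paper's: compactness of a cutoff times the free resolvent, i.e.\ $\chi_\V(\Delta_M+1)^{-1}\in\K(\H)$ for relatively compact $\V$ (elliptic regularity plus Rellich), then a norm-approximation argument exploiting the decay of $m$ at infinity together with the norm-closedness of $\K(\H)$, and finally a resolvent identity to absorb the bounded potential $V$. Only a trivial slip: the identity should read $(H\pm i)^{-1}=(\Delta_M+1)^{-1}\big(1-(V\pm i-1)(H\pm i)^{-1}\big)$, which changes nothing since your argument only uses that the bracketed factor is bounded.
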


\begin{proof}
Let $\V$ be any open relatively compact subset of $M$ and let $\chi_\V$ denote the
corresponding characteristic function. Then, one shows using standards results (see
\cite[Sec.~1.2]{Ura93}, \cite[Sec.~2.2]{Gri99} and \cite[Sec.~1]{Sal01}) that the
operators $\chi_\V(\Delta_M\pm i)^{-1}$ belong to $\K(\H)$. Since $\K(\H)$ is closed
in the norm topology, one infers by an approximation argument taking the geometry of
$M$ into account that $m(\Delta_M\pm i)^{-1}\in\K(\H)$. One then concludes by using
the second resolvent equation
$(H\pm i)^{-1}=(\Delta_M\pm i)^{-1}\big\{1-V(H\pm i)^{-1}\big\}$.
\end{proof}

\begin{Lemma}[Elliptic regularity]\label{ellipticite}
Assume that $V$ is bounded from $\H^{2\ell}(M)$ to $\H^{2\ell}(M)$ for each
$\ell\in\N$. Let $u\in\ltwoloc(M)$ satisfy $(H-z)u=f$ for some
$z\in\C\setminus\sigma(H)$ and $f\in C^\infty(M)$. Then, $u\in C^\infty(M)$.
\end{Lemma}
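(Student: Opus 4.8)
The plan is to prove this elliptic regularity statement by reducing it to the classical interior elliptic regularity theorem on coordinate charts. The key observation is that smoothness is a local property, so it suffices to show that $u$ is smooth in a neighborhood of each point of $M$. Fix $p\in M$ and a chart $(W,\zeta)$ around $p$. On $W$, the operator $H$ acts as the second-order differential operator
$$
H\psi=-\sum_{j,k=1}^{n+1}{\frak g}^{-1}
\frac\partial{\partial\zeta^j}\;\!{\frak g}\;\!g^{jk}
\frac\partial{\partial\zeta^k}\;\!\psi+V\psi,
$$
which is uniformly elliptic on any relatively compact subchart (since $\{g^{jk}\}$ is the inverse of a positive-definite smooth metric, its eigenvalues are bounded away from zero and infinity on compacts) and has smooth coefficients. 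Thus $(H-z)u=f$ reads, in local coordinates, as $Lu=f$ for a uniformly elliptic operator $L$ with $C^\infty$ coefficients and $f\in C^\infty(W)$.

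First I would establish a bootstrapping scheme. Starting from $u\in\ltwoloc(M)\subset\ltwoloc(W)=\H^0_{\rm loc}(W)$, interior elliptic regularity (the standard local estimate $\|u\|_{\H^{s+2}(W')}\le C(\|Lu\|_{\H^s(W'')}+\|u\|_{\H^s(W'')})$ for $W'\Subset W''\Subset W$, see e.g. the references to elliptic theory already invoked in Lemma \ref{lemmecompact}) shows that $u\in\H^2_{\rm loc}(W)$, since $Lu=f\in C^\infty(W)\subset\H^0_{\rm loc}(W)$. Iterating, if $u\in\H^{2\ell}_{\rm loc}(W)$ then, because $f\in C^\infty(W)\subset\H^{2\ell}_{\rm loc}(W)$, we get $u\in\H^{2\ell+2}_{\rm loc}(W)$. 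Hence $u\in\bigcap_{\ell\in\N}\H^{2\ell}_{\rm loc}(W)$, and the Sobolev embedding theorem on $\R^{n+1}$ (applied on each relatively compact subchart) yields $u\in C^\infty(W)$. Since $p$ was arbitrary, $u\in C^\infty(M)$.

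The role of the hypothesis that $V$ is bounded from $\H^{2\ell}(M)$ to $\H^{2\ell}(M)$ deserves a word: it is what guarantees that in the bootstrap, when we rewrite $(H-z)u=f$ as $(\Delta_M-z)u=f-Vu$, the inhomogeneity $f-Vu$ lies in $\H^{2\ell}_{\rm loc}(M)$ whenever $u$ does, so that the pure-Laplacian elliptic estimates gain two derivatives at each step. (Equivalently, keeping $V$ on the left, one needs that multiplication by the smooth function $V$ preserves local Sobolev regularity, which the stated global assumption certainly provides.) I expect the only mildly delicate point to be bookkeeping the localization — choosing nested cutoff functions so that the local elliptic estimate applies on shrinking neighborhoods without losing regularity — but this is entirely routine, and the geometric globalization is a standard partition-of-unity argument. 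No genuine obstacle is anticipated; the content is classical interior regularity, the manifold structure entering only through the fact that $\Delta_M+V$ is, chart by chart, a smooth-coefficient uniformly elliptic operator.
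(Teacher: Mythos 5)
Your proof is correct, but it follows a different route from the paper. You localize to charts and run the classical interior elliptic bootstrap: in coordinates $H-z$ is a smooth-coefficient elliptic operator, so $u\in\H^{2\ell}_{\rm loc}$ together with $(H-z)u=f\in C^\infty$ gives $u\in\H^{2\ell+2}_{\rm loc}$, and Sobolev embedding concludes. The paper instead stays global and coordinate-free: it invokes \cite[Cor.~7.2]{Gri09}, which identifies $C^\infty(M)$ with $\H^\infty_{\rm loc}(M)$ (local Sobolev regularity measured by powers of $\Delta_M$), and then proves by induction on $k$ that $(\Delta_M)^k u\in\ltwoloc(M)$, writing $(\Delta_M)^k u=(\Delta_M)^{k-1}f+z(\Delta_M)^{k-1}u-(\Delta_M)^{k-1}Vu$; it is precisely in handling the term $(\Delta_M)^{k-1}Vu$ that the hypothesis that $V$ preserves $\H^{2\ell}(M)$ is used. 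This explains a point you rightly flagged: in your chart-by-chart argument the global mapping hypothesis on $V$ is not really needed (smoothness of $V$ suffices for multiplication to preserve local Sobolev regularity), so your argument proves the statement under weaker assumptions, at the cost of importing the standard Euclidean interior regularity theorem (not merely the a priori estimate, so one should cite or prove the genuine regularity statement) and Sobolev embedding. The paper's argument is shorter given the cited machinery and meshes with its Laplacian-based definition of the spaces $\H^{2k}(M)$, which is also why the hypothesis on $V$ appears in the statement; your version is more elementary in spirit and makes the purely local nature of the conclusion transparent. (Your closing remark about a partition-of-unity globalization is superfluous: smoothness is local, as you yourself use.)
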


Note that the boundedness of $V$ from $\H^{2\ell}(M)$ to $\H^{2\ell}(M)$ is
automatically verified if the curvature tensor of $M$ and all its derivatives are
uniformly bounded (see the next section).

\begin{proof}
We know from \cite[Cor.~7.2]{Gri09} that the identity map
$I:C^\infty(M)\to\H^\infty_{\rm loc}(M)$ is a homeomorphism of topological spaces.
Therefore, it is sufficient to show that $(\Delta_M)^ku\in\ltwoloc(M)$ for each
$k\in\N$. We proceed by induction on $k$. For $k=1$, we have
$\Delta_Mu=f+zu-Vu\in\ltwoloc(M)$, since each term belongs to $\ltwoloc(M)$. For
$k-1\geq 0$, we assume the statement true. Then, for $k$, we have
$$
(\Delta_M)^ku=(\Delta_M)^{k-1}f+z(\Delta_M)^{k-1}u-(\Delta_M)^{k-1}Vu.
$$
But, $(\Delta_M)^{k-1}f\in\ltwoloc(M)$ since $f\in C^\infty(M)$ and
$(\Delta_M)^{k-1}u\in\ltwoloc(M)$ due to the induction hypothesis. So, it only
remains to show that $(\Delta_M)^{k-1}Vu\in\ltwoloc(M)$. However, we know by
assumption that $V$ is bounded from $\H^{2\ell}(M)$ to $\H^{2\ell}(M)$ for any
$\ell\in\N$. Therefore, $V$ is also bounded from $\H^{2(k-1)}(\V)$ to
$\H^{2(k-1)}(\V)$ for any relatively compact open set $\V\subset M$.
This, together with the induction hypothesis $u\in\H^{2(k-1)}(M)$
implies that $(\Delta_M)^{k-1}Vu\in\ltwoloc(M)$.
\end{proof}

\section{Spectral analysis}\label{sec_spect}
\setcounter{equation}{0}

We perform in this section the spectral analysis of the operator $H$. We impose
explicit decay assumptions on the metric and on the potential at infinity. Then, we
deduce various results on the regularity of the resolvent of $H$ near the real axis.
For that purpose, we use Mourre theory in the rather efficient way presented in the
paper \cite{RT12}; namely, we build the Mourre theory for $H$ from the analog theory
for $H_0$, even if $H$ and $H_0$ act in different Hilbert spaces.

To begin with, we need to introduce an identification operator from $\H_0$ to $\H$.
For this, we recall that $\iota^*$ and $(\iota^{-1})_*$ are respectively the pullback
by $\iota$ and the push-forward by $\iota^{-1}$. Then, we let
$\j\in C^\infty\big(\R;[0,1]\big)$ satisfy
$$
\j(x):=
\begin{cases}
1 & \hbox{if}~~x\ge2\\
0 & \hbox{if}~~x\le1,
\end{cases}
$$
and set
\begin{equation*}
J:\H_0\to\H,\quad\varphi\mapsto
\chi_\infty\sqrt{\frac{\iota^*(1\otimes{\frak h})}{\frak g}}\,
\iota^*\big((\j\otimes1)\varphi\big),
\end{equation*}
where $\chi_\infty$ is the characteristic function for $\Minf$ and
${\frak h}:=\sqrt{\det\{h_{jk}\}}$. One has $\|J\|_{\B(\H_0,\H)}= 1$, since
$\|J\varphi\|_\H=\|\varphi\|_{\H_0}$ for each function $\varphi\in\H_0$ with
$\supp(\varphi)\subset(2,\infty)\times\Sigma$.

Our second task consist in fixing the decay behaviour of the metric and the potential
on $M_\infty$. In our setup, conditions on $g_{jk}$, $V$ and $\iota$ could be stated
separately. We prefer to combine these conditions in a single one on
$\widetilde g_{jk}:=(\iota^{-1})^*g_{jk}$ and $\widetilde V:=(\iota^{-1})^*V$, since
it corresponds to the usual approach in the literature. Note that even if we work in
a smooth setting, we shall distinguish short-range and long-range behaviours for the
sake of completeness.

In the following assumption, we use the notation $\partial^\alpha$ for the higher
order derivative $(\partial_x,\partial_\omega)^\alpha$ with multi-index
$\alpha\in\N^{n+1}$ and the notation $\langle x\rangle:=(1+x^2)^{1/2}$ for $x\in\R$.

\begin{Assumption}\label{CondsurgV}
Assume that the metric
$\widetilde g_{jk}=(\widetilde g_{\rm L})_{jk}+(\widetilde g_{\rm S})_{jk}$ and the
potential $\widetilde V=\widetilde V_{\rm L}+\widetilde V_{\rm S}$ satisfy the
following:
\begin{enumerate}
\item[(LR)] There exists $\muL\ge0$ such that for each $\alpha\in\N^{n+1}$ and each
$j,k\in\{1,\ldots,n+1\}$ one has
$$
\big|\partial^\alpha\big((\widetilde g_L)_{jk}-(1\oplus h)_{jk}\big)(x,\omega)\big|
\le\textsc c_\alpha\;\!\langle x\rangle^{-\muL-|\alpha|}\quad\hbox{and}\quad
\big|\big(\partial^\alpha\widetilde V_{\rm L}\big)(x,\omega)\big|
\le\textsc c_\alpha\;\!\langle x\rangle^{-\muL-|\alpha|}
$$
for some constant $\textsc c_\alpha\ge0$ and for all $x>0$ and $\omega\in\Sigma$.
\item[(SR)] There exists $\muS\ge0$ such that for each $\alpha\in\N^{n+1}$ and each
$j,k\in\{1,\ldots,n+1\}$ one has
$$
\big|\partial^\alpha\big((\widetilde g_S)_{jk}-(1\oplus h)_{jk}\big)(x,\omega)\big|
\le\textsc c_\alpha\;\!\langle x\rangle^{-\muS}\quad\hbox{and}\quad
\big|\big(\partial^\alpha\widetilde V_{\rm S}\big)(x,\omega)\big|
\le\textsc c_\alpha\;\!\langle x\rangle^{-\muS}
$$
for some constant $\textsc c_\alpha\ge0$ and for all $x>0$ and $\omega\in\Sigma$.
\end{enumerate}
Let also $\mu:=\min\{\muL,\muS\}$.
\end{Assumption}

Simple consequences of Assumption \ref{CondsurgV} are the following:

\begin{enumerate}
\item[(i)] For each $\alpha\in\N^{n+1}$, one has
$
\big|(\partial^\alpha\widetilde g_{jk})(x,\omega)\big|
\le\textsc d_\alpha
$
for some constant $\textsc d_\alpha\ge0$ and for all $x>0$ and $\omega\in\Sigma$.
\item[(ii)] There exists a constant $\delta>0$ such that $\widetilde{\frak g}>\delta$
on $(0,\infty)\times\Sigma$.
\item[(iii)] The curvature tensor of $M$ is uniformly bounded, as are all its
covariant derivatives. In particular, the Sobolev spaces $W^{2k}(M)$ and $\H^{2k}(M)$
are equal for all $k\in\N$ and $V$ is bounded from $\H^{2\ell}(M)$
to $\H^{2\ell}(M)$ for any $\ell\in\N$ (see \cite[Sec.~5~\&~Lemma~1.6]{Sal01}). So, $\dom(H)=\dom(\Delta_M)=\H^2(M)=W^2(M)$ and Lemma \ref{ellipticite} applies.
\end{enumerate}

Now, we determine an expression for the operator $HJ-JH_0$ acting on suitable
elements of $\H_0$. The main ingredient of the computation is the following equality
\begin{equation}\label{magic}
\frac\partial{\partial\rho^j}\;\!\psi
=\iota^*\left\{\frac\partial{\partial(x,\omega)^j}\big((\iota^{-1})^*\psi\big)\right\},
\quad\supp(\psi)\subset\Minf\;\!,
\end{equation}
which follows from the definition of the diffeomorphism $\iota$.
Using the matricial conventions, we obtain for any
$\varphi\in\S(\R)\odot C^\infty(\Sigma)$ that
\begin{align}\label{defT0}
T\varphi
&:=(HJ-JH_0)\varphi\nonumber\\
&=-\chi_\infty\;\!\iota^*\bigg[\widetilde{\frak g}^{-1}(1\otimes\frak h)^{-1/2}
\;\!b_1\big(\partial_x,\partial_\omega\big)\widetilde{\frak g}\;\!\widetilde g^{-1}
\begin{pmatrix}
\partial_x\nonumber\\
\partial_\omega
\end{pmatrix}
\widetilde{{\frak g}}^{-1/2}(1\otimes\frak h)^{1/2}(\j \otimes 1)\varphi\nonumber\\
&\quad+\widetilde{\frak g}^{-1/2}(1\otimes\frak h)^{-1/2}(\j\otimes 1)
\big(\partial_x,\partial_\omega\big)b_2
\begin{pmatrix}
\partial_x\\
\partial_\omega
\end{pmatrix}
\widetilde{{\frak g}}^{-1/2}(1\otimes\frak h)^{1/2}(\j\otimes1)\varphi\nonumber\\
&\quad+\widetilde{\frak g}^{-1/2}(1\otimes\frak h)^{-1/2}(\j\otimes 1)
\big(\partial_x,\partial_\omega\big)(1\otimes\frak h)\big(1\oplus h^{-1}\big)
\begin{pmatrix}
\partial_x\\
\partial_\omega
\end{pmatrix}
b_3\;\!\widetilde{\frak g}^{-1/2}\varphi\nonumber\\
&\quad-\widetilde V\;\!\widetilde{{\frak g}}^{-1/2}(1\otimes\frak h)^{1/2}
(\j\otimes1)\varphi\bigg],
\end{align}
with
$b_1:=(1\otimes\frak h)^{1/2}-\widetilde{\frak g}^{1/2}(\j\otimes 1)$,
$
b_2:=\widetilde{\frak g}\;\!\widetilde g^{-1}-(1\otimes\frak h)
\big(1\oplus h^{-1}\big)
$
and $b_3:=(1\otimes\frak h)^{1/2}(\j\otimes 1)-\widetilde{\frak g}^{1/2}$.

The following lemma will be used at various places in the sequel. Its statement
involves the multiplication operator $\Phi_0$ on $\R\times\Sigma$ given by
\begin{equation}\label{defphi0}
\Phi_0\varphi:=(\id\otimes1)\varphi,\quad\varphi\in\S(\R)\odot C^\infty(\Sigma),
\end{equation}
where $\id$ is the function $\R\ni x\mapsto x\in\R$. The closure of $\Phi_0$ in
$\H_0$ (which we denote by the same symbol) is self-adjoint.

\begin{Lemma}\label{preuvepourRafa}
Suppose that Assumption \ref{CondsurgV} holds with $\mu\ge0$ and take
$\gamma\in[0,\mu]$. Then, the operator $T\langle\Phi_0\rangle^\gamma$ defined on
$\S(\R)\odot C^\infty(\Sigma)$ extends continuously to an element of
$\B\big(\dom(H_0),\H\big)$. Furthermore, for any $z\in\C\setminus\R$ the operator
$(H-z)^{-1}T\langle\Phi_0\rangle^\gamma$ defined on $\S(\R)\odot C^\infty(\Sigma)$
extends continuously to an element of $\B(\H_0,\H)$.
\end{Lemma}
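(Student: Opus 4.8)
The plan is to read off from the explicit formula \eqref{defT0} a single structural fact and then derive both assertions from it. The structural fact is: when transported to $\R\times\Sigma$ via $\iota$, the operator $T$ is a differential operator of order at most two, supported in $\{x\ge1\}\times\Sigma$, whose coefficients together with all their derivatives are $O(\langle x\rangle^{-\mu})$ as $x\to\infty$. Granting this, assertion (a) will follow from a Sobolev estimate for $H_0$ and assertion (b) from the dual of the corresponding estimate for $H$.

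To establish the structural fact I would expand all the derivatives in \eqref{defT0} by the Leibniz rule, writing $T$ on $\S(\R)\odot C^\infty(\Sigma)$ as $\chi_\infty\,\iota^*\circ D$ with $D=\sum_{|\alpha|\le2}a_\alpha\,\partial^\alpha$ a second-order differential operator on $(0,\infty)\times\Sigma$ and $a_\alpha\in C^\infty$ supported in $\{x\ge1\}$. The only non-trivial point is the bound $\sup_{(x,\omega)}\langle x\rangle^\mu|\partial^\beta a_\alpha(x,\omega)|<\infty$ for every $\beta$, and this is exactly where Assumption \ref{CondsurgV} is used. The key remark is that each of the four terms of \eqref{defT0} carries precisely one of the factors $b_1$, $b_2$, $b_3$ or $\widetilde V$; on $\{x\ge2\}$ (where $\j\equiv1$) these reduce to $b_1=b_3=(1\otimes\frak h)^{1/2}-\widetilde{\frak g}^{1/2}$, $b_2=\widetilde{\frak g}\,\widetilde g^{-1}-(1\otimes\frak h)(1\oplus h^{-1})$ and $\widetilde V=\widetilde V_{\rm L}+\widetilde V_{\rm S}$, each of which is $O(\langle x\rangle^{-\mu})$ -- being a difference (or a product of differences) of quantities that agree up to $O(\langle x\rangle^{-\mu})$ by (LR)--(SR) -- and this rate is not degraded by differentiation since $\mu\le\muL$ and $\mu\le\muS$; meanwhile every remaining prefactor ($\widetilde{\frak g}^{\pm1/2}$, $(1\otimes\frak h)^{\pm1/2}$, $\widetilde g^{-1}$, $1\oplus h^{-1}$, $\j$) has bounded derivatives of all orders by the compactness of $\Sigma$ and the consequences (i)--(ii) of the Assumption, and on the compact set $\{1\le x\le2\}$ all coefficients are merely bounded. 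I expect this bookkeeping, which ultimately rests on the cancellation built into $HJ-JH_0$, to be the main -- though elementary -- obstacle of the proof.

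For (a): since $\gamma\in[0,\mu]$ and $\langle\Phi_0\rangle^\gamma$ acts on $\R\times\Sigma$ as multiplication by $\langle x\rangle^\gamma$, one further application of the Leibniz rule gives $T\langle\Phi_0\rangle^\gamma=\chi_\infty\iota^*\widetilde D$ with $\widetilde D$ of the same form and coefficients now bounded by $\textsc c\,\langle x\rangle^{\gamma-\mu}\le\textsc c$. Using that $\iota^*$ is bounded on the relevant $\ltwo$-spaces (the densities being bounded above and below by (i)--(ii)), one obtains for $\varphi\in\S(\R)\odot C^\infty(\Sigma)$ the estimate $\|T\langle\Phi_0\rangle^\gamma\varphi\|_\H\le\textsc c\sum_{|\alpha|\le2}\|\partial^\alpha\varphi\|_{\ltwo(\R\times\Sigma)}$, the derivatives being taken in local charts of $(0,\infty)\times\Sigma$. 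The intersection topology defining $\dom(H_0)=\{\ltwo(\R)\otimes\dom(\Delta_\Sigma)\}\cap\{\H^2(\R)\otimes\ltwo(\Sigma)\}$ controls $\|(P^2\otimes1)\varphi\|$ and $\|(1\otimes\Delta_\Sigma)\varphi\|$ directly, and hence -- via elliptic regularity on the compact manifold $\Sigma$ together with an elementary tensor-product estimate for the mixed derivatives $\partial_x\partial_{\omega^j}$ -- dominates the right-hand side by $\textsc c\,\|\varphi\|_{\dom(H_0)}$. Since $\S(\R)\odot C^\infty(\Sigma)$ is a core for $H_0$, assertion (a) follows.

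For (b): I would argue by duality. Fix $z\in\C\setminus\R$ and set $S:=\langle\Phi_0\rangle^\gamma T^*(H-\bar z)^{-1}$. The formal adjoint of $D$ is again a second-order differential operator on $(0,\infty)\times\Sigma$ supported in $\{x\ge1\}$ whose coefficients are $O(\langle x\rangle^{-\mu})$ (differentiating the $a_\alpha$'s does not spoil the rate), so, exactly as in (a), $\langle\Phi_0\rangle^\gamma T^*$ acts on $C^\infty_{\rm c}(M)$ and is dominated there by the $\ltwo(\Minf)$-norms of the chart derivatives of order at most two. Applying this to $u:=(H-\bar z)^{-1}f$ with $f\in\H$, and invoking consequence (iii), namely $\dom(H)=\H^2(M)=W^2(M)$ so that $(H-\bar z)^{-1}\in\B(\H,W^2(M))$, one bounds those chart derivatives of $u$ on $\Minf$ by $\|u\|_{W^2(M)}\le\textsc c\,\|f\|_\H$ (using (i) for the local equivalence of norms). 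Hence $S\in\B(\H,\H_0)$. Finally, a short computation on the core, using the self-adjointness of $H$ and $\langle\Phi_0\rangle^\gamma$ and the boundedness obtained in (a), identifies $S^*\varphi$ with $(H-z)^{-1}T\langle\Phi_0\rangle^\gamma\varphi$ for all $\varphi\in\S(\R)\odot C^\infty(\Sigma)$; thus $S^*\in\B(\H_0,\H)$ is the continuous extension sought.
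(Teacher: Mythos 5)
Your proposal is correct and follows essentially the same route as the paper: you expand \eqref{defT0} to exhibit $T$ as $\chi_\infty\iota^*$ composed with a second-order differential operator whose coefficients decay like $\langle x\rangle^{-\mu}$ (the paper's bound \eqref{b_bound}), deduce that $T\langle\Phi_0\rangle^\gamma$ has bounded coefficients and is therefore bounded from $\dom(H_0)$ to $\H$, and obtain the second assertion by the same duality argument through $\langle\Phi_0\rangle^\gamma T^*(H-\bar z)^{-1}$ and the identification $\dom(H)=W^2(M)$. The only difference is cosmetic: where you control the second-order chart derivatives by hand (elliptic regularity on the compact $\Sigma$ plus a tensor-product estimate for the mixed derivatives), the paper simply invokes \cite[Lemma~1.6]{Sal01} together with the equality $W^2=\H^2$ on manifolds of bounded geometry, both for $\R\times\Sigma$ and for $M$.
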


\begin{proof}
We know that
$
T\langle\Phi_0\rangle^\gamma\varphi
=-\chi_\infty\;\!\iota^*\big(T^0\langle\Phi_0\rangle^\gamma\varphi\big)
$
for any $\varphi\in\S(\R)\odot C^\infty(\Sigma)$, where $T^0$ is the differential
operator within the square brackets in \eqref{defT0}. Furthermore, some routine
computations involving Assumption \ref{CondsurgV} and its consequences (i) and (ii)
imply that for each $\alpha\in\N^{n+1}$ there exists a constant
$\textsc d_\alpha\ge0$ such that
\begin{equation}\label{b_bound}
\textstyle\big\{\big|(\partial^\alpha b_1)(x,\omega)\big|
+\sum_{j,k=1}^{n+1}\big|[\partial^\alpha(b_2)_{jk}](x,\omega)\big|
+\big|(\partial^\alpha b_3)(x,\omega)\big|\big\}\langle x\rangle^\mu
\le\textsc d_\alpha
\end{equation}
for all $x>0$ and $\omega\in\Sigma$. Therefore, the operator
$T^0\langle\Phi_0\rangle^\gamma$ is a second order differential operator on
$\S(\R)\odot C^\infty(\Sigma)$ with coefficients in
$C_{\rm b}^\infty(\R\times\Sigma)$. So, it follows from \cite[Lemma~1.6]{Sal01} that
$T^0\langle\Phi_0\rangle^\gamma$ extends continuously to a bounded operator (denoted
similarly) from $W^2(\R\times\Sigma)$ to $W^0(\R\times\Sigma)\equiv\H_0$. Now, since
$\R\times\Sigma$ is geodesically complete and with bounded curvatures, one also has
\cite[Sec.~5]{Sal01} $W^2(\R\times\Sigma)=\H^2(\R\times\Sigma)\equiv\dom(H_0)$, and
thus $T^0\langle\Phi_0\rangle^\gamma$ extends to a bounded operator from $\dom(H_0)$
to $\H_0$. This result, together with the inclusion
$\chi_\infty\;\!\iota^*\in\B(\H_0,\H)$, implies the first statement.

For the second statement, we consider for $\psi\in(H-\bar z)\;\!C^\infty_{\rm c}(M)$
and $\varphi\in\S(\R)\odot C^\infty(\Sigma)$ the equality
$$
\big\langle\psi,(H-z)^{-1}T\langle\Phi_0\rangle^\gamma\varphi\big\rangle_\H
=\big\langle\langle\Phi_0\rangle^\gamma T^*(H-\bar z)^{-1}\psi,
\varphi\big\rangle_{\H_0}.
$$
Furthermore, for any $\zeta\in C^\infty_{\rm c}(M)$, we observe that
$
\langle\Phi_0\rangle^\gamma T^*\zeta
=\chi_{(0,\infty)\times\Sigma}(\iota^{-1})^*(L\zeta)
$,
where $L$ is a second order differential operator on $C^\infty_{\rm c}(M)$ with
coefficients in $C_{\rm b}^\infty(M)$. Now, we know from \cite[Lemma~1.6]{Sal01} and
the consequence (iii) of Assumption \ref{CondsurgV} that $L$ extends continuously to
a bounded operator from $W^2(M)\equiv\dom(H)$ to $\H$. Thus, the statement follows
from the density of $(H-\bar z)\;\!C^\infty_{\rm c}(M)$ in $\H$ and the density of
$\S(\R)\odot C^\infty(\Sigma)$ in $\H_0$.
\end{proof}

Let us finally note that the previous result implies in particular that
$J\in\B\big(\dom(H_0),\dom(H)\big)$.

\subsection{Conjugate operator for $\boldsymbol H$}\label{Sec_conjugate}

In this section, we define a conjugate operator for $H$ and use it to deduce some
standard results. The conjugate operator could be either defined as a geometric
object or as a modification of the generator of dilations on $\R$. We present the
former approach because self-adjointness is automatically obtained, but we link
afterward the two possible constructions.

So, let $X\in\X(M)$ be the smooth vector field defined by
$$
X:=\chi_\infty\;\!\iota^*\big(\j^2\;\!\id\otimes1\big)\;\!
(\iota^{-1})_*\left(\frac\partial{\partial x}\right).
$$
Given $p\in M$, it is known \cite[Sec.~2.1]{AM78} that there exist $\varepsilon>0$, a
neighbourhood $\V\subset M$ of $p$ and a smooth map
$F:(-\varepsilon,\varepsilon)\times\V\to M$ satisfying for each
$(\tau,q)\in(-\varepsilon,\varepsilon)\times \V$ the differential equation
$\frac\d{\d\tau}F(\tau,q)=X_{F(\tau,q)}$, $F(0,q)=q$. Furthermore, for each
$\tau\in(-\varepsilon,\varepsilon)$ the map $F_\tau:=F(\tau,\;\!\cdot\;\!)$ is a
diffeomorphism onto its image. In fact, one has $F(\tau,p)=p$ for all
$(\tau,p)\in\R\times M\setminus\Minf$ since $X\equiv0$ on $M\setminus\Minf$, and one
can show that the vector field $X$ is complete by applying the criterion
\cite[Prop.~2.1.20]{AM78} with the proper function $f:M\to\R$ given by
$f:=\chi_\infty\;\!\iota^*\big(\j^2\;\!\id\otimes1\big)$. So, the restricted map
$F_\tau|_\Minf:\Minf\to\Minf$ is a diffeomorphism for each $\tau\in\R$.

Based on the complete vector field $X$ one can construct a unitary group acting on
$\H$. However, $M$ being a priori not orientable, one has to take some extra care
when defining the group: Since the manifolds $\Sigma_\ell$ are orientable, it follows
that $\Minf\equiv\iota^{-1}\big((0,\infty)\times\Sigma\big)$ is also orientable. So,
$\dv$ is a volume form on $\Minf$ \cite[Thm.~7.7]{Boo86}, and there exists a unique
smooth function $\det_\dv(F_\tau|_\Minf):M\to\R$, called the determinant of
$F_\tau|_\Minf$, which satisfies
$\big(F_\tau|_\Minf\big)^*\dv=\det_\dv(F_\tau|_\Minf)\;\!\dv$
\cite[Def.~2.5.18]{AM78}. For each $\tau\in\R$ we can thus define the map
$$
J_\tau:M\to\R,\quad p\mapsto
\begin{cases}
1 & \hbox{if}~~p \in M\setminus\Minf\\
\det_\dv(F_\tau|_\Minf)(p) & \hbox{if}~~p\in\Minf.
\end{cases}
$$
Since $F_\tau|_{\Mc\cap\Minf}$ is the identity map, we have
$\det_\dv(F_\tau|_\Minf)=1$ on $\Mc\cap\Minf$ \cite[Prop.~2.5.20.(ii)]{AM78}, and
thus $J_\tau$ is a smooth function on $M$.

We can now define for each $\tau\in\R$ and each $\psi\in C_{\rm c}^\infty(M)$ the
operator
$$
U(\tau)\;\!\psi:=J_\tau^{1/2}F_\tau^*\psi.
$$
Some routine computations using \cite[Prop.~2.5.20]{AM78} show that $U(\tau)$ can be
extended to an isometry from $\H$ to $\H$ (which we denote by the same symbol), and
that $\{U(\tau)\}_{\tau\in\R}$ defines a strongly continuous unitary group in $\H$.
Furthermore, since $J_\tau(p)>0$ for all $p \in M$, one sees easily that
$U(\tau)C_{\rm c}^\infty(M)\subset C_{\rm c}^\infty(M)$. Thus, one can apply Nelson's
Lemma to show that the generator $A$ of the unitary group $\{U(\tau)\}_{\tau\in\R}$
is essentially self-adjoint on $C_{\rm c}^\infty(M)$. Direct computations with
$\psi\in C_{\rm c}^\infty(M)$ (see \cite[Sec.~5.4]{AM78}) show that
\begin{align}
A\psi&=-i\chi_\infty\big(\L_X+\12\;\!\div_\dv X\big)\psi,\label{A_geo}\\
\L_X\psi&=\textstyle\chi_\infty\;\!\iota^* \big\{(\j^2\;\!\id\otimes 1)
\frac\partial{\partial x}(\iota^{-1})^*\psi\big\},\nonumber\\
\div_\dv X&=\frak g^{-1}\L_X\frak g
+\iota^*\big\{\big((\j^2)'\;\!\id+\j^2\big)\otimes1\big\},\nonumber
\end{align}
with $\L_X$ the Lie derivative along $X$ and $\div_\dv X$ the divergence of $X$ with
respect to the volume form $\dv$ of $\Minf$. Note that the function
$\chi_\infty\;\!\div_\dv X$ belongs to $C^\infty_{\rm b}(M)$ under Assumption
\ref{CondsurgV} with $\muL\ge0$ and $\muS\ge1$.

\begin{Remark}\label{SurA}
Let $A_0$ be the generator of dilations in $\H_0$, that is, the operator given by
$A_0:=\12(PQ+QP)\otimes1$. Then a direct calculation
shows that
\begin{equation*}
A\psi=JA_0J^*\psi
\end{equation*}
for any $\psi\in C_{\rm c}^\infty(M)$. Therefore, the operator $A$ is nothing else
but the generator of dilations $A_0$ injected in the Hilbert space $\H$ via the
identification operator $J$.
\end{Remark}

We can now study the regularity of the operators $H_0$ and $H$ with respect to the
operators $A_0$ and $A$. For this we mainly use the framework and notations from
\cite{ABG}. In particular, we say that the self-adjoint operator $H$ is of class
$C^k(A)$, $k\in\N$, if the map
\begin{equation*}
\R\ni t\mapsto\e^{-itA}(H-i)^{-1}\e^{itA}\in\B(\H)
\end{equation*}
is $k$-times strongly differentiable. In the case of a bounded operator $B\in\B(\H)$,
this is equivalent to showing that the map $ t\mapsto\e^{-itA}B\e^{itA}$ is $k$-times
strongly differentiable, and we write $B\in C^k(A)$. The same definitions hold with
$\H,H,A$ replaced by $\H_0,H_0,A_0$. Due to its simplicity and its tensorial
structure, it is easily shown that $H_0$ is of class $C^k(A_0)$, with $A_0$ defined
in Remark \ref{SurA}, for any $k\in\N$. In the next lemma, whose proof is inspired
from \cite[Sec.~2.1]{Bou06} and \cite[Lemma A.2]{GM08}, we show that $H$ is of class $C^1(A)$
(higher regularity of $H$ with respect to $A$ will be considered in Section
\ref{SecHighR}). As mentioned in the Appendix A of \cite{GM08}, checking the
$C^1(A)$-condition is sometimes omitted in the Mourre analysis on a manifold, and
without this condition the application of the Virial Theorem is erroneous.

\begin{Lemma}\label{regul}
Suppose that Assumption \ref{CondsurgV} holds with $\muL\ge0$ and $\muS\ge1$. Then
$H$ is of class $C^1(A)$.
\end{Lemma}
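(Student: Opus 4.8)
The plan is to verify the $C^1(A)$ condition for $H$ by reducing it, via the second resolvent identity, to an estimate on the commutator form $[H,iA]$ evaluated on the core $C^\infty_{\rm c}(M)$, and then checking that this form is dominated by $H$ in the appropriate sense. More precisely, recall that a self-adjoint operator $H$ is of class $C^1(A)$ if and only if the sesquilinear form $[H,iA]$, defined a priori on $C^\infty_{\rm c}(M)\subset\dom(H)\cap\dom(A)$, extends to a bounded operator from $\dom(H)$ to $\dom(H)^*$; equivalently, there is $c\ge0$ with $|\langle H\psi,iA\psi\rangle-\langle iA\psi,H\psi\rangle|\le c\,\|\psi\|_{\dom(H)}^2$ for all $\psi$ in the core, together with the stability condition that $\e^{-itA}$ preserves $\dom(H)$ (or, weaker and sufficient here, that the resolvent maps into $\dom(A)$ suitably). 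Since $U(\tau)=\e^{-i\tau A}$ maps $C^\infty_{\rm c}(M)$ into itself and we already know $\dom(H)=\H^2(M)$, the stability issue is mild; the heart of the matter is the commutator estimate.

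First I would compute $[H,iA]$ as a differential operator on $C^\infty_{\rm c}(M)$. Using the geometric formula \eqref{A_geo} for $A$ (namely $A=-i\chi_\infty(\L_X+\tfrac12\div_\dv X)$) and the explicit local action of $H=\Delta_M+V$, a direct calculation along the lines of \cite[Sec.~2.1]{Bou06} and \cite[Lemma~A.2]{GM08} shows that $[H,iA]$ is again a second order differential operator supported in $M_\infty$, whose coefficients I would express through $\widetilde g_{jk}$, $\widetilde V$ and the function $\j$ and their derivatives. The key point is that on the product end, $[H_0,iA_0]$ is (essentially) $2P^2\otimes1$ plus lower order pieces coming from $\j$, which is bounded relative to $H_0$; the genuine manifold contributes extra terms built from $b_2=\widetilde{\frak g}\,\widetilde g^{-1}-(1\otimes\frak h)(1\oplus h^{-1})$ and the density factor $\widetilde{\frak g}^{-1/2}(1\otimes\frak h)^{1/2}$, all of which are, by Assumption \ref{CondsurgV} and its consequences (i)–(ii), $C^\infty_{\rm b}$ with coefficients decaying like $\langle x\rangle^{-\mu}$ (the requirement $\muS\ge1$ is exactly what makes the first-order piece $\chi_\infty\div_\dv X$ bounded, as noted right after \eqref{A_geo}). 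Then I would invoke \cite[Lemma~1.6]{Sal01} together with consequence (iii) (so that $W^{2}(M)=\H^2(M)=\dom(H)$) to conclude that this differential operator extends to a bounded map $\dom(H)\to\H\subset\dom(H)^*$, which gives the required form bound.

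Concretely, the cleanest route is probably to work through the identification operator $J$ and Remark \ref{SurA}, writing $A=JA_0J^*$, and to recycle the bookkeeping already carried out for $T=HJ-JH_0$ in \eqref{defT0}: the commutator $[H,iA]$ on $C^\infty_{\rm c}(M)$ can be rearranged as $J[H_0,iA_0]J^*$ plus correction terms involving $T$, $T^*$, $J$ and $A_0$, and Lemma \ref{preuvepourRafa} (with $\gamma=0$) already tells us that $T$ and $(H-z)^{-1}T$ are bounded on the relevant spaces. I would then combine this with the elementary fact that $H_0\in C^1(A_0)$ (stated in the text) to transfer the regularity. The main obstacle I anticipate is purely technical rather than conceptual: carefully justifying the manipulations of unbounded operators on the core — in particular that the formal identity $[H,iA]\psi=$ (second order operator)$\,\psi$ genuinely holds as a form identity on all of $\dom(H)$, not merely on $C^\infty_{\rm c}(M)$ — which requires a density/limiting argument using that $C^\infty_{\rm c}(M)$ is a core for both $H$ and $A$ and that the correction terms are continuous for the $\dom(H)$-topology. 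Once that density argument is in place, the $C^1(A)$ conclusion follows from the standard characterization in \cite{ABG}.
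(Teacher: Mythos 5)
Your computational core is on target: on $C_{\rm c}^\infty(M)$ one indeed has $[H,A]=L$ with $L$ a second order differential operator with coefficients in $C^\infty_{\rm b}(M)$ (the condition $\muS\ge1$ entering through $\chi_\infty\;\!\div_\dv X$ and the bound \eqref{b_bound}), and $L$ extends to an element of $\B\big(\dom(H),\H\big)$ by \cite[Lemma~1.6]{Sal01}; this is exactly the estimate the paper uses. The genuine gap is in the step you yourself call ``purely technical'': the passage from this bound on $C_{\rm c}^\infty(M)$ to the $C^1(A)$ property. The criterion you invoke --- a form bound for $[H,iA]$ on the core together with a ``mild'' stability condition because $\e^{-i\tau A}$ preserves $C_{\rm c}^\infty(M)$ --- is not a valid characterization of $C^1(A)$. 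The correct characterization in \cite{ABG} requires the bound on all of $\dom(H)\cap\dom(A)$ together with a domain condition (invariance of $\dom(A)$ under the resolvent of $H$, or invariance of $\dom(H)$ under the group $\e^{itA}$); a subspace that is a core for $H$ and for $A$ separately need not be dense in $\dom(H)\cap\dom(A)$ for the intersection topology, and the counterexample of \cite{GG99} shows that a well-behaved commutator on a common invariant core can coexist with failure of $C^1(A)$ and of the virial theorem --- precisely the pitfall the paper warns about, following \cite[App.~A]{GM08}. Nor can you appeal to invariance of $\dom(H)=\H^2(M)$ under $\e^{itA}$: preservation of $C_{\rm c}^\infty(M)$ does not yield it, and in the paper this invariance (Lemma \ref{invariance}) is deduced \emph{from} Lemma \ref{regul} via \cite[Lemma~2]{GG99}, so invoking it here would be circular unless you prove it independently.

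What the paper actually does, and what your sketch is missing, is a direct verification of the definition \cite[Def.~6.2.2]{ABG} in terms of the resolvent: for $\psi\in C_{\rm c}^\infty(M)$ one evaluates the form $\langle(H-i)^{-1}\psi,A\psi\rangle_\H-\langle A\psi,(H+i)^{-1}\psi\rangle_\H$ by inserting smooth cutoffs $\chi_n$, using elliptic regularity (Lemma \ref{ellipticite}) to guarantee $\chi_n(H+i)^{-1}C_{\rm c}^\infty(M)\subset C_{\rm c}^\infty(M)$ so that all manipulations are legitimate, and checking that the error term $A[H,\chi_n](H+i)^{-1}\psi$ vanishes as $n\to\infty$; this yields the identity of the form with the bounded operator $-(H+i)^{-1}L(H+i)^{-1}$ on $C_{\rm c}^\infty(M)$. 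Since this resolvent form is continuous in the graph norm of $A$ and $C_{\rm c}^\infty(M)$ is a core for $A$, the identity extends to all $\psi\in\dom(A)$, which is exactly the boundedness required by the definition --- no density in $\dom(H)\cap\dom(A)$ and no domain invariance is needed. Your proposal becomes a proof once the ``standard characterization plus density argument'' step is replaced by such a regularization argument (or by an independent proof that $\e^{itA}\H^2(M)\subset\H^2(M)$).
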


\begin{proof}
Consider the family of multiplication operators $\chi_n\in\B(\H)$ defined as follows:
Let $\eta\in C^\infty(\R;\R)$ satisfy $\eta(x)=1$ if $x\le1$ and $\eta(x)=0$ if
$x\ge2$, and for any $n\in\N^*$ let $\chi_n\in C^\infty_{\rm c}(M;\R)$ be given by
$\chi_n=1$ on $M\setminus M_\infty$ and $[(\iota^{-1})^*\chi_n](x,\omega):=\eta(x/n)$
for $(x,\omega)\in(0,\infty)\times\Sigma$.

Then, one has $\slim_{n\to\infty}\chi_n=1$, and a direct calculation taking Remark
\ref{SurA} into account shows that $\lim_{n\to\infty}A\chi_n\psi=A\psi$ for each
$\psi\in C_{\rm c}^\infty(M)$. Furthermore, Lemma \ref{ellipticite} implies that
$\chi_n(H+i)^{-1}C_{\rm c}^\infty(M)\subset C_{\rm c}^\infty(M)$, and lengthy but
standard computations involving the identity \eqref{magic} show that
$\lim_{n\to\infty}A[H,\chi_n](H+i)^{-1}\psi=0$ for each
$\psi\in C_{\rm c}^\infty(M)$. Using these facts, one obtains that
\begin{align*}
\big\langle(H-i)^{-1}\psi,A\psi\big\rangle_\H
-\big\langle A\psi,(H+i)^{-1}\psi\big\rangle_\H
&=\lim_{n\to\infty}\big\langle\psi,\big[(H+i)^{-1},A\chi_n\big]
\psi\big\rangle_\H\\
&=\lim_{n\to\infty}\big\langle\psi,-(H+i)^{-1}[H,A]
\chi_n(H+i)^{-1}\psi\big\rangle_\H.
\end{align*}
Now, a routine computation taking into account Formula \eqref{A_geo}, Assumption
\ref{CondsurgV} with $\muL\ge0$ and $\muS\ge 1$, and the bound \eqref{b_bound} shows
that there exists a second order differential operator $L$ with coefficients in
$C^\infty_{\rm b}(M)$ such that $[H,A]=L$ on $C_{\rm c}^\infty(M)$. Since $L$ extends
continuously to a bounded operator from $W^2(M)\equiv\dom(H)$ to $\H$ due to
\cite[Lemma~1.6]{Sal01}, one obtains that
\begin{equation}\label{udon}
\big\langle(H-i)^{-1}\psi,A\psi\big\rangle_\H
-\big\langle A\psi,(H+i)^{-1}\psi\big\rangle_\H
=\big\langle\psi,-(H+i)^{-1}L(H+i)^{-1}\psi\big\rangle_\H.
\end{equation}
But, the set $C^\infty_{\rm c}(M)$ is a core for $A$, thus \eqref{udon} even holds
for $\psi\in\dom(A)$. So, the quadratic form
$
\dom(A)\ni\psi\mapsto\big\langle(H-i)^{-1}\psi,A\psi\big\rangle_\H
-\big\langle A\psi,(H+i)^{-1}\psi\big\rangle_\H
$
extends uniquely to the bounded form defined by the operator
$-(H+i)^{-1}L(H+i)^{-1}\in\B(\H)$, and thus $H$ is of class $C^1(A)$ (see
\cite[Def.~6.2.2]{ABG}).
\end{proof}

Lemma \ref{regul} implies in particular that $\Delta_M$ is of class $C^1(A)$, since
the potential $V=0$ satisfies Assumption \ref{CondsurgV} for any $\muL,\muS\ge0$. To
close the section, we show that the group $\{\e^{itA}\}_{t\in\R}$ leaves the domain
$\dom(H)\equiv\dom(\Delta_M)$ invariant:

\begin{Lemma}\label{invariance}
Suppose that Assumption \ref{CondsurgV} holds with $\muL\ge0$ and $\muS\ge 1$. Then
$\e^{itA}\dom(H)\subset\dom(H)$ for all $t\in\R$.
\end{Lemma}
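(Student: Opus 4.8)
The plan is to obtain the invariance as an abstract consequence of the $C^1(A)$-regularity established in Lemma \ref{regul}, exploiting the precise form of the commutator found in its proof. Recall that there, under Assumption \ref{CondsurgV} with $\muL\ge0$ and $\muS\ge1$, the commutator $[H,A]$ was shown to coincide on $C^\infty_{\rm c}(M)$ with a second order differential operator $L$ whose coefficients lie in $C^\infty_{\rm b}(M)$ --- the hypothesis $\muS\ge1$ entering precisely to guarantee that $\chi_\infty\;\!\div_\dv X\in C^\infty_{\rm b}(M)$ in Formula \eqref{A_geo} --- and that \cite[Lemma~1.6]{Sal01} then realizes $L$ as a bounded operator from $W^2(M)\equiv\dom(H)$ to $\H$. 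Equivalently, the sesquilinear form $[H,iA]$ on $\dom(H)\cap\dom(A)\supset C^\infty_{\rm c}(M)$ extends to an element of $\B\big(\dom(H),\H\big)$, where $\dom(H)$ carries the graph norm. Granting this, the statement follows from a standard abstract result of commutator theory: if a self-adjoint operator $H$ is of class $C^1(A)$, if $\dom(H)\cap\dom(A)$ contains a core for $H$, and if $[H,iA]$ extends to an element of $\B\big(\dom(H),\H\big)$, then the group $\{\e^{itA}\}_{t\in\R}$ leaves $\dom(H)$ invariant (and $H$ is then of class $C^1(A)$ relative to the pair $(\dom(H),\dom(H)^*)$). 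I would simply invoke this fact from \cite{ABG} (see also \cite{GM08} for its use in a manifold context); all its hypotheses hold since $C^\infty_{\rm c}(M)$ is a core for $H$ and is contained in $\dom(A)$.

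Should a more self-contained argument be wanted, one can reproduce the mechanism behind this fact, which is particularly transparent here because $\{\e^{itA}\}$ already maps $C^\infty_{\rm c}(M)$ into itself (see the construction of the group $U(\tau)$ above). For $\psi\in C^\infty_{\rm c}(M)$ set $\psi_t:=\e^{itA}\psi\in C^\infty_{\rm c}(M)$; using the self-adjointness of $A$, that $H$ maps $C^\infty_{\rm c}(M)$ into itself, and $[H,A]=L$, one obtains a differential inequality $\tfrac{\d}{\d t}\|H\psi_t\|_\H^2\le c\;\!\|H\psi_t\|_\H^2+c\;\!\|\psi\|_\H^2$ with $c$ independent of $\psi$ and $t$, and hence $\|H\e^{itA}\psi\|_\H\le g(t)\;\!\|\psi\|_{\dom(H)}$ with $g$ locally bounded, by Gronwall's lemma. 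Since $C^\infty_{\rm c}(M)$ is a core for $H$, approximating an arbitrary $\psi\in\dom(H)$ in the graph norm and invoking the closedness of $H$ then yields $\e^{itA}\psi\in\dom(H)$. (A Yosida-type regularization of $A$, or of $H$, should be inserted to differentiate $t\mapsto\|H\psi_t\|_\H^2$ rigorously without presupposing the conclusion; this is routine.)

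The only step that genuinely requires work is thus the verification that $[H,iA]$ maps $\dom(H)$ into $\H$, and not merely into the larger space $\dom(H)^*$ --- that is, that the commutator is $H$-bounded rather than just form-bounded. This, however, was already carried out in the proof of Lemma \ref{regul}, through the explicit structure of $L$ and the elliptic bound \cite[Lemma~1.6]{Sal01}; everything else is the abstract machinery recalled above.
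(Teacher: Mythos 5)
Your argument is correct and follows essentially the same route as the paper: you identify the commutator form $[H,A]$ with the second order operator $L\in\B\big(\dom(H),\H\big)$ already produced in the proof of Lemma \ref{regul}, and then invoke the abstract invariance result for operators of class $C^1(A)$ whose commutator is $H$-bounded. The only difference is bibliographical/cosmetic: the paper performs the identification of the form extension with $L$ via \cite[Eq.~6.2.24]{ABG} and cites \cite[Lemma~2]{GG99} for the domain invariance, whereas you attribute the abstract fact to \cite{ABG} and additionally sketch a correct Gronwall-type proof of it using the invariance of $C^\infty_{\rm c}(M)$ under $\e^{itA}$.
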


\begin{proof}
As mentioned in the previous proof there exists a second order differential operator
$L\in\B\big(\dom(H),\H\big)$ such that $[H,A]=L$ on $C_{\rm c}^\infty(M)$.
So, Lemma \ref{regul} together with \cite[Eq.~6.2.24]{ABG}
imply, in the form sense on $\H$, that
$$
(H+i)^{-1}[H,A](H+i)^{-1}=(H+i)^{-1}L(H+i)^{-1},
$$
where $[H,A]\in\B\big(\dom(H),\dom(H)^*\big)$ is the operator
associated with the unique extension to $\dom(H)$ of the quadratic form
$
\dom(H)\cap\dom(A)\ni\psi\mapsto
\langle H\psi,A\psi\rangle_\H
-\langle A\psi,H\psi\rangle_\H
$.
Therefore, $L$ and $[H,A]$ are equals in $\B\big(\dom(H),\dom(H)^*\big)$, and
$[H,A]\;\!\dom(H)=L\;\!\dom(H)\subset\H$. The claim then follows from Lemma
\ref{regul} and \cite[Lemma~2]{GG99}.
\end{proof}

\subsection{Mourre estimate}

In reference \cite{RT12}, an abstract method giving a Mourre estimate for $H$ from a
Mourre estimate for the pair $(H_0,A_0)$ has been developed. The verification of the
assumptions necessary to apply this method is the content of the next lemmas. Here,
$C_0(\R)$ denotes the set of continuous functions on $\R$ vanishing at $\pm\infty$.

\begin{Lemma}\label{compacite1}
Suppose that Assumption \ref{CondsurgV} holds with $\mu>0$, and let
$\eta\in C_0(\R)$. Then the difference $J\eta(H_0)-\eta(H)J$ belongs to
$\K(\H_0,\H)$.
\end{Lemma}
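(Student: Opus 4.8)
The plan is to reduce the claim to the already-established mapping properties of the operator $T=HJ-JH_0$ together with the compactness criterion of Lemma \ref{lemmecompact}. The starting point is the standard resolvent identity: for $z\in\C\setminus\R$ one has, on suitable vectors,
\[
J(H_0-z)^{-1}-(H-z)^{-1}J=(H-z)^{-1}\big(JH_0-HJ\big)(H_0-z)^{-1}=-(H-z)^{-1}T(H_0-z)^{-1}.
\]
By Lemma \ref{preuvepourRafa} (applied with $\gamma=0$), the operator $(H-z)^{-1}T$ extends to an element of $\B(\H_0,\H)$, while $(H_0-z)^{-1}\in\B(\H_0)$; hence $J(H_0-z)^{-1}-(H-z)^{-1}J\in\B(\H_0,\H)$ and the identity holds on all of $\H_0$. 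To get compactness rather than mere boundedness, I would exploit that the coefficients of the differential operator $T$ (through the functions $b_1,b_2,b_3$ and $\widetilde V$) decay like $\langle x\rangle^{-\mu}$ with $\mu>0$ at infinity, by the bound \eqref{b_bound} and Assumption \ref{CondsurgV}. Concretely, writing $T=\big(T\langle\Phi_0\rangle^{\mu'}\big)\langle\Phi_0\rangle^{-\mu'}$ for some $0<\mu'\le\mu$, the factor $T\langle\Phi_0\rangle^{\mu'}$ is in $\B(\dom(H_0),\H)$ by Lemma \ref{preuvepourRafa}, the factor $(H-z)^{-1}T\langle\Phi_0\rangle^{\mu'}$ is even in $\B(\H_0,\H)$, and $\langle\Phi_0\rangle^{-\mu'}(H_0-z)^{-1}$ is compact on $\H_0$ — the latter because $\langle\Phi_0\rangle^{-\mu'}$ is a decaying multiplication operator on $\R\times\Sigma$ and $\Delta_\Sigma$ has compact resolvent, so one is in exactly the situation handled by (the $\H_0$-analogue of) Lemma \ref{lemmecompact}. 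Therefore $J(H_0-z)^{-1}-(H-z)^{-1}J\in\K(\H_0,\H)$ for every $z\in\C\setminus\R$.

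Next I would propagate this from resolvents to general functions $\eta\in C_0(\R)$. Since finite linear combinations of the functions $\lambda\mapsto(\lambda-z)^{-1}$, $z\in\C\setminus\R$, are dense in $C_0(\R)$ for the sup-norm (Stone–Weierstrass, or a direct approximation), and since for such a combination $\eta_0$ the difference $J\eta_0(H_0)-\eta_0(H)J$ is a corresponding finite combination of the compact operators treated above, we get $J\eta_0(H_0)-\eta_0(H)J\in\K(\H_0,\H)$. For general $\eta\in C_0(\R)$ choose $\eta_k\to\eta$ uniformly of the above form; then
\[
\big\|\big(J\eta(H_0)-\eta(H)J\big)-\big(J\eta_k(H_0)-\eta_k(H)J\big)\big\|_{\B(\H_0,\H)}
\le\|J\|\,\|\eta-\eta_k\|_\infty+\|\eta-\eta_k\|_\infty\longrightarrow0,
\]
using $\|\eta(H_0)-\eta_k(H_0)\|\le\|\eta-\eta_k\|_\infty$ and likewise for $H$. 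Since $\K(\H_0,\H)$ is norm-closed, the limit $J\eta(H_0)-\eta(H)J$ is compact, which is the assertion.

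The main obstacle is the compactness step for $\langle\Phi_0\rangle^{-\mu'}(H_0-z)^{-1}$ on $\H_0$, i.e.\ making precise that a decaying multiplier composed with the free resolvent is compact in the cylindrical geometry: one must use both the decay in the $\R$-variable \emph{and} the discreteness of $\sigma(\Delta_\Sigma)$ (mere decay in $x$ is not enough, since $\R\times\Sigma$ is noncompact in the $\Sigma$-directions only trivially but the relevant point is that there is no escape to infinity transversally). This is the exact analogue in $\H_0$ of Lemma \ref{lemmecompact}, and I would either invoke that lemma's proof verbatim with $M$ replaced by $\R\times\Sigma$ and $\iota=\mathrm{id}$, or argue directly: localize $\langle\Phi_0\rangle^{-\mu'}$ to $|x|\le R$ where it is compact times a bounded resolvent (elliptic regularity plus Rellich on the relatively compact piece $[-R,R]\times\Sigma$), and control the tail $|x|>R$ by $\sup_{|x|>R}\langle x\rangle^{-\mu'}\to0$, using norm-closedness of $\K(\H_0)$. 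Everything else — the resolvent identity, Lemma \ref{preuvepourRafa}, the Stone–Weierstrass approximation — is routine; the geometric compactness input is where the hypothesis $\mu>0$ (as opposed to $\mu\ge0$) is genuinely used.
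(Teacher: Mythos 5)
Your proposal is correct and follows essentially the same route as the paper: the resolvent identity combined with the factorization $T=\big(T\langle\Phi_0\rangle^{\mu}\big)\langle\Phi_0\rangle^{-\mu}$, boundedness of $(H-z)^{-1}T\langle\Phi_0\rangle^{\mu}$ from Lemma \ref{preuvepourRafa}, compactness of $\langle\Phi_0\rangle^{-\mu}(H_0-z)^{-1}$ (which the paper obtains by mimicking \cite[Lemma~2.1]{KR04} rather than your direct Rellich-plus-tail argument, but either works thanks to the discreteness of $\sigma(\Delta_\Sigma)$), and then density of spans of resolvent functions in $C_0(\R)$ together with norm-closedness of $\K(\H_0,\H)$. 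The only blemish is a harmless sign slip: since $T=HJ-JH_0$, one has $J(H_0-z)^{-1}-(H-z)^{-1}J=+(H-z)^{-1}T(H_0-z)^{-1}$, which of course does not affect compactness.
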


\begin{proof}
Let $z\in\C\setminus\R$. We know from Lemma \ref{preuvepourRafa} that
$(H-z)^{-1}T\langle\Phi_0\rangle^\mu$, defined on $\S(\R)\odot C^\infty(\Sigma)$,
extends continuously to an operator $C(z)\in\B(\H_0,\H)$. Furthermore, one can show
by mimicking the proof of \cite[Lemma~2.1]{KR04} that
$K(z):=\langle\Phi_0\rangle^{-\mu}(H_0-z)^{-1}$ belongs to $\K(\H_0)$. So, one has on
$(H_0-z)\big(\S(\R)\odot C^\infty(\Sigma)\big)$ the equalities
$$
J(H_0-z)^{-1}-(H-z)^{-1}J
=(H-z)^{-1}T\langle\Phi_0\rangle^\mu\langle\Phi_0\rangle^{-\mu}(H_0-z)^{-1}
=C(z)K(z),
$$
and by the density of $(H_0-z)\big(\S(\R)\odot C^\infty(\Sigma)\big)$ in $\H_0$ these
equalities extend continuously to $\H_0$. One concludes by taking into account the
fact that the vector space generated by the family of functions
$\{(\;\cdot\;-z)^{-1}\}_{z\in\C\setminus\R}$ is dense in $C_0(\R)$ and that the set
$\K(\H_0,\H)$ is closed in $\B(\H_0,\H)$.
\end{proof}

Suppose that Assumption \ref{CondsurgV} holds with $\muL\ge0$ and $\muS\ge1$, and let
$\eta\in C^\infty_{\rm c}(\R)$. Then, we deduce from Lemma \ref{regul} and
\cite[Thm.~6.2.5]{ABG} that $\eta(H)\in C^1(A)$. Therefore, the quadratic form
$
\dom(A)\ni\psi\mapsto\langle A\psi,\eta(H)\psi\rangle_\H
-\langle\bar\eta(H)\psi,A\psi\rangle_\H
$
extends uniquely to a bounded form on $\H$, with corresponding bounded operator
denoted by $[A,\eta(H)]$. Since, the same holds for the pair $(H_0,A_0)$ in $\H_0$,
one can define similarly the operator $[A_0,\eta(H_0)]\in\B(\H_0)$.

The next lemma shows that these two commutators do not differ too much, even though
they live in different Hilbert spaces:

\begin{Lemma}\label{C2}
Suppose that Assumption \ref{CondsurgV} holds with $\muL>0$ and $\muS>1$, and let
$\eta\in C^\infty_{\rm c}(\R)$. Then, the difference of bounded operators
$J[A_0,\eta(H_0)]J^*-[A,\eta(H)]$ belongs to $\K(\H)$.
\end{Lemma}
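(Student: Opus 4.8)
The plan is to reduce everything to the Helffer--Sj\"ostrand formula and then to the already-established near-commutation of resolvents. First I would fix an almost-analytic extension $\widetilde\eta\in C_{\rm c}^\infty(\C)$ of $\eta$ and write, for the operator in $\H$,
\[
\eta(H)=\frac1\pi\int_\C\d^2z\,\frac{\partial\widetilde\eta}{\partial\bar z}(z)\,(H-z)^{-1},
\]
and similarly $\eta(H_0)=\frac1\pi\int_\C\d^2z\,(\partial_{\bar z}\widetilde\eta)(z)\,(H_0-z)^{-1}$ in $\H_0$. Since $H$ and $H_0$ are of class $C^1(A)$ and $C^1(A_0)$ respectively, the commutators $[A,(H-z)^{-1}]=-(H-z)^{-1}[H,A](H-z)^{-1}$ and $[A_0,(H_0-z)^{-1}]=-(H_0-z)^{-1}[H_0,A_0](H_0-z)^{-1}$ make sense as bounded operators with the usual $|\im z|^{-2}$ bound, so differentiating under the integral sign is legitimate and gives a Helffer--Sj\"ostrand representation for each of $[A,\eta(H)]$ and $[A_0,\eta(H_0)]$.

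Next I would insert these representations into $J[A_0,\eta(H_0)]J^*-[A,\eta(H)]$ and organise the difference of the two integrands at fixed $z$ into a telescoping sum. Writing $R_0(z):=(H_0-z)^{-1}$, $R(z):=(H-z)^{-1}$, $[H_0,A_0]=:L_0$ (a differential operator of order two with coefficients in $C_{\rm b}^\infty(\R\times\Sigma)$, by the tensorial structure of $H_0$ and $A_0$) and $[H,A]=:L\in\B(\dom H,\H)$ (the operator produced in Lemma~\ref{regul}), the integrand difference is
\[
-\,J R_0(z)L_0 R_0(z)J^*+R(z)L R(z).
\]
I would insert $JJ^*$ cleverly and commute $J$ past resolvents using the intertwining relation $JR_0(z)-R(z)J=R(z)T R_0(z)$ that underlies Lemma~\ref{preuvepourRafa} and Lemma~\ref{compacite1}. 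Each term generated in the telescoping either (i) contains a factor $J\eta(H_0)-\eta(H)J$ or $J\eta'(H_0)-\eta'(H)J$ after re-integrating in $z$, which is compact by Lemma~\ref{compacite1} since $\mu>0$; or (ii) contains a factor $R(z)T$ (equivalently $T^*R(\bar z)$), which, thanks to Lemma~\ref{preuvepourRafa} with $\gamma=1$ (available since $\mu>1$ here), can be written as $C(z)\langle\Phi_0\rangle^{-1}$ with $C(z)$ bounded, and $\langle\Phi_0\rangle^{-1}R_0(z)$ or $\langle\Phi_0\rangle^{-1}\eta^{(k)}(H_0)$ is compact in $\H_0$ by the argument already used in Lemma~\ref{compacite1} (mimicking \cite[Lemma~2.1]{KR04}); or (iii) contains the difference $J L_0 J^*-L$ sandwiched between resolvents, and this difference of second-order differential operators, after the substitution $J=\chi_\infty\sqrt{\iota^*(1\otimes\frak h)/\frak g}\,\iota^*(\j\otimes1)$ and using Assumption~\ref{CondsurgV} with $\muL>0$, $\muS>1$, has coefficients decaying like $\langle x\rangle^{-\mu}$, hence factorises through $\langle\Phi_0\rangle^{-\mu}$ and is again compact between resolvents. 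Summing, the $z$-integral converges in norm to a compact operator, so $J[A_0,\eta(H_0)]J^*-[A,\eta(H)]\in\K(\H)$.

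The main obstacle is step (iii): producing an honest expression for $JL_0J^*-L$ (or, more precisely, for $J[H_0,A_0]J^*-[H,A]J J^*$ together with the error terms coming from $JJ^*\ne 1$) and reading off that each coefficient carries a decaying factor. This is the same kind of bookkeeping as in the derivation of \eqref{defT0} for $T=HJ-JH_0$: one writes $[H,A]$ via \eqref{A_geo} and the local form of $H$, writes $[H_0,A_0]$ via the tensor structure, pulls both back by $\iota$, and checks that the difference of coefficient matrices is controlled by $b_1,b_2,b_3$ and their derivatives, hence by \eqref{b_bound}. It is routine but lengthy; the geometric content is entirely contained in Lemma~\ref{preuvepourRafa} and Lemma~\ref{compacite1}, so no new analytic input is needed beyond careful algebra and the boundedness results of \cite[Lemma~1.6]{Sal01}. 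I would present only the telescoping identity in detail and relegate the coefficient estimates to "routine computations as in \eqref{defT0}".
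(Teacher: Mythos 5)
Your argument is essentially sound, but it follows a genuinely different route from the paper. The paper does not redo any resolvent calculus: it invokes the abstract two-Hilbert-space result \cite[Prop.~3.12]{RT12} and merely verifies its hypotheses, namely (a) $H_0\in C^1(A_0)$ and $H\in C^1(A)$ (Lemma \ref{regul}), (b) $J$ extends to an element of $\B\big(\dom(H_0)^*,\dom(H)^*\big)$ (checked via Lemma \ref{preuvepourRafa}), (c) $J(H_0-z)^{-1}-(H-z)^{-1}J\in\K(\H_0,\H)$ (already obtained in the proof of Lemma \ref{compacite1}), and (d) $J[H_0,A_0]J^*-[H,A]\in\K\big(\dom(H),\dom(H)^*\big)$, which is established by writing this difference as $mL$ with $L$ a second-order operator with bounded coefficients supported in $M_\infty$ and $m\in C_{\rm b}(M)$ vanishing at infinity, and then applying Lemma \ref{lemmecompact}. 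Your Helffer--Sj\"ostrand telescoping is in effect a hands-on reproof of that abstract proposition in the present setting; the concrete inputs coincide (the intertwining relation behind Lemma \ref{compacite1}, Lemma \ref{preuvepourRafa}, and the factorisation of $J[H_0,A_0]J^*-[H,A]$, which you correctly single out as the only real work). What the paper's route buys is that the $C^1$ functional calculus and the telescoping combinatorics are done once and for all in \cite{RT12}; what your route buys is self-containedness, and you never need the duality hypothesis (b) explicitly since in your scheme $J$ only ever appears next to resolvents.

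Two corrections to details. First, under the hypotheses of this lemma one only has $\mu=\min\{\muL,\muS\}>0$, not $\mu>1$, so Lemma \ref{preuvepourRafa} with $\gamma=1$ is not available; take $\gamma=\mu$ and factor through $\langle\Phi_0\rangle^{-\mu}$, exactly as in the proof of Lemma \ref{compacite1} --- the compactness argument is unchanged. Second, the coefficients of $J[H_0,A_0]J^*-[H,A]$ do not in general decay like $\langle x\rangle^{-\mu}$: the short-range terms pick up a factor $x$ from the dilation vector field and thus decay only like $\langle x\rangle^{1-\muS}$ (the long-range ones like $\langle x\rangle^{-\muL}$, since their derivatives gain a power). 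This is still $o(1)$ at infinity, which is all that compactness requires; this is precisely why the paper phrases the factorisation with an $m\in C_{\rm b}(M)$ merely vanishing at infinity and concludes via Lemma \ref{lemmecompact}, rather than through a weight $\langle\Phi_0\rangle^{-\mu}$.
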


\begin{proof}
We apply \cite[Prop.~3.12]{RT12}, which shows in an abstract framework how the
inclusion $J[A_0,\eta(H_0)]J^*-[A,\eta(H)]\in \K(\H)$ follows from a certain set of
hypotheses. Therefore, we simply check the hypotheses in question.

First, we know that $H_0$ is of class $C^1(A_0)$ with
$[H_0,A_0]=-2iP^2\otimes1\in\B\big(\dom(H_0),\H_0\big)$ and that $H$ of class
$C^1(A)$ due to Lemma \ref{regul}. Next, one has to show that the operator $J$
extends to an element of $\B\big(\dom(H_0)^*,\dom(H)^*\big)$. For this, let
$\D:=\{\varphi\in\S(\R)\odot C^\infty(\Sigma)\mid\|\varphi\|_{\H_0}=1\}$ and observe
that
\begin{align*}
\|J\|_{\B(\dom(H_0)^*,\dom(H)^*)}
&\le{\rm Const.}\;\!\big\|\langle H\rangle^{-1}J(H_0+1)\big\|_{\B(\H_0,\H)}\\
&\le{\rm Const.}\;\!
\bigg(1+\sup_{\varphi\in\D}\big\|\langle H\rangle^{-1}JH_0\varphi\big\|_\H\bigg)\\
&={\rm Const.}\;\!
\bigg(1+\sup_{\varphi\in\D}\big\|\langle H\rangle^{-1}(HJ-T)\varphi\big\|_\H\bigg)
\end{align*}
which is finite due to Lemma \ref{preuvepourRafa}.

Two additional hypotheses have to be checked. The first one is the inclusion
$J(H_0-z)^{-1}-(H-z)^{-1}J\in\K(\H_0,\H)$, $z\in\C\setminus\R$, which has already
been obtained in the proof of Lemma \ref{compacite1}. The second one is the inclusion
$J[H_0,A_0]J^*-[H,A]\in\K\big(\dom(H),\dom(H)^*\big)$ (note that we already know that
$J[H_0,A_0]J^*-[H,A]$ is bounded from $\dom(H)$ to $\dom(H)^*$ due to the previous
observations). Now, a rather lengthy but straightforward computation taking
Assumption \ref{CondsurgV} into account shows for all
$\varphi\in C^\infty_{\rm c}(M)$ that
$$
\big(J[H_0,A_0]J^*-[H,A]\big)\varphi=mL\varphi,
$$
where $L$ is a second order differential operator on $C^\infty_{\rm c}(M)$ with
coefficients in $C^\infty_{\rm b}(M)$ and support in $M_\infty$, and
$m\in C_{\rm b}(M)$ satisfies
$
\lim_{x\to\infty}
\big\|\big(m\circ\iota^{-1}\big)(x,\;\!\cdot\;\!)\big\|_{\linf(\Sigma)}=0
$.
It follows for all $\psi\in\langle H\rangle C^\infty_{\rm c}(M)$ that
$$
\langle H\rangle^{-1}\big(J[H_0,A_0]J^*-[H,A]\big)\langle H\rangle^{-1}\psi
=\langle H\rangle^{-1}mL\langle H\rangle^{-1}\psi.
$$
But we know from \cite[Lemma~1.6]{Sal01} that the operator $L\langle H\rangle^{-1}$,
defined on the dense set $\langle H\rangle C^\infty_{\rm c}(M)$, extends to an
element of $\B(\H)$. We also know from Lemma \ref{lemmecompact} that
$\langle H\rangle^{-1}m\equiv\big(m(H+i)^{-1}(H+i)\langle H\rangle^{-1}\big)^*$
belongs to $\K(\H)$. Therefore, there exists an operator $K\in\K(\H)$ such that
$\langle H\rangle^{-1}\big(J[H_0,A_0]J^*-[H,A]\big)\langle H\rangle^{-1}=K$ on $\H$,
which means that $J[H_0,A_0]J^*-[H,A]\in\K\big(\dom(H),\dom(H)^*\big)$.
\end{proof}

\begin{Lemma}\label{JJ*}
For each $\eta\in C^\infty_{\rm c}(\R)$, the operator $\eta(H)(JJ^*-1)\eta(H)$
belongs to $\K(\H)$.
\end{Lemma}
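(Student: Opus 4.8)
The plan is to reduce the statement to a compactness property of the operator $JJ^*-1$ sandwiched between functions of $H$, and then to exploit the explicit structure of $J$ together with Lemma \ref{lemmecompact}. First I would write $JJ^*$ explicitly. By definition $J$ multiplies by the smooth positive function $\chi_\infty\sqrt{\iota^*(1\otimes{\frak h})/{\frak g}}$ and then applies $\iota^*\circ(\j\otimes1)$; a direct computation of the adjoint shows that $J^*$ acts on $\psi\in\H$ by pushing forward via $(\iota^{-1})^*$, multiplying by the same weight and by $(\j\otimes1)$, and that consequently $JJ^*$ is the multiplication operator on $\H$ by the function $b:=\chi_\infty\,\iota^*\big((\j^2)\otimes1\big)$. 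Hence $JJ^*-1$ is multiplication by $b-1$, which equals $-1$ on $M\setminus M_\infty$, equals $\iota^*\big((\j^2-1)\otimes1\big)$ on the part of $M_\infty$ with $\iota$-coordinate in $(0,2)$, and equals $0$ on the region where the $\iota$-coordinate exceeds $2$. The key observation is that $b-1$ is a bounded continuous function on $M$ which is \emph{compactly supported} modulo the cylindrical end, i.e.\ it vanishes identically on $\{p\in M_\infty\mid (\iota(p))_1>2\}$, so in particular $\lim_{x\to\infty}\big\|\big((b-1)\circ\iota^{-1}\big)(x,\cdot)\big\|_{\linf(\Sigma)}=0$.

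Next I would bring in functions of $H$. For $\eta\in C^\infty_{\rm c}(\R)$ one has $\eta(H)=\eta(H)(H+i)(H+i)^{-1}$ with $\eta(H)(H+i)\in\B(\H)$, so it suffices to show that $(H+i)^{-1}(b-1)$ or equivalently $(b-1)(H\pm i)^{-1}$ lies in $\K(\H)$; then $\eta(H)(JJ^*-1)\eta(H)=\eta(H)(H+i)\,(H+i)^{-1}(b-1)\,\eta(H)$ is a product of a bounded operator, a compact operator, and a bounded operator, hence compact, and closedness of $\K(\H)$ in $\B(\H)$ finishes the argument. But the compactness of $(b-1)(H+i)^{-1}$ is exactly what Lemma \ref{lemmecompact} gives, since $m:=b-1$ is in $C_{\rm b}(M)$ and satisfies the decay hypothesis of that lemma. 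So the proof is essentially: identify $JJ^*$ as a multiplication operator, check the hypothesis of Lemma \ref{lemmecompact} for $m=JJ^*-1$, and combine.

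The step that requires the most care is the computation that $JJ^*$ is multiplication by $\chi_\infty\,\iota^*(\j^2\otimes1)$; this uses the change-of-variables formula relating $\dv$ on $M_\infty$ and $\d x\otimes\ds$ on $(0,\infty)\times\Sigma$ through the Jacobian factor ${\frak g}$ versus $\iota^*(1\otimes{\frak h})$, and it is precisely the choice of the weight $\sqrt{\iota^*(1\otimes{\frak h})/{\frak g}}$ in $J$ that makes the two Jacobians cancel so that $\iota^*$ combined with this weight is an isometry on functions supported in $M_\infty$. Concretely, for $\psi\in\H$ and $\varphi\in\H_0$ one computes $\langle J\varphi,\psi\rangle_\H$ by restricting to $M_\infty$, applying the diffeomorphism $\iota$ and the volume-change formula, and reads off $J^*\psi=(\j\otimes1)\,\sqrt{\iota^*(1\otimes{\frak h})/{\frak g}}\,(\iota^{-1})^*(\chi_\infty\psi)$; composing, all the weights and the push-forward/pullback pair cancel and one is left with multiplication by $\j^2$ after transporting back, i.e.\ by $\chi_\infty\iota^*(\j^2\otimes1)$ on $M$. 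Once this identity is in hand, everything else is a two-line invocation of Lemma \ref{lemmecompact} and the ideal property of $\K(\H)$; I do not anticipate any genuine obstacle beyond bookkeeping with the atlas and the orientability caveats already dealt with in the paper.
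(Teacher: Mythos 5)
Your proposal is correct and follows essentially the same route as the paper: identify $JJ^*-1$ as multiplication by the compactly supported function $\chi_\infty\iota^*(\j^2\otimes1)-1$, insert a resolvent factor, and invoke Lemma \ref{lemmecompact} together with the ideal property of $\K(\H)$. The only cosmetic difference is that you place the resolvent on the left of $JJ^*-1$ (using the adjoint of the compactness statement), while the paper writes $(JJ^*-1)(H+i)^{-1}$ directly; this changes nothing of substance.
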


\begin{proof}
One has $JJ^*=\chi_\infty\;\!\iota^*\big(\j^2\otimes1\big)$, so $JJ^*-1$ acts as a
multiplication operator by a function in $C_{\rm c}^\infty(M)$. Therefore, the r.h.s.
of the equality
$$
\eta(H)(JJ^*-1)\eta(H)=\eta(H)(JJ^*-1)(H+i)^{-1}(H+i)\eta(H)
$$
is the product of one element of $\K(\H)$ and two elements of $\B(\H)$, due to Lemma
\ref{lemmecompact}.
\end{proof}

In the next statement, we use the notation $E^H(\lambda;\delta)$, with $\lambda\in\R$
and $\delta>0$, for the spectral projection
$E^H\big((\lambda-\delta,\lambda+\delta)\big)$.

\begin{Proposition}[Mourre estimate]\label{Mourre}
Suppose that Assumption \ref{CondsurgV} holds with $\muL>0$ and $\muS>1$. Then for
each $\lambda\in\R\setminus\T$, there exist $\delta,a>0$ and $K\in\K(\H)$ such that
$$
E^H(\lambda;\delta)[iH,A]E^H(\lambda;\delta)\ge a\;\!E^H(\lambda;\delta)+K.
$$
\end{Proposition}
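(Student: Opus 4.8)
The plan is to import the Mourre estimate for the reference pair $(H_0,A_0)$ and transport it to $(H,A)$ using the comparison scheme of \cite{RT12}, which is precisely what Lemmas \ref{compacite1}, \ref{C2} and \ref{JJ*} were set up to feed. First I would record the Mourre estimate for $H_0$: because $H_0=P^2\otimes1+1\otimes\Delta_\Sigma$ and $A_0=\12(PQ+QP)\otimes1$, one computes $[iH_0,A_0]=2P^2\otimes1$, and from the tensorial decomposition \eqref{decompo} one sees that for any $\lambda\in\R\setminus\T$ there are $\delta_0,a_0>0$ with $E^{H_0}(\lambda;\delta_0)[iH_0,A_0]E^{H_0}(\lambda;\delta_0)\ge a_0\,E^{H_0}(\lambda;\delta_0)$ — indeed on the fibres $j$ with $\tau_j<\lambda$ the operator $2P^2\otimes\P_j$ is bounded below by $2(\lambda-\tau_j-\delta_0)>0$ on the corresponding spectral subspace, while the finitely many thresholds $\tau_j$ equal to $\lambda$ are excluded since $\lambda\notin\T$. (One may even take a strict inequality with no compact remainder here, because $H_0$ has no eigenvalues.)

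Next I would invoke the abstract transfer result of \cite{RT12} (the Mourre-estimate analogue of the comparison scheme described in the introduction). Its hypotheses are exactly: $H_0$ of class $C^1(A_0)$ and $H$ of class $C^1(A)$ (the latter is Lemma \ref{regul}, valid under $\muL\ge0$, $\muS\ge1$, hence under our stronger assumption); the difference $J\eta(H_0)-\eta(H)J\in\K(\H_0,\H)$ for $\eta\in C_0(\R)$ (Lemma \ref{compacite1}, needing $\mu>0$); the commutator comparison $J[A_0,\eta(H_0)]J^*-[A,\eta(H)]\in\K(\H)$ for $\eta\in C^\infty_{\rm c}(\R)$ (Lemma \ref{C2}, needing $\muL>0$, $\muS>1$); and the near-isometry property $\eta(H)(JJ^*-1)\eta(H)\in\K(\H)$ (Lemma \ref{JJ*}). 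Feeding the Mourre estimate for $(H_0,A_0)$ localized near $\lambda$ through this machinery yields a function $\eta\in C^\infty_{\rm c}(\R)$ supported near $\lambda$ with $\eta\equiv1$ on a neighbourhood of $\lambda$, together with $a>0$ and $K'\in\K(\H)$, such that $\eta(H)[iH,A]\eta(H)\ge a\,\eta(H)^2+K'$. The point where the argument could be delicate is verifying that all the hypotheses of the cited abstract proposition are literally the ones established in the preceding lemmas (in particular the technical requirement that $J$ and $J^*$ intertwine the operator domains appropriately, which is the content of the last remark after Lemma \ref{preuvepourRafa} and of the domain bookkeeping inside the proof of Lemma \ref{C2}); I would state explicitly which lemma supplies which hypothesis.

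Finally I would localize further. Choosing $\delta>0$ small enough that $\eta\equiv1$ on $(\lambda-\delta,\lambda+\delta)$, we have $E^H(\lambda;\delta)\eta(H)=E^H(\lambda;\delta)$, so multiplying the inequality $\eta(H)[iH,A]\eta(H)\ge a\,\eta(H)^2+K'$ on both sides by $E^H(\lambda;\delta)$ gives
\begin{equation*}
E^H(\lambda;\delta)[iH,A]E^H(\lambda;\delta)\ge a\,E^H(\lambda;\delta)+K,
\end{equation*}
with $K:=E^H(\lambda;\delta)K'E^H(\lambda;\delta)\in\K(\H)$. Here $[iH,A]$ is understood as the bounded operator from $\dom(H)$ to $\dom(H)^*$ associated with the quadratic form $\langle A\psi,iH\psi\rangle-\langle iH\psi,A\psi\rangle$, which by Lemma \ref{regul} (and \cite[Thm.~6.2.5]{ABG}) restricts, after multiplication by $\eta(H)$ on each side, to a genuine bounded operator on $\H$, so all products above make sense. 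Since $\lambda\in\R\setminus\T$ was arbitrary, this proves the claim; I expect no essential obstacle beyond a careful matching of notation with \cite{RT12}, the only genuinely non-mechanical input being the explicit Mourre estimate for $H_0$, which is immediate from its tensorial structure.
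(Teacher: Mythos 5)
Your proposal is correct and follows essentially the same route as the paper: the paper's proof also verifies the hypotheses of the abstract two-Hilbert-space result \cite[Thm.~3.1]{RT12} through Lemmas \ref{regul}, \ref{C2}, \ref{compacite1} and \ref{JJ*}, and combines this with the (strict) Mourre estimate for $(H_0,A_0)$ on $\R\setminus\T$. The only cosmetic difference is that you derive the estimate for $H_0$ explicitly from $[iH_0,A_0]=2P^2\otimes1$ and the decomposition \eqref{decompo}, whereas the paper simply cites \cite[Sec.~3.1]{Tie06} for the strict conjugacy of $A_0$ to $H_0$.
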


\begin{proof}
The hypotheses (i), (ii), (iii) and (iv) of \cite[Thm.~3.1]{RT12} are verified in
Lemmas \ref{regul}, \ref{C2}, \ref{compacite1} and \ref{JJ*}, respectively. Moreover,
it is known (see for instance \cite[Sec.~3.1]{Tie06}) that $A_0$ is strictly
conjugate to $H_0$ on $\R\setminus\T$. So, the claim follows by applying
\cite[Thm.~3.1]{RT12}, keeping in mind that $A_0$ is conjugate to $H_0$ at
$\lambda\in\R$ if $A_0$ is strictly conjugate to $H_0$ at $\lambda$.
\end{proof}

\begin{Remark}[Critical values of $H$]
In the sequel, we call $\kappa(H):=\T\cup\sigma_{\rm p}(H)$ the set of critical
values of $H$. This terminology is motivated by the fact that Proposition
\ref{Mourre}, together with \cite[Thm.~7.2.13]{ABG}, implies that $A$ is strictly
conjugate to $H$ on $\R\setminus\kappa(H)$.
\end{Remark}

\subsection{Higher order resolvent estimates and absolute continuity}
\label{SecHighR}

The main result of this section is a statement on the differentiability of the
boundary values of the resolvent of $H$, which will be useful when discussing the
stationary formula for the scattering operator. Its proof is based on the abstract
approach developed in \cite{BG96}.

We start by introducing a multiplication operator $\Phi$ on $M$ given by
\begin{equation*}
\Phi\;\!\psi:=\chi_\infty\;\!\iota^*\big(\j^2\;\!\id\otimes1\big)\psi,
\quad\psi\in C_{\rm c}^\infty(M).
\end{equation*}
The closure of $\Phi$ in $\H$ (which we denote by the same symbol) is self-adjoint
\cite[Ex.~5.1.15]{Ped89} and equal to $J\Phi_0J^*$ on $C_{\rm c}^\infty(M)$.
Furthermore, for a map $h\in C\big(\R;\B(\H)\big)$ and any $s>0$ we say that $h$ is
Lipschitz-Zygmund continuous of class $\Lambda^s$ (in short $h\in\Lambda^s$) if
\begin{enumerate}
\item[(i)] $0<s<1$ and
$
\|h(x+\varepsilon)-h(x)\|_{\B(\H)}\le{\rm Const.}\;\!|\varepsilon|^s
$
for all $x\in \R$ and $|\varepsilon|\leq 1$,
\item[(ii)] $s=1$ and
$
\|h(x+\varepsilon)+h(x-\varepsilon)-2h(x)\|_{\B(\H)}\le{\rm Const.}\;\!|\varepsilon|
$ for all $x\in \R$ and $|\varepsilon|\leq 1$,
\item[(iii)] $s=k+\sigma$ with $k\in\N^*$ and $\sigma\in(0,1)$, and
$h\in C_{\rm b}^k(\R)$ with $k$-th derivative $h^{(k)}\in\Lambda^{\sigma}$.
\end{enumerate}
Now, we state the main result of this section.

\begin{Proposition}[Higher order resolvent estimates]\label{yeye}
Suppose that Assumption \ref{CondsurgV} holds with $\muL>0$ and $\muS>k$, for some
$k\in\N^*$. Take $\sigma\in(0,\min\{\muL,\muS-k,1\})$ and set $s:=k+\sigma-1/2$. Then for
$\lambda\in\R\setminus\kappa(H)$ and $\ell\in\{1,2,\ldots,k\}$, the limit
$
\langle A\rangle^{-s}(H-\lambda\mp i0)^{-\ell}\langle A\rangle^{-s}
:=\lim_{\varepsilon\searrow0}\langle A\rangle^{-s}(H-\lambda\mp i\varepsilon)^{-\ell}
\langle A\rangle^{-s}
$
exists in $\B(\H)$, and the map
\begin{equation}\label{horreur1}
\R\setminus\kappa(H)\ni\lambda\mapsto\langle A\rangle^{-s}(H-\lambda\mp i0)^{-1}
\langle A\rangle^{-s}\in\B(\H)
\end{equation}
is locally of class $\Lambda^{k-1+\sigma}$. In particular, the map \eqref{horreur1} is
$(k-1)$-times continuously differentiable, with derivative
\begin{equation}\label{horreur2}
\frac{\d^{k-1}}{\d \lambda^{k-1}}\;\!\langle A\rangle^{-s}(H-\lambda\mp i0)^{-1}
\langle A\rangle^{-s}=(k-1)!\;\!\langle A\rangle^{-s}(H-\lambda\mp i0)^{-k}
\langle A\rangle^{-s},
\end{equation}
and the map
$
\R\setminus\kappa(H)\ni\lambda\mapsto\langle A\rangle^{-s}(H-\lambda\mp i0)^{-k}
\langle A\rangle^{-s}\in\B(\H)
$
is locally of class $\Lambda^\sigma$.
\end{Proposition}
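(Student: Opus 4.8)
The plan is to derive Proposition \ref{yeye} from the abstract regularity theory of \cite{BG96} applied to the pair $(H,A)$, using the Mourre estimate (Proposition \ref{Mourre}) and the higher-order regularity $H\in C^{k+1}(A)$ (to be established, analogously to Lemma \ref{regul}, as announced at the start of Section \ref{SecHighR}). The abstract input is the following: if $H$ satisfies a strict Mourre estimate at $\lambda$, if $H$ is of class $C^{k+1}(A)$, and if $\langle A\rangle$-weighted regularity of the appropriate order is available, then the boundary values $\langle A\rangle^{-s}(H-\lambda\mp i0)^{-1}\langle A\rangle^{-s}$ exist and the map \eqref{horreur1} is locally of class $\Lambda^{k-1+\sigma}$ when $s=k-1+\sigma+\tfrac12=k+\sigma-\tfrac12$. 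The existence of the iterated limits $\langle A\rangle^{-s}(H-\lambda\mp i0)^{-\ell}\langle A\rangle^{-s}$ for $\ell\in\{1,\dots,k\}$ follows from the same theory, since on any compact subinterval of $\R\setminus\kappa(H)$ the powers of the resolvent are controlled by the derivatives of the first power.

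Concretely, I would proceed in the following steps. First, verify that under Assumption \ref{CondsurgV} with $\muL>0$ and $\muS>k$ one has $H\in C^{k+1}(A)$; the iterated commutators $\ad_A^j(H)$ are, by the computation underlying Lemma \ref{regul} and the bound \eqref{b_bound}, second order differential operators with coefficients in $C^\infty_{\rm b}(M)$, hence bounded from $\dom(H)$ to $\H$ by \cite[Lemma~1.6]{Sal01}, and the $C^{k+1}(A)$ membership follows by the standard iteration (e.g.\ \cite[Thm.~6.3.4]{ABG}) exactly as in the $C^1$ case. Second, invoke Proposition \ref{Mourre} together with \cite[Thm.~7.2.13]{ABG} to get that $A$ is strictly conjugate to $H$ on $\R\setminus\kappa(H)$. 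Third, apply the main theorem of \cite{BG96} (the Besov/Lipschitz-Zygmund version of the limiting absorption principle): with the above regularity it yields that the map $\lambda\mapsto\langle A\rangle^{-s}(H-\lambda\mp i0)^{-1}\langle A\rangle^{-s}$ is locally of class $\Lambda^{k-1+\sigma}$. Fourth, read off from the definition of $\Lambda^{k-1+\sigma}$ with $k-1+\sigma=(k-1)+\sigma$ that the map is $(k-1)$-times continuously differentiable and its $(k-1)$-th derivative is locally of class $\Lambda^\sigma$. Fifth, identify that derivative: differentiating the resolvent identity $(k-1)$ times formally gives $\tfrac{\d^{k-1}}{\d\lambda^{k-1}}(H-\lambda\mp i0)^{-1}=(k-1)!\,(H-\lambda\mp i0)^{-k}$, and one checks this holds after conjugation by $\langle A\rangle^{-s}$ by approximating $\mp i0$ with $\mp i\varepsilon$, using that for $\varepsilon>0$ the identity is elementary and that all the weighted quantities converge uniformly on compacts as $\varepsilon\searrow0$ thanks to the established regularity; this simultaneously proves existence of the limit $\langle A\rangle^{-s}(H-\lambda\mp i0)^{-k}\langle A\rangle^{-s}$ and gives \eqref{horreur2} and the $\Lambda^\sigma$-regularity of $\lambda\mapsto\langle A\rangle^{-s}(H-\lambda\mp i0)^{-k}\langle A\rangle^{-s}$. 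The intermediate powers $\ell\in\{2,\dots,k-1\}$ are treated the same way, differentiating only $\ell-1$ times.

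The main obstacle is the bookkeeping needed to translate the abstract statement of \cite{BG96} into the precise numerology here: one must match the order of the Mourre/commutator regularity ($C^{k+1}(A)$), the Zygmund exponent $k-1+\sigma$, and the weight exponent $s=k+\sigma-\tfrac12$, so that the hypotheses of the abstract theorem are met with the optimal constants, and one must ensure that the decay thresholds $\muL>0$, $\muS>k$ translate (via Lemma \ref{preuvepourRafa} and the structure of $[H,A]$) into exactly the regularity that the abstract theorem consumes. A secondary, more routine but not entirely trivial point, is the justification that one may interchange the limit $\varepsilon\searrow0$ with the $(\ell-1)$-fold differentiation in $\lambda$; this is handled by the local uniformity of convergence provided by the $\Lambda^{k-1+\sigma}$ regularity of the first power together with the Cauchy integral representation of the derivatives, rather than by any new estimate.
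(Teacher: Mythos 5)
There is a genuine gap at the first step: you claim that Assumption \ref{CondsurgV} with $\muL>0$ and $\muS>k$ yields $H\in C^{k+1}(A)$, and your whole application of \cite{BG96} rests on this. It does not. Each commutator with $A$ (a dilation-type generator along the cylindrical axis) costs one power of $\langle x\rangle$ on the short-range coefficients, which only decay like $\langle x\rangle^{-\muS}$ uniformly in the order of differentiation; after $j$ commutators the resulting coefficients behave like $\langle x\rangle^{j-\muS}$, so boundedness of $\ad_A^{j}(H)$ from $\dom(H)$ to $\H$ requires $j\le\muS$. With only $\muS>k$ (e.g.\ $\muS=k+\sigma'$ with small $\sigma'$), the $(k{+}1)$-st iterated commutator is in general unbounded, so $C^{k+1}(A)$ is not available. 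What the hypotheses do give, and what the theorem of \cite{BG96} actually consumes, is the intermediate class $\CC^{s+1/2}(A)=\CC^{k+\sigma}(A)$: one proves $H\in C^k(A)$ (this part of your iteration is fine, since $\muS\ge k$ suffices for $k$ commutators) and then, separately, that the $k$-th iterated commutator $\ad_A^k\big((H-i)^{-1}\big)$ has an additional H\"older/Zygmund regularity of fractional order $\sigma$ with respect to $A$. Establishing this fractional piece is the technical heart of the matter: in the paper it is done by writing the relevant commutators $D_{\ell_2}=\ad_A^{\ell_2}([H,A])$ as second order differential operators whose coefficients still carry a decay $\langle x\rangle^{-\sigma}$ (because $\sigma\le\min\{\muL,\muS-k\}$), and then estimating $\|\e^{-itA}D_{\ell_2}\e^{itA}-D_{\ell_2}\|_{\B(\G,\G^*)}\le{\rm Const.}\;\!|t|^\sigma$ via the regularized weights $\Lambda_t=t\langle\Phi\rangle(t\langle\Phi\rangle+i)^{-1}$ and the bound $\|\Lambda_t\langle\Phi\rangle^{-\sigma}\|_{\B(\H)}\le{\rm Const.}\;\!|t|^\sigma$. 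Your proposal bypasses exactly this step by invoking a regularity ($C^{k+1}(A)$) stronger than the assumptions permit; if you strengthen the hypothesis to $\muS>k+1$ your argument would go through, but then you would be proving a weaker statement than Proposition \ref{yeye}.

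Apart from this, the skeleton of your argument (spectral gap of $H$, $s>1/2$, Mourre estimate from Proposition \ref{Mourre}, application of the Lipschitz--Zygmund limiting absorption theorem of \cite{BG96}, and reading off the $(k-1)$-fold differentiability and the identification \eqref{horreur2}) coincides with the paper's route; note also that the identification of the $(k-1)$-th derivative with $(k-1)!\,\langle A\rangle^{-s}(H-\lambda\mp i0)^{-k}\langle A\rangle^{-s}$ is in the paper simply taken from the discussion on pages 12--13 of \cite{BG96}, whereas your proposed justification via a Cauchy integral representation is not directly applicable, since the boundary values are defined only for real $\lambda$ and no analyticity across the real axis is available; an argument through locally uniform convergence of the $\varepsilon$-regularized resolvent powers is the correct substitute, but its uniformity is itself part of what the abstract theorem provides.
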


Before the proof, we recall that Lemma \ref{invariance} implies that the restriction
to $\G:=\dom(H)$ of the unitary group generated by $A$ defines a $C_0$-group in $\G$
as well as in its adjoint space $\G^*$ (\cf \cite[Prop.~3.2.5]{ABG}); we still denote
by $A$ the generators of these two $C_0$-groups. In particular, for any operator
$B\in\B(\G,\G^*)$, we write $B \in C^k(A;\G,\G^*)$ if the map
$\R\ni t\mapsto\e^{-itA}B\e^{itA}\in\B(\G,\G^*)$ is $k$-times strongly
differentiable. Similar definitions hold for the regularity classes $C^k(A;\G,\H)$
and $C^k(A;\H,\G)$.

\begin{proof}
(a) We prove the claim by applying \cite[Thm.~p.~12]{BG96} to our situation. So, we
only need to check the hypotheses of that theorem. For that purpose, we note that
$s>1/2$ and that $H$ has a spectral gap due to the lower bound $\Delta_M\ge0$ and the
boundedness of $V$. We also refer to point (b) below for a verification of the
hypothesis on the regularity of $H$ with respect to $A$. Thus,
\cite[Thm.~p.~12]{BG96} applies and the map \eqref{horreur1} is locally
Lipschitz-Zygmund of order $s-1/2$ on $\R\setminus\kappa(H)$. In particular, since
$s-1/2>k-1$, the map \eqref{horreur1} is $(k-1)$-times continuously differentiable
with bounded derivatives. The equality \eqref{horreur2} follows from the observation
made on pages $12$-$13$ of \cite{BG96}.

(b) For the regularity of $H$ with respect to $A$, it is necessary to show that $H$
is of class $\CC^{s+1/2}(A)\equiv\CC^{s+1/2,\infty}(A)$ (see \cite[Sec.~2.1]{BG96}).
By \cite[Prop.~5.2.2.(b)]{ABG}, we know that this holds if $H$ is of class $C^{k}(A)$
and if the $k$-iterated commutator $\ad^k_A\big((H-i)^{-1}\big)$ of $(H-i)^{-1}$ with
$A$ belongs to $\CC^\sigma(A)$ with $\sigma=s+1/2-k\in(0,1)$.

We first show that $H$ is of class $C^k(A)$. Since $\G$ is left invariant by the
group generated by $A$, and since $H$ is of class $C^1(A)$ with $[iH,A]\in\B(\G,\H)$
(see Lemma \ref{invariance} and its proof), Proposition 3.2 of \cite{RT12} tells us
it is enough to prove the inclusion $[H,A]\in C^{k-1}(A;\G,\H)$ (this condition
implies the weaker assumption $H \in C^{k-1}(A;\G,\H)\cap C^k(A;\G,\G^*)$). Let us
assume that $k>1$ since otherwise the proof is trivial. We know from
\cite[Thm.~5.1.3(b)]{ABG} that $D_1:=[H,A]\in C^1(A;\G,\H)$ if
\begin{equation}\label{ba_bound}
\liminf_{\tau\searrow0}\frac1\tau
\big\|\e^{i\tau A}D_1\e^{-i\tau A}-D_1\big\|_{\B(\G,\H)}<\infty.
\end{equation}
Now, a direct calculation using Assumption \ref{CondsurgV} with $\muL\ge0$ and
$\muS\ge k$ shows that there exists a second order differential operator $D_2$ with
coefficients in $C_{\rm b}^\infty(M)$ such that $[A,D_1]=D_2$ on
$C_{\rm c}^\infty(M)$. So, since
$\e^{it A}C_{\rm c}^\infty(M)\subset C_{\rm c}^\infty(M)$ for all $t\in\R$, one has
$$
\liminf_{\tau\searrow0}
\frac1\tau\big\|\e^{i\tau A}D_1\e^{-i\tau A}-D_1\big\|_{\B(\G,\H)}
=\liminf_{\tau\searrow0}\sup_{\psi\in C_{\rm c}^\infty(M),\,\|\psi\|_\G=1}
\left\|\int_0^1\d s\,\e^{i\tau sA}D_2\e^{-i\tau sA}\psi\right\|_\H,
$$
and one gets the bound \eqref{ba_bound} by noting that $\|D_2\|_{\B(\G,\H)}<\infty$
(due to \cite[Lemma~1.6]{Sal01}) and that
$\|\e^{itA}\psi\|_\G\le{\rm Const.}\;\!\|\psi\|_\G$ for all $t\in[0,1]$ (due to
\cite[Prop.~3.2.2.(b)]{ABG}). Thus $D_1 \in C^1(A;\G,\H)$, and this procedure can be
repeated iteratively (with $D_2$ replacing $D_1$, and so forth) to show that
$D_1\in C^{k-1}(A;\G,\H)$.

Let us now show that $\ad^k_A\big((H-i)^{-1}\big)$ belongs to $\CC^\sigma(A)$.
For that purpose, we first note that the inclusion $H\in C^k(A;\G,\H)$
implies by \cite[Prop.~5.1.6]{ABG} that $(H-i)^{-1}\in C^k(A;\H,\G)$.
Then, we observe that
\begin{align*}
\ad^k_A\big((H-i)^{-1}\big)
&=\ad^{k-1}_A\big(\big[(H-i)^{-1},A\big]\big)\\
&=-\ad^{k-1}_A\big((H-i)^{-1}[H,A](H-i)^{-1}\big)\\
&=\sum_{\substack{\ell_1,\ell_2,\ell_3\ge0\\\ell_1+\ell_2+\ell_3=k-1}}
{\sc c}_{\ell_1,\ell_2,\ell_3}\;\!\ad^{\ell_1}_A\big((H-i)^{-1}\big)\;\!
\ad^{\ell_2}_A\big([H,A]\big)\;\!\ad^{\ell_3}_A\big((H-i)^{-1}\big),
\end{align*}
with ${\sc c}_{\ell_1,\ell_2,\ell_3}\in\R$, $\ad^{\ell_1}_A\big((H-i)^{-1}\big)$ and
$\ad^{\ell_3}_A\big((H-i)^{-1}\big)$ in $C^1(A;\H,\G)\subset\CC^\sigma(A;\H,\G)$
and $\ad^{\ell_2}_A\big([H,A]\big)$ in $\B(\G,\H)$. Now, a duality argument implies
that $\ad^{\ell_1}_A\big((H-i)^{-1}\big)$ also belongs to
$\CC^\sigma(A;\G^*,\H)$. Thus, if one shows that $\ad^{\ell_2}_A\big([H,A]\big)$
belongs to $\CC^\sigma(A;\G,\G^*)$, then the statement would follow from an
application of \cite[Prop.~5.2.3.(a)]{ABG}. So, one is reduced to proving that
$D_{\ell_2}:=\ad^{\ell_2}_A\big([H,A]\big)\in\CC^\sigma(A;\G,\G^*)$ for any
$\ell_2\le k-1$, which is equivalent to
\begin{equation}\label{step2}
\big\|\e^{-itA}D_{\ell_2}\e^{itA}-D_{\ell_2}\big\|_{\B(\G,\G^*)}
\le{\rm Const.}\;\!|t|^\sigma\quad\hbox{for all }t\in(0,1).
\end{equation}
Now, algebraic manipulations as in \cite[p.~325]{ABG} together with the point (i) of
the proof of \cite[Prop.~7.5.7]{ABG} imply that
$$
\big\|\e^{-itA}D_{\ell_2}\e^{itA}-D_{\ell_2}\big\|_{\B(\G,\G^*)}
\le{\rm Const.}\;\!\big\|\sin(tA)D_{\ell_2}\big\|_{\B(\G,\G^*)}
\le{\rm Const.}\;\!\big\|tA(tA+i)^{-1}D_{\ell_2}\big\|_{\B(\G,\G^*)},
$$
with the constants independent of $t \in [0,1]$. Furthermore, if $A_t:=tA(tA+i)^{-1}$
and $\Lambda_t:=t\langle\Phi\rangle(t\langle\Phi\rangle+i)^{-1}$, then one has
$$
A_t=\big\{A_t+i(tA+i)^{-1}A\langle\Phi\rangle^{-1}\big\}\Lambda_t,
$$
with $A\langle\Phi\rangle^{-1}\in\B(\H,\G^*)$ due to \cite[Lemma~1.6]{Sal01}.
Finally, it is shown in the abstract framework of the proof of
\cite[Prop.~7.5.7]{ABG} that
$\|A_t\|_{B(\G^*)}+\|(tA+i)^{-1}\|_{\B(\G^*)}\le{\rm Const.}$ for all $t\in[0,1]$.
Thus, the estimate \eqref{step2} would hold if one shows that
$\|\Lambda_t D_{\ell_2}\|_{\B(\G,\H)}\le {\rm Const.}\;\!|t|^\sigma$.

For this, we recall that $D_{\ell_2}$ is (for any $\ell_2\le k-1$) equal on
$C_{\rm c}^\infty(M)$ to a second order differential operator with coefficients in
$C_{\rm b}^\infty(\R)$ if $\muL\ge0$ and $\muS\ge k$. But, since $\muL>0$, $\muS>k$
and $\sigma\le\min\{\muL,\muS-k\}$, the product $\langle\Phi\rangle^\sigma D_{\ell_2}$
is still a second second order
differential operator on $C_{\rm c}^\infty(M)$ with coefficients in
$C_{\rm b}^\infty(\R)$. It follows that
$$
\big\|\Lambda_tD_{\ell_2}\big\|_{\B(\G,\H)}
\le\big\|\Lambda_t\langle\Phi\rangle^{-\sigma}\big\|_{\B(\H)}\,
\big\|\langle\Phi\rangle^\sigma D_{\ell_2}\big\|_{\B(\G,\H)}
\le{\rm Const.}\;\!\sup_{x\in\R}\big|t\langle x\rangle^{1-\sigma}
(t\langle x\rangle+i)^{-1}\big|
\le{\rm Const.}\;\!|t|^\sigma,
$$
as required.
\end{proof}

The nature of the spectrum of $H$ can now be deduced:

\begin{Proposition}[Spectral properties of $H$]\label{natureH}
Suppose that Assumption \ref{CondsurgV} holds with $\muL>0$ and $\muS>1$. Then, the
point spectrum of $H$ in $\R\setminus\T$ is composed of eigenvalues of finite
multiplicity with no accumulation point. Furthermore, the operator $H$ has no
singular continuous spectrum.
\end{Proposition}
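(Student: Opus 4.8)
The plan is to derive Proposition~\ref{natureH} directly from the Mourre estimate of Proposition~\ref{Mourre} together with the $C^1(A)$-regularity established in Lemma~\ref{regul}, via the standard consequences of Mourre theory in the Amrein--Boutet de Monvel--Georgescu framework \cite{ABG}. Both conclusions are local: it suffices to prove them on an arbitrary compact subinterval of $\R\setminus\T$, since $\T$ is discrete.

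First I would recall the ingredients already at our disposal. By Lemma~\ref{regul}, $H$ is of class $C^1(A)$ under Assumption~\ref{CondsurgV} with $\muL\ge0$, $\muS\ge1$ (and in fact the slightly stronger $C^{1,1}(A)$ or $C^{1+0}(A)$ regularity follows from Proposition~\ref{yeye} with $k=1$, or can be extracted from the proof of Lemma~\ref{regul} by noting that $[H,A]$ extends to $\B(\dom(H),\H)$, which gives $C^{1,1}(A)$ by \cite[Thm.~6.3.4]{ABG} or a direct argument). By Proposition~\ref{Mourre}, for each $\lambda\in\R\setminus\T$ there exist $\delta,a>0$ and $K\in\K(\H)$ with $E^H(\lambda;\delta)[iH,A]E^H(\lambda;\delta)\ge a\,E^H(\lambda;\delta)+K$, i.e.\ $A$ is conjugate to $H$ on $\R\setminus\T$.

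Then the argument runs as follows. From the $C^1(A)$-property of $H$ and the Mourre estimate, the Virial Theorem \cite[Prop.~7.2.10]{ABG} shows that $H$ has at most finitely many eigenvalues in any compact subset of $\R\setminus\T$, each of finite multiplicity; hence the point spectrum of $H$ in $\R\setminus\T$ consists of eigenvalues of finite multiplicity with no accumulation point in $\R\setminus\T$ (accumulation could only occur at points of $\T$). This is the content of \cite[Cor.~7.2.11]{ABG}. Next, upgrading to the $C^{1,1}(A)$ (or $C^{1+0}(A)$) regularity of $H$ -- which holds because $[H,A]$ is, on $C^\infty_{\mathrm c}(M)$, a second-order differential operator with $C^\infty_{\mathrm b}(M)$ coefficients bounded from $\dom(H)$ to $\H$, as already observed in the proof of Lemma~\ref{regul} -- the Limiting Absorption Principle \cite[Thm.~7.3.1]{ABG} yields that $H$ has purely absolutely continuous spectrum in $\R\setminus\kappa(H)$; in particular $H$ has no singular continuous spectrum in $\R\setminus\T$. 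Since $\T$ is countable and thus carries no singular continuous part, $\sigma_{\mathrm{sc}}(H)=\emptyset$. Alternatively, and perhaps more cleanly given what is already proved, the absence of singular continuous spectrum away from $\T$ follows immediately from the existence in $\B(\H)$ of the boundary values $\langle A\rangle^{-s}(H-\lambda\mp i0)^{-1}\langle A\rangle^{-s}$ for $\lambda\in\R\setminus\kappa(H)$, established in Proposition~\ref{yeye} with $k=1$ (taking $\muL>0$, $\muS>1$), via the standard criterion that a locally uniform limiting absorption principle on an interval implies absolute continuity of the spectral measure there.

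The main obstacle -- really the only subtle point -- is making sure one has the correct regularity class for each conclusion: the finiteness of the point spectrum needs only $C^1(A)$ plus the Mourre estimate (Virial theorem), whereas the absence of singular continuous spectrum genuinely requires $C^{1,1}(A)$ (or at least $C^{1+0}(A)$), and one must check this stronger regularity is available under the hypothesis $\muL>0$, $\muS>1$. This is not automatic from Lemma~\ref{regul} as stated, but it is a by-product of the analysis in the proof of Proposition~\ref{yeye} (the case $k=1$, where $s=\sigma+1/2$ and $\sigma\in(0,\min\{\muL,\muS-1,1\})$, which is a nonempty interval precisely when $\muL>0$ and $\muS>1$); indeed that proof shows $H\in\CC^{s+1/2}(A)$ with $s+1/2>1$. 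Once this is in hand, one simply invokes \cite[Cor.~7.2.11]{ABG} for the point spectrum and \cite[Thm.~7.3.1]{ABG} (or Proposition~\ref{yeye} directly) for absolute continuity, and finally remarks that $\T$, being discrete, cannot support singular continuous spectrum, so $\sigma_{\mathrm{sc}}(H)=\emptyset$ globally.
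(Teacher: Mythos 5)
Your proposal is correct and follows essentially the paper's own route: the Mourre estimate of Proposition \ref{Mourre}, the $\CC^{1+\sigma}(A)$ (hence $C^{1,1}(A)$) regularity extracted from the proof of Proposition \ref{yeye} with $k=1$ (valid precisely because $\muL>0$ and $\muS>1$), the standard conjugate-operator consequences of \cite{ABG} — the paper invokes the single statement \cite[Thm.~7.4.2]{ABG}, which, together with the spectral gap of $H$, packages your separate Virial-theorem and limiting-absorption steps — and finally the countability of $\T$ to remove the singular continuous spectrum globally. One caveat: your parenthetical suggestion that $C^{1,1}(A)$ already follows from $[H,A]\in\B\big(\dom(H),\H\big)$ via \cite[Thm.~6.3.4]{ABG} is not justified (boundedness of the first commutator alone does not give $C^{1,1}(A)$), but since you ultimately obtain the regularity from the proof of Proposition \ref{yeye}, exactly as the paper does, the argument stands.
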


\begin{proof}
We know from Proposition \ref{Mourre} that a Mourre estimate holds for $H$. We also
know from the proof of Proposition \ref{yeye} with $k=1$ that the operator $H$ is of
class $\CC^{1+\sigma}(A)$ for any $\sigma\in(0,\min\{\muL,\muS-1,1\})$. So, $H$ is a
fortiori of class $C^{1,1}(A)$. Finally, we recall that $H$ has a spectral gap, as
mentioned in the proof of Proposition \ref{yeye}. Therefore, one can simply apply
\cite[Thm.~7.4.2]{ABG} to obtain the stated results (note that \cite[Thm.~7.4.2]{ABG}
only implies that $H$ has no singular continuous spectrum in $\R\setminus\T$, but
since $\T$ is countable this implies that $H$ has no singular continuous spectrum at
all).
\end{proof}

\subsection{From one weight to another}

The higher order resolvent estimates for $H$ obtained in Proposition \ref{yeye} are
formulated in terms of the weights $\langle A\rangle^{-s}$. In applications, such as
the mapping properties of the scattering operator, it is often more convenient to
deal with weights defined in terms of multiplication operators. So, we devote this
subsection to the derivation of higher order resolvent estimates for $H$ in terms of
the weights $\langle\Phi\rangle^{-s}$.

We start by recalling a similar result for the pair $(H_0,\Phi_0)$ that can be
deduced from the proof of \cite[Lemma~3.6]{Tie06}:

\begin{Lemma}\label{passage}
Let $s>k-1/2$ for some $k\in\N^*$. Then for $\lambda\in\R\setminus\T$ and
$\ell\in\{1,2,\ldots,k\}$, the limit
$
\langle\Phi_0\rangle^{-s}(H_0-\lambda\mp i0)^{-\ell}\langle\Phi_0\rangle^{-s}
:=\lim_{\varepsilon\searrow0}\langle\Phi_0\rangle^{-s}
(H_0-\lambda\mp i\varepsilon)^{-\ell}\langle\Phi_0\rangle^{-s}
$
exists in $\B(\H_0)$, and the map
$$
\R\setminus\T\ni\lambda\mapsto\langle\Phi_0\rangle^{-s}(H_0-\lambda\mp i0)^{-1}
\langle\Phi_0\rangle^{-s}\in\B(\H_0)
$$
is $(k-1)$-times continuously differentiable, with derivative
$$
\frac{\d^{k-1}}{\d\lambda^{k-1}}\;\!\langle\Phi_0\rangle^{-s}(H_0-\lambda\mp i0)^{-1}
\langle\Phi_0\rangle^{-s}=(k-1)!\;\!\langle\Phi_0\rangle^{-s}(H_0-\lambda\mp i0)^{-k}
\langle\Phi_0\rangle^{-s}.
$$
\end{Lemma}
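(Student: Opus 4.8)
The plan is to lift the problem to the spectral representation of $H_0$ furnished by Lemma \ref{Lem_F_0} and then to quote a standard result on boundary values of Cauchy-type integrals, following the argument of \cite[Lemma~3.6]{Tie06}. Setting $G(\lambda):=F_0(\lambda)\langle\Phi_0\rangle^{-s}$, I would first observe that $\langle\Phi_0\rangle^{-s}=\langle Q\rangle^{-s}\otimes1$ maps $\H_0$ boundedly into $\H^0_s(\R)\otimes\ltwo(\Sigma)$, so that Lemma \ref{Lem_F_0}(a) (with $t=0$) gives $G(\lambda)\in\B\big(\H_0,\H_0(\infty)\big)$ for $\lambda\in[0,\infty)\setminus\T$, while Lemma \ref{Lem_F_0}(b) gives --- since $s>k-1/2$ --- that $\lambda\mapsto G(\lambda)$, and hence $\lambda\mapsto G(\lambda)^*G(\lambda)\in\B(\H_0)$, is locally $(k-1)$-times H\"older continuously differentiable on $[0,\infty)\setminus\T$. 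Using Lemma \ref{Lem_F_0}(c) and the spectral theorem, one then has, for every $z\in\C\setminus\R$ and $\ell\in\{1,\dots,k\}$, the identity $\langle\Phi_0\rangle^{-s}(H_0-z)^{-\ell}\langle\Phi_0\rangle^{-s}=\int_{[0,\infty)}(\mu-z)^{-\ell}\,G(\mu)^*G(\mu)\,\d\mu$ in $\B(\H_0)$.

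Next I would localize around a fixed $\lambda_0\in\R\setminus\T$. The case $\lambda_0<0$ is trivial, since then $\lambda_0\notin\sigma(H_0)$ and everything is analytic near $\lambda_0$. For $\lambda_0>0$ (recall $0\in\T$), pick $\delta>0$ with $[\lambda_0-2\delta,\lambda_0+2\delta]\subset(0,\infty)\setminus\T$ --- possible because $\T$ is discrete with its only accumulation point at $+\infty$ --- together with a cutoff $\theta\in C^\infty_{\rm c}(\R)$ equal to $1$ near $\lambda_0$ and supported in $[\lambda_0-2\delta,\lambda_0+2\delta]$. Writing $G^*G=\theta\,G^*G+(1-\theta)\,G^*G$ splits the integral above into a Cauchy-type integral with compactly supported density $\theta\,G^*G$ plus the term $\langle\Phi_0\rangle^{-s}\big(1-\theta(H_0)\big)(H_0-z)^{-\ell}\langle\Phi_0\rangle^{-s}$. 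The latter is a $\B(\H_0)$-valued analytic function of $z$ in a disc around $\lambda_0$ (the cutoff removes the spectral obstruction), hence contributes only an arbitrarily differentiable term. The density of the former is, by the first paragraph, compactly supported and $(k-1)$-times H\"older continuously differentiable, so the classical statement on Cauchy integrals of such densities (the one used in the proof of \cite[Lemma~3.6]{Tie06} and the references therein) shows that $\int\theta(\mu)G(\mu)^*G(\mu)(\mu-\lambda\mp i\varepsilon)^{-\ell}\,\d\mu$ converges in $\B(\H_0)$ as $\varepsilon\searrow0$ for each $\ell\le k$, that the $\ell=1$ boundary value is $(k-1)$-times continuously differentiable in $\lambda$, and that its $(k-1)$-th derivative is $(k-1)!$ times the $\ell=k$ boundary value. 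Adding the two contributions yields all the assertions.

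I expect the only delicate point to be the bookkeeping of regularity exponents: one must verify that $s>k-1/2$ is exactly what makes $F_0$, and therefore $G^*G$, of class $C^{k-1}$ with H\"older $(k-1)$-th derivative, which is in turn precisely the regularity the Cauchy-integral lemma needs in order to produce a $C^{k-1}$ boundary value reaching the $k$-th power of the resolvent. As a consistency check, the same conclusion can be obtained by a fibrewise argument using \eqref{decompo} and $\Phi_0=Q\otimes1$: the problem decouples into the finitely many propagating channels $j\in\N(\lambda_0)$, where it reduces to the explicit one-dimensional limiting absorption principle for $P^2-(\lambda-\tau_j)$ with the weight $\langle Q\rangle^{-s}$, plus a norm-convergent remainder over the evanescent channels $j\notin\N(\lambda_0)$ that depends analytically on $\lambda$ near $\lambda_0$.
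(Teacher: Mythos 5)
Your argument is correct and is essentially the proof the paper has in mind: the paper simply delegates Lemma \ref{passage} to the proof of \cite[Lemma~3.6]{Tie06}, which rests on exactly the ingredients you use, namely the H\"older regularity of $\lambda\mapsto F_0(\lambda)$ from Lemma \ref{Lem_F_0} (your exponent bookkeeping, $s>k-1/2$ giving $(k-1)$-fold H\"older differentiability of $G(\lambda)^*G(\lambda)$, is the right one) combined with the classical boundary-value theorem for Cauchy integrals of H\"older densities after localizing away from $\T$. Nothing essential is missing.
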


We turn now to the derivation of similar resolvent estimates for $H$ in terms of the
weights $\langle\Phi\rangle^{-s}$.

\begin{Proposition}\label{yep}
Suppose that Assumption \ref{CondsurgV} holds with $\muL>0$ and $\muS>k$, for some
$k\in\N^*$. Take $\sigma\in(0,\min\{\muL,\muS-k,1\})$ and set $s:=k+\sigma-1/2$. Then for
$\lambda\in\R\setminus\kappa(H)$ and $\ell\in\{1,2,\ldots,k\}$, the limit
$
\langle\Phi\rangle^{-s}(H-\lambda\mp i0)^{-\ell}\langle\Phi\rangle^{-s}
:=\lim_{\varepsilon\searrow0}\langle\Phi\rangle^{-s}
(H-\lambda\mp i\varepsilon)^{-\ell}\langle\Phi\rangle^{-s}
$
exists in $\B(\H)$, and the map
\begin{equation}\label{map_map}
\R\setminus\kappa(H)\ni\lambda\mapsto\langle\Phi\rangle^{-s}(H-\lambda\mp i0)^{-1}
\langle\Phi\rangle^{-s}\in\B(\H)
\end{equation}
is locally of class $\Lambda^{k-1+\sigma}$.
In particular, the map \eqref{map_map} is $(k-1)$-times continuously differentiable
with $(k-1)$-th derivative locally of class $\Lambda^\sigma$.
\end{Proposition}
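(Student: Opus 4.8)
The plan is to transfer the resolvent estimates from the weights $\langle A\rangle^{-s}$ (obtained in Proposition \ref{yeye}) to the weights $\langle\Phi\rangle^{-s}$, by inserting a suitable interpolation between the two families of weights. The key observation is that $\Phi=J\Phi_0J^*$ on $C_{\rm c}^\infty(M)$ and $A=JA_0J^*$ (Remark \ref{SurA}), so the operators $\langle\Phi\rangle$ and $\langle A\rangle$ are built from the corresponding operators $\langle\Phi_0\rangle$ and $\langle A_0\rangle$ on the reference space, where they are explicitly related through the dilation structure on $\R$. Concretely, I would first establish that the operator $\langle\Phi\rangle^s\langle A\rangle^{-s}$ (suitably interpreted) is bounded on $\H$, and likewise $\langle A\rangle^s\langle\Phi\rangle^{-s}\in\B(\H)$, at least for the relevant range of $s$; these are the standard comparison estimates between the generator of dilations and the position operator, which on $\ltwo(\R)$ follow from the fact that $\langle Q\rangle^s\langle A_0\rangle^{-s}$ is bounded (and tensor trivially in the $\Sigma$ variable), and which transfer to $\H$ via $J$ up to compact (hence bounded) corrections localized in $\Mc$.

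Granting these comparison estimates, the first step is to write, for $\varepsilon>0$,
\begin{equation*}
\langle\Phi\rangle^{-s}(H-\lambda\mp i\varepsilon)^{-\ell}\langle\Phi\rangle^{-s}
=\big(\langle\Phi\rangle^{-s}\langle A\rangle^{s}\big)\,
\langle A\rangle^{-s}(H-\lambda\mp i\varepsilon)^{-\ell}\langle A\rangle^{-s}\,
\big(\langle A\rangle^{s}\langle\Phi\rangle^{-s}\big),
\end{equation*}
and to pass to the limit $\varepsilon\searrow0$ using Proposition \ref{yeye}: the middle factor converges in $\B(\H)$, and the outer factors are fixed bounded operators, so the product converges in $\B(\H)$ and the limit $\langle\Phi\rangle^{-s}(H-\lambda\mp i0)^{-\ell}\langle\Phi\rangle^{-s}$ exists. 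For the regularity in $\lambda$, the same factorization shows that the map \eqref{map_map} equals a fixed bounded operator times the map \eqref{horreur1} times another fixed bounded operator; since composition with fixed elements of $\B(\H)$ preserves the Lipschitz--Zygmund class $\Lambda^{k-1+\sigma}$, the local $\Lambda^{k-1+\sigma}$ regularity of \eqref{map_map} follows at once from that of \eqref{horreur1}. The differentiability statements and the identification of the $(k-1)$-th derivative as $(k-1)!\,\langle\Phi\rangle^{-s}(H-\lambda\mp i0)^{-k}\langle\Phi\rangle^{-s}$, locally of class $\Lambda^\sigma$, then follow exactly as in Proposition \ref{yeye}, again by pulling the outer weights through the $\varepsilon\searrow0$ limit.

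The main obstacle is the comparison estimate $\langle\Phi\rangle^{s}\langle A\rangle^{-s}\in\B(\H)$ (and its adjoint-type counterpart), because $\Phi$ and $A$ do not commute and the straightforward bound is only available for the free pair $(\Phi_0,A_0)$ on $\H_0$. I expect this to be handled by the same mechanism already used repeatedly in the paper: one reduces to the reference system via $J$, uses that on $\ltwo(\R)$ the boundedness of $\langle Q\rangle^{s}\langle A_0\rangle^{-s}$ is classical (it can be read off from the mapping properties of $\F$ between the weighted Sobolev spaces $\H^t_s(\R)$ recalled in the Notations, together with $\langle A_0\rangle^{-s}\langle P\rangle^{-s}\in\B$), and then controls the discrepancy between $\langle\Phi\rangle^{s}$ and $\langle J\Phi_0J^*\rangle^{s}$ by a compact (or at least bounded) error supported where $\j$ is not $1$, invoking Lemma \ref{lemmecompact}. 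Once this comparison ingredient is in place, the rest of the argument is a purely formal transfer and presents no further difficulty; in particular Lemma \ref{passage} is the template showing that such a weight-change statement is exactly what one expects, and no new spectral input beyond Proposition \ref{yeye} is required.
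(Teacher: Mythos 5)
Your overall strategy (transfer Proposition \ref{yeye} from the weights $\langle A\rangle^{-s}$ to $\langle\Phi\rangle^{-s}$ by a factorization) is the right idea, but the specific comparison estimates you base it on are false, and this is a genuine gap rather than a technical detail. Neither $\langle\Phi\rangle^{s}\langle A\rangle^{-s}$ nor $\langle A\rangle^{s}\langle\Phi\rangle^{-s}$ is bounded, and the failure already occurs in the free one-dimensional model that you invoke as the "classical" ingredient: with $A_0=\frac12(QP+PQ)$ the generator of dilations, the dilated states $\varphi_\tau:=\e^{-i\tau A_0}\varphi$ have $\|\langle A_0\rangle^{s}\varphi_\tau\|=\|\langle A_0\rangle^{s}\varphi\|$ bounded in $\tau$ while $\|\langle Q\rangle^{s}\varphi_\tau\|\sim\e^{s\tau}\to\infty$, so $\langle Q\rangle^{s}\langle A_0\rangle^{-s}\notin\B\big(\ltwo(\R)\big)$; conversely, bump functions supported near $x=1$ with momentum $\sim\lambda$ show that $\langle A_0\rangle^{s}\langle Q\rangle^{-s}$ is unbounded as well, since $A_0\approx QP$ mixes position and momentum. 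Consequently the outer factors in your factorization
\begin{equation*}
\langle\Phi\rangle^{-s}(H-\lambda\mp i\varepsilon)^{-\ell}\langle\Phi\rangle^{-s}
=\big(\langle\Phi\rangle^{-s}\langle A\rangle^{s}\big)\,
\langle A\rangle^{-s}(H-\lambda\mp i\varepsilon)^{-\ell}\langle A\rangle^{-s}\,
\big(\langle A\rangle^{s}\langle\Phi\rangle^{-s}\big)
\end{equation*}
do not exist as bounded operators, and no localization argument via $J$ or Lemma \ref{lemmecompact} can repair this, because the obstruction lives entirely in the cylindrical end where $\Phi$ and $A$ coincide with their free counterparts.

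The missing idea is that the weight change must be mediated by resolvent powers of $H$, which supply the decay in the momentum variable that $\langle\Phi\rangle^{-s}$ alone cannot provide (heuristically $\langle A\rangle^{s}\lesssim\langle\Phi\rangle^{s}\langle P\rangle^{s}$, and $\langle P\rangle^{s}$ must be absorbed by $(H-z)^{-m}$ with $2m\ge s$). This is how the paper proceeds: one applies the iterated resolvent identity \cite[Eq.~7.4.2]{ABG} around a fixed $\lambda_0\in\R\setminus\sigma(H)$ to write $R(z)=(z-\lambda_0)^{2m}R(\lambda_0)^{m}R(z)R(\lambda_0)^{m}+I(z,\lambda_0,m)$ with $I$ a polynomial in $z$ with $\B(\H)$-valued coefficients, sandwiches with $\langle\Phi\rangle^{-s}$, and inserts $\langle A\rangle^{s}\langle A\rangle^{-s}$ \emph{inside} the product so that only the operator $B:=\langle A\rangle^{s}R(\lambda_0)^{m}\langle\Phi\rangle^{-s}$ is needed; its boundedness for $s\in[0,2m]$ is the content of Lemma \ref{estimation}, proved by commuting $\langle\Phi\rangle^{-k}$ through $(H-z)^{-m}$ (the commutators being differential operators of order $\le 2m$ with bounded coefficients, controlled via \cite[Lemma~1.6]{Sal01}) and then interpolating in $s$. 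With that lemma in hand, the limit $\varepsilon\searrow0$, the local $\Lambda^{k-1+\sigma}$ regularity, and the differentiability statements do follow from Proposition \ref{yeye} exactly by the formal transfer you describe, the polynomial remainder being smooth in $\lambda$. So your proof becomes correct once you replace the false two-weight comparison by the resolvent-mediated comparison of Lemma \ref{estimation}; as written, it does not go through.
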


\begin{proof}
Take $z\in\C\setminus\sigma(H)$, fix $\lambda_0\in\R\setminus\sigma(H)$ and let
$m\in\N^*$ with $2m\ge s$. Then, by applying iteratively $m$-times the formula
\cite[Eq.~7.4.2]{ABG} for the resolvent $R(z):=(H-z)^{-1}$ one obtains that
$$
R(z)=(z-\lambda_0)^{2m}R(\lambda_0)^mR(z)R(\lambda_0)^m+I(z,\lambda_0,m),
$$
where $I(z,\lambda_0,m)$ is a polynomial in $z$ with coefficients in $\B(\H)$. It
follows that
\begin{align}
\langle\Phi\rangle^{-s}R(z)\langle\Phi\rangle^{-s}
&=(z-\lambda_0)^{2m}\langle\Phi\rangle^{-s}R(\lambda_0)^mR(z)R(\lambda_0)^m
\langle\Phi\rangle^{-s}
+\langle\Phi\rangle^{-s}I(z,\lambda_0,m)\langle\Phi\rangle^{-s}\nonumber\\
&=(z-\lambda_0)^{2m}\big\{\langle\Phi\rangle^{-s}R(\lambda_0)^m
\langle A\rangle^s\big\}\langle A\rangle^{-s}R(z)\langle A\rangle^{-s}
\big\{\langle A\rangle^sR(\lambda_0)^m\langle\Phi\rangle^{-s}\big\}\nonumber\\
&\qquad+\langle\Phi\rangle^{-s}I(z,\lambda_0,m)\langle\Phi\rangle^{-s}.\label{bocata}
\end{align}
Furthermore, it is proved in Lemma \ref{estimation} that
$B:=\langle A\rangle^sR(\lambda_0)^m\langle\Phi\rangle^{-s}$ belongs to $\B(\H)$. So,
\eqref{bocata} can be written as
$$
\langle\Phi\rangle^{-s}R(z)\langle\Phi\rangle^{-s}
=(z-\lambda_0)^{2m}B^*\langle A\rangle^{-s}R(z)\langle A\rangle^{-s}B
+\langle\Phi\rangle^{-s}I(z,\lambda_0,m)\langle\Phi\rangle^{-s}.
$$
This last identity (with $z=\lambda\pm i\varepsilon$), together with Proposition
\ref{yeye}, implies the claim.
\end{proof}

\section{Scattering theory}\label{Sec_Scatt}
\setcounter{equation}{0}

In this section, we present the standard short-range scattering theory for our model.
Accordingly, we formulate all our statements in terms of the common exponent
$\mu\equiv\min\{\muL,\muS\}$ to ensure that both the short-range and long-range
perturbations decay at least as $\langle x\rangle^{-\mu}$ at infinity. As usual, the
assumption $\mu>1$ is sufficient to guarantee the existence and the asymptotic
completeness of the wave operators.

\subsection{Existence of the wave operators}

This first subsection deals with the existence of the wave operators and some of
their properties. Mourre theory as developed in the previous section is not necessary
for that part of the investigations. However, once the problem of the asymptotic
completeness will be addressed, all the results obtained so far will be necessary.

We start with two lemmas which will play a key role when proving the existence of the
wave operators. Their statement involves the sets $\D_t\subset\dom(H_0)$, $t\ge0$,
defined by
\begin{equation*}
\D_t:=\big\{\varphi\in\H_t(\R)\otimes\ltwo(\Sigma)\mid\varphi=\eta(H_0)\varphi
\hbox{ for some }\eta\in C^\infty_{\rm c}(\R\setminus\T)\big\}.
\end{equation*}

\begin{Lemma}\label{majT}
Suppose that Assumption \ref{CondsurgV} holds with $\mu>0$, let $\varphi\in\D_s$ with
$s>0$ and take $\mu'<\min\{\mu,s\}$. Then, one has for any $t\in\R$
\begin{equation}\label{h0}
\big\|T\e^{-itH_0}\varphi\big\|_\H\le{\rm Const.}\;\!(1+|t|)^{-\mu'}.
\end{equation}
\end{Lemma}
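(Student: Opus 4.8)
The plan is to reduce the estimate \eqref{h0} to a standard non-stationary phase (or rather non-stationary decay) bound on the free evolution $\e^{-itH_0}$ acting on vectors that are localized in energy away from the thresholds $\T$, combined with the mapping properties of $T$ established in Lemma~\ref{preuvepourRafa}. The key observation is that Lemma~\ref{preuvepourRafa} (applied with $\gamma=\mu'$, which is legitimate since $\mu'<\mu$) tells us that $T\langle\Phi_0\rangle^{\mu'}$ extends to an element of $\B\big(\dom(H_0),\H\big)$; equivalently, writing $\psi_t:=\e^{-itH_0}\varphi$, one has
\begin{equation*}
\big\|T\e^{-itH_0}\varphi\big\|_\H
=\big\|T\langle\Phi_0\rangle^{\mu'}\langle\Phi_0\rangle^{-\mu'}\psi_t\big\|_\H
\le{\rm Const.}\;\!\big\|\langle\Phi_0\rangle^{-\mu'}\psi_t\big\|_{\dom(H_0)}.
\end{equation*}
Since $\varphi\in\D_s$ we have $\varphi=\eta(H_0)\varphi$ for some $\eta\in C^\infty_{\rm c}(\R\setminus\T)$, so $\psi_t=\eta(H_0)\psi_t$ and $H_0\psi_t$ is again of this form with $\eta$ replaced by $\mathrm{id}\cdot\eta\in C^\infty_{\rm c}(\R\setminus\T)$. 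Thus it suffices to prove, for any $\eta\in C^\infty_{\rm c}(\R\setminus\T)$ and any $\varphi\in\H_s(\R)\otimes\ltwo(\Sigma)$ with $\varphi=\eta(H_0)\varphi$, the decay bound
\begin{equation}\label{nsp}
\big\|\langle\Phi_0\rangle^{-\mu'}\e^{-itH_0}\eta(H_0)\varphi\big\|_{\H_0}
\le{\rm Const.}\;\!(1+|t|)^{-\mu'}\|\varphi\|_{\H_s(\R)\otimes\ltwo(\Sigma)},
\end{equation}
together with the analogous bound with $\H_0$ replaced by $\dom(H_0)$ on the left; the latter is obtained by commuting $H_0$ through, using that $H_0\e^{-itH_0}\eta(H_0)=\e^{-itH_0}(\mathrm{id}\,\eta)(H_0)$ and that $(\mathrm{id}\,\eta)\in C^\infty_{\rm c}(\R\setminus\T)$ as well, so that no new difficulty arises.

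To prove \eqref{nsp} I would use the tensorial/fibered structure of $H_0=P^2\otimes1+1\otimes\Delta_\Sigma$ and the decomposition \eqref{decompo}. On the $j$-th fiber, $\e^{-itH_0}\eta(H_0)$ acts as $\e^{-it\tau_j}\e^{-itP^2}\eta(P^2+\tau_j)\otimes\P_j$. Because $\eta$ has compact support in $\R\setminus\T$, only finitely many $j$ contribute (those with $\tau_j$ in the support of $\eta$), and for each such $j$ the function $p\mapsto\eta(p^2+\tau_j)$ is a smooth, compactly supported function of $p\in\R$ whose support is bounded away from $p=0$ (since $\tau_j\notin\supp\eta$). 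On such a set the map $p\mapsto p^2$ has non-vanishing derivative, so a standard one-dimensional non-stationary phase / Mourre-type propagation estimate for $\e^{-itP^2}$ gives, for the position weight $\langle Q\rangle^{-\mu'}$,
\begin{equation*}
\big\|\langle Q\rangle^{-\mu'}\e^{-itP^2}\eta(P^2+\tau_j)f\big\|_{\ltwo(\R)}
\le{\rm Const.}\;\!(1+|t|)^{-\mu'}\|f\|_{\H_s(\R)},
\end{equation*}
valid for $0\le\mu'<\min\{\mu,s\}$ (this is exactly the kind of estimate behind \cite[Lemma~3.6]{Tie06}, and also appears in the proof of Lemma~\ref{passage}; alternatively it follows by interpolation between the trivial bound at $\mu'=0$ and the integer case, which is an integration by parts in the Fourier variable). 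Summing the finitely many contributions and noting that $\langle\Phi_0\rangle^{-\mu'}\le{\rm Const.}\;\!\langle Q\rangle^{-\mu'}\otimes1$ in form sense on the relevant subspace yields \eqref{nsp}.

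The main obstacle, such as it is, is bookkeeping rather than conceptual: one must make sure that the energy cutoff $\eta$ genuinely confines the transverse quantum numbers to a finite set and the longitudinal momentum to a compact set avoiding the origin, so that the abstract free decay estimate applies uniformly in $j$; and one must track that the Sobolev order $s$ on the longitudinal variable is exactly what is needed to absorb $\mu'$ copies of $\langle t\rangle$, which forces the hypothesis $\mu'<s$. Once \eqref{nsp} and its $\dom(H_0)$-analogue are in hand, the reduction in the first paragraph via Lemma~\ref{preuvepourRafa} closes the argument immediately, giving \eqref{h0} with the constant depending on $\varphi$ only through $\|\varphi\|_{\H_s(\R)\otimes\ltwo(\Sigma)}$ and on $\eta$.
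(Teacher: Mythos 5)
You follow essentially the same route as the paper: strip off $T$ via Lemma \ref{preuvepourRafa} at the cost of a $\langle\Phi_0\rangle$-weight and a graph ($\dom(H_0)$) norm, use the cutoff $\eta\in C^\infty_{\rm c}(\R\setminus\T)$ to reduce to finitely many transverse modes whose longitudinal momentum support avoids the origin, and conclude with a weighted decay estimate for $\e^{-itP^2}$ --- the paper does exactly this, quoting \cite[Lemma~9]{ACS87} for the one-dimensional bound.

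The one step where your justification is incomplete is the passage from your intermediate $\H_0$-estimate to its $\dom(H_0)$-analogue. The graph norm requires a bound on $H_0\langle\Phi_0\rangle^{-\mu'}\e^{-itH_0}\eta(H_0)\varphi$, and $H_0$ does not commute with the weight: besides the term $\langle\Phi_0\rangle^{-\mu'}\e^{-itH_0}(\mathrm{id}\cdot\eta)(H_0)\varphi$ that you treat by commuting $H_0$ through the cutoff, the commutator $[P^2\otimes1,\langle Q\rangle^{-\mu'}\otimes1]$ produces a first-order contribution $b(Q)P\otimes1$ (plus a zero-order one) with $|b(x)|\le{\rm Const.}\;\!\langle x\rangle^{-\mu'-1}$, so you must also control terms of the form $\big\|\langle Q\rangle^{-\mu'}\e^{-itP^2}P\;\!\eta_j(P^2)\langle Q\rangle^{-s}\big\|_{\B(\ltwo(\R))}$, with $\eta_j:=\eta(\;\!\cdot\;\!+\tau_j)$. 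This is precisely the term \eqref{h2} in the paper's proof, handled there by writing $P\;\!\eta_j(P^2)\langle Q\rangle^{-s}=\big\{\langle P\rangle\eta_j(P^2)\langle Q\rangle^{-s}\big\}\big\{\langle Q\rangle^sP\langle P\rangle^{-1}\langle Q\rangle^{-s}\big\}$ and invoking \cite[Lemma~1]{ACS87} in addition to \cite[Lemma~9]{ACS87}; the repair is routine and stays entirely within your scheme, so this is an omission rather than a flaw of approach. A final cosmetic point: since you apply Lemma \ref{preuvepourRafa} with $\gamma=\mu'$, the decay you can extract is capped by the left weight $\mu'$; if the one-dimensional estimate loses an arbitrarily small power (the strict inequality $\mu'<\min\{\mu,s\}$ in the paper's use of \cite[Lemma~9]{ACS87} suggests it may), simply rerun your argument with an intermediate weight $\mu''\in(\mu',\min\{\mu,s\})$, or take $\gamma=\mu$ outright as the paper does.
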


\begin{proof}
Let $\varphi\in\D_s$. Then, one deduces from Lemma \ref{preuvepourRafa} that
$$
\big\|T\e^{-itH_0}\varphi\big\|_\H\le{\rm Const.}\;\!
\big\|\langle H_0\rangle\langle\Phi_0\rangle^{-\mu}\e^{-itH_0}\varphi\big\|_{\H_0}.
$$
Furthermore, since $\D_s\subset\D_0$ there exist
$\eta\in C_{\rm c}^\infty(\R\setminus\T)$ and $j_0\in\N$ such that
$$
\e^{-itH_0}\varphi
=\sum_{j=0}^{j_0}\big(\e^{-it(P^2+\tau_j)}\otimes \P_j\big)\;\!\eta(H_0)\varphi
=\sum_{j=0}^{j_0}\left(\e^{-it(P^2+\tau_j)}\eta(P^2+\tau_j)\otimes\P_j\right)\varphi.
$$
As a consequence, one obtains that
$$
\big\|T\e^{-itH_0}\varphi\big\|_\H
\le{\rm Const.}\;\!\sum_{j=0}^{j_0}\big\|\big\langle P^2+\tau_j\big\rangle
\langle Q\rangle^{-\mu}\e^{-it(P^2+\tau_j)}\eta_j(P^2)\langle Q\rangle^{-s}
\big\|_{\B(\ltwo(\R))}\;\!,
$$
with $\eta_j:=\eta(\;\!\cdot\;\!+\tau_j)$. Then, some commutators calculations lead
to the estimate
\begin{align}
\big\|T\e^{-itH_0}\varphi\big\|_\H
&\le{\rm Const.}\;\!\sum_{j=0}^{j_0}\big\|\langle Q \rangle^{-\mu}\e^{-itP^2}P^2
\eta_j(P^2)\langle Q\rangle^{-s}\big\|_{\B(\ltwo(\R))}\label{h1}\\
&\quad+{\rm Const.}\;\!\sum_{j=0}^{j_0}\big\|\langle Q \rangle^{-\mu}\e^{-itP^2}P
\eta_j(P^2)\langle Q\rangle^{-s}\big\|_{\B(\ltwo(\R))}\label{h2}\\
&\quad+{\rm Const.}\;\!\sum_{j=0}^{j_0}\big\|\langle Q \rangle^{-\mu} \e^{-itP^2}
\eta_j(P^2)\langle Q\rangle^{-s}\big\|_{\B(\ltwo(\R))}\label{h3}.
\end{align}
Since $0\notin\supp(\eta_j)$, one can apply \cite[Lemma~9]{ACS87} to infer that
\eqref{h1} and \eqref{h3} are bounded by the r.h.s. of \eqref{h0} with
$\mu'<\min\{\mu,s\}$. For \eqref{h2}, one first uses the equality
$$
P\;\!\eta_j(P^2)\langle Q\rangle^{-s}
=\big\{\langle P\rangle\;\!\eta_j(P^2)\langle Q\rangle^{-s}\big\}
\big\{\langle Q\rangle^sP\langle P\rangle^{-1}\langle Q\rangle^{-s}\big\},
$$
and then the same bound can be obtained by taking  \cite[Lemma~9]{ACS87} and
\cite[Lemma~1]{ACS87} into account.
\end{proof}

For the next lemma, we introduce the subspaces $\H_0^\pm$ of $\H_0$ given by
\begin{equation}\label{def_H+-}
\H_0^\pm
:=\big\{\varphi\in\H_0\mid\supp(\F\otimes1)\varphi\subset\R_\pm\times\Sigma\big\},
\end{equation}
where $\R_+:=(0,\infty)$ and $\R_-:=(-\infty,0)$.

\begin{Lemma}\label{deAJ}
Let $s>0$ and $\varphi_\pm\in\D_s\cap\H_0^{\pm}$. Then, one has
$$
\big\|(J^*J-1)\e^{-itH_0}\varphi_\pm\big\|_{\H_0}\le{\rm Const.}\;\!(1+|t|)^{-s}
\quad\hbox{for any }\,t\in\R_\pm.
$$
\end{Lemma}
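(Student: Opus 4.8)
The statement asserts a decay estimate for $(J^*J-1)\e^{-itH_0}\varphi_\pm$ when $\varphi_\pm\in\D_s\cap\H_0^\pm$ and $t\in\R_\pm$. The key structural fact is that $J^*J-1$ acts as a multiplication operator supported away from $M_\infty$; more precisely, recalling from the proof of Lemma \ref{JJ*} that $JJ^*=\chi_\infty\;\!\iota^*(\j^2\otimes1)$, a direct computation on $\H_0$ gives $J^*J=(\j^2\otimes1)$ as a multiplication operator (since $J$ is an isometry on vectors supported in $(2,\infty)\times\Sigma$ and $\j\equiv1$ there). Hence $J^*J-1=\big((\j^2-1)\otimes1\big)$, a multiplication operator by a function of $x$ alone that is supported in $(-\infty,2]$ (since $\j(x)=1$ for $x\ge2$). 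The whole estimate therefore reduces to showing that the evolution $\e^{-itH_0}\varphi_\pm$, when tested against a cutoff localizing in the region $x\le 2$, decays like $(1+|t|)^{-s}$ for $t\in\R_\pm$. This is a standard non-stationary-phase / free propagation estimate, reflecting the fact that a state in $\H_0^+$ (positive momentum in the $x$-variable) propagates to the right as $t\to+\infty$ and thus leaves any left half-line.

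\textbf{Main steps.} First I would record the identity $J^*J-1=\big((\j^2-1)\otimes1\big)$ and note $\supp(\j^2-1)\subset(-\infty,2]$. Second, I would use the tensorial decomposition of $\e^{-itH_0}$: since $\varphi_\pm\in\D_s$, there are $\eta\in C^\infty_{\rm c}(\R\setminus\T)$ and $j_0\in\N$ with $\e^{-itH_0}\varphi_\pm=\sum_{j=0}^{j_0}\big(\e^{-it(P^2+\tau_j)}\eta(P^2+\tau_j)\otimes\P_j\big)\varphi_\pm$, exactly as in the proof of Lemma \ref{majT}. This reduces the problem to bounding, for each $j$, the one-dimensional norm $\big\|\chi_{(-\infty,2]}\,\e^{-itP^2}\eta_j(P^2)\langle Q\rangle^{-s}\big\|_{\B(\ltwo(\R))}$ acting on the $x$-component of $\varphi_\pm$, where $\eta_j:=\eta(\;\!\cdot\;\!+\tau_j)$ satisfies $0\notin\supp(\eta_j)$, and where the $x$-component of $\varphi_\pm$ has Fourier transform supported in $\R_\pm$ by the definition \eqref{def_H+-} of $\H_0^\pm$. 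Third, and this is the crux, I would invoke a one-dimensional free-propagation estimate (of the type in \cite[Lemma~9]{ACS87}, or a direct non-stationary-phase argument): for $f\in\ltwo(\R)$ with $\widehat f$ supported in $(0,\infty)$, cut off to an energy shell $\eta_j(P^2)$ away from zero momentum, the quantity $\chi_{(-\infty,2]}\e^{-itP^2}f$ decays in $t\ge 0$ as fast as $\langle Q\rangle^{-s}f$ is regular—that is, $\big\|\langle Q\rangle^s\,\chi_{(-\infty,2]}\,\e^{-itP^2}\eta_j(P^2)\langle Q\rangle^{-s}\big\|_{\B(\ltwo(\R))}\le{\rm Const.}\;\!(1+t)^{-s}$ for $t\ge0$ (and symmetrically for $t\le0$ with $\widehat f$ supported in $(-\infty,0)$). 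The mechanism is that the group velocity $2\xi$ is bounded away from zero and has a fixed sign on $\supp\widehat f\cap\supp\eta_j(\cdot^2)$, so the wave packet moves ballistically away from the region $x\le 2$, and the separation grows linearly in $t$, which the cutoff $\chi_{(-\infty,2]}$ converts into $(1+t)^{-s}$ decay against an $\langle Q\rangle^{-s}$ weight.

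\textbf{Expected obstacle.} The routine part is the algebra identifying $J^*J-1$ and unpacking the tensor decomposition; the substantive point is the one-dimensional propagation estimate with the \emph{sharp} rate $(1+|t|)^{-s}$ and the correct $t\in\R_\pm$ dependence tied to the sign condition $\supp(\F\otimes1)\varphi_\pm\subset\R_\pm\times\Sigma$. One must be careful that the cutoff $\chi_{(-\infty,2]}$ is a sharp characteristic function rather than a smooth one, so a clean way to proceed is to dominate it by a smooth function supported in $(-\infty,3]$ and then apply the smooth-cutoff version of \cite[Lemma~9]{ACS87} (together with \cite[Lemma~1]{ACS87} if a factor of $P$ or $\langle P\rangle$ needs to be moved across, though here no derivative factor appears so this is lighter than in Lemma \ref{majT}). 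I expect the proof to be shorter than that of Lemma \ref{majT}: no second-order differential operator intervenes, only the bare propagator against complementary spatial supports, so after the reduction the estimate follows directly from the cited lemmas of \cite{ACS87}.
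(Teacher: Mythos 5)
Your first two steps match the paper's proof: the identity $J^*J-1=(\j^2-1)\otimes1$, a multiplication operator by a function of $x$ supported in $(-\infty,2]$ with $|\j^2-1|\le1$, and the reduction to a one-dimensional propagation estimate for $\e^{-itP^2}$ against the cutoff $\chi_{(-\infty,2]}$, with the sign of the momentum support tied to the sign of $t$. (The energy-shell decomposition via $\eta(H_0)$ is harmless but not needed here; only the weight $\langle Q\rangle^s$ and the Fourier-support condition from \eqref{def_H+-} enter.) The gap is in the crux estimate. As you state it, the bound
$\big\|\langle Q\rangle^s\chi_{(-\infty,2]}\e^{-itP^2}\eta_j(P^2)\langle Q\rangle^{-s}\big\|\le{\rm Const.}\,(1+t)^{-s}$
is false even for states with momentum support in a compact subset of $(0,\infty)$: take a wave packet of unit norm concentrated near $x=-R$ with momentum near $\xi_0>0$ and evaluate at $t\approx R/(4\xi_0)$; the evolved state sits near $x\approx-R/2$, still inside $\{x\le2\}$, so the left-hand side is of order $(R/2)^s$ while the right-hand side is of order $(1+t)^{-s}R^s\approx(4\xi_0)^s$, which stays bounded as $R\to\infty$. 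The extra weight $\langle Q\rangle^s$ on the left must simply be dropped; what you actually need is
$\big\|\chi_{(-\infty,2]}\e^{-itP^2}\psi\big\|_{\ltwo(\R)}\le{\rm Const.}\,(1+t)^{-s}$ for $t\ge0$, $\psi\in\H_s(\R)$, $\supp\F\psi\subset\R_+$ (and the mirror statement for $t\le0$, $\supp\F\psi\subset\R_-$).

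Your proposed justification of the crux also does not go through as described: \cite[Lemma~9]{ACS87} (even in a smooth-cutoff version) is a two-sided estimate requiring a \emph{decaying} weight $\langle Q\rangle^{-\mu}$ on the left, whereas $\chi_{(-\infty,2]}$, or any smooth majorant supported in $(-\infty,3]$, is identically $1$ on a half-line and does not decay as $x\to-\infty$; that lemma therefore cannot produce the one-sided bound, and indeed the needed decay comes from the correlation between the sign of the momenta and the sign of $t$, not from spatial weights on both sides. The correct ingredient is exactly the one-sided propagation estimate of \cite[Sec.~II.A, Eqs.~17~\&~20]{AJ07}, which holds without any cutoff $\eta_j$ keeping momenta away from zero, and this is what the paper invokes after extending it to the tensor product $\ltwo(\R)\odot\ltwo(\Sigma)$. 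Your heuristic (ballistic motion away from $\{x\le2\}$, with the far-left tail controlled by the $\langle Q\rangle^{-s}$ weight, the contribution from $y\lesssim-ct$ being bounded by $\langle y\rangle^{-s}\lesssim(1+t)^{-s}$) is the right mechanism and could be turned into a direct proof of the correct (unweighted-on-the-left) estimate, but as written the key inequality is overstated and its cited source does not cover it, so the proof is not complete without that repair.
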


\begin{proof}
The proof of this statement relies on estimates obtained in \cite[Sec.~II.A]{AJ07} in
the context of $1$-dimensional anisotropic scattering. In \cite[Eq.~17]{AJ07} it is
proved that if $\psi\in\H_s(\R)$ with $\supp(\F\psi)\subset\R_+$, then one has for
each $x_0\in\R$ and $t>0$
$$
\big\|\chi_{(-\infty,x_0)}\e^{-itP^2}\psi\big\|_{\ltwo(\R)}
\le{\rm Const.}\;\!(1+|t|)^{-s}.
$$
A similar estimate with $t<0$ also holds if $\psi\in\H_s(\R)$ and
$\supp(\F\psi)\subset\R_-$ (see \cite[Eq.~20]{AJ07}).

Now, it is easily observed that
$$
J^*J-1
=\big(\j^2-1\big)\otimes1
=\chi_{(-\infty,2)}\big(\j^2-1\big)\otimes1.
$$
So, one obtains
$$
\big\|(J^*J-1)\e^{-itH_0}\varphi\big\|_{\H_0}
\le\big\|\big(\chi_{(-\infty,2)}\e^{-itP^2}\otimes1\big)\varphi\big\|_{\H_0}
$$
for any $t\in\R$ and $\varphi\in\H_0$. This, together with the extensions of the
mentioned estimates to the algebraic tensor product $\ltwo(\R)\odot\ltwo(\Sigma)$,
implies the claim for vectors $\varphi_\pm\in\D_s\cap\H_0^{\pm}$ and $t\in\R_\pm$.
\end{proof}

\begin{Proposition}[Existence of the wave operators]\label{existence}
Suppose that Assumption \ref{CondsurgV} holds with $\mu>1$. Then, the generalized
wave operators
$$
W_\pm:=\slim_{t\to\pm\infty}\e^{itH}J\e^{-itH_0}
$$
exist and are partial isometries with initial subspaces $\H_0^\pm$.
\end{Proposition}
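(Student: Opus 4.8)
The plan is to use Cook's method together with a density argument. First I would observe that the set
$\D_s\cap\H_0^\pm$ is dense in $\H_0^\pm$ for any $s>0$; indeed, vectors of the form $\eta(H_0)\varphi$ with $\eta\in C^\infty_{\rm c}(\R\setminus\T)$ and $\varphi$ a Schwartz-type vector with Fourier support in $\R_\pm\times\Sigma$ are dense in $\H_0^\pm$, and such vectors automatically belong to $\H_t(\R)\otimes\ltwo(\Sigma)$ for all $t$. Since $\|J\|_{\B(\H_0,\H)}=1$ and the operators $\e^{itH}J\e^{-itH_0}$ are uniformly bounded, it suffices to prove existence of the strong limits on this dense set and to identify the initial subspaces afterwards.

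Next, fix $\varphi\in\D_s\cap\H_0^\pm$ with $s>1$ (possible since $\mu>1$, and we only need $s>1$). By Lemma \ref{preuvepourRafa} we know $J\in\B\big(\dom(H_0),\dom(H)\big)$, so $t\mapsto\e^{itH}J\e^{-itH_0}\varphi$ is strongly differentiable, with derivative $\e^{itH}(iHJ-iJH_0)\e^{-itH_0}\varphi=i\e^{itH}T\e^{-itH_0}\varphi$, where $T=HJ-JH_0$ as defined in \eqref{defT0}. Hence for $t_1<t_2$ in $\R_\pm$ one has
$$
\big\|\e^{it_2H}J\e^{-it_2H_0}\varphi-\e^{it_1H}J\e^{-it_1H_0}\varphi\big\|_\H
\le\int_{t_1}^{t_2}\big\|T\e^{-itH_0}\varphi\big\|_\H\;\!\d t.
$$
By Lemma \ref{majT}, the integrand is bounded by ${\rm Const.}\;\!(1+|t|)^{-\mu'}$ for any $\mu'<\min\{\mu,s\}$; choosing $\mu'\in(1,\min\{\mu,s\})$ makes this integrable on $\R_\pm$, so the Cauchy criterion gives the existence of $W_\pm\varphi$. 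By density this yields the existence of $W_\pm$ on all of $\H_0$ (with $W_\pm$ vanishing, a priori, on $(\H_0^\pm)^\perp\supset\H_0^\mp$ — to be made precise below).

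Finally I would identify the initial subspaces and the partial-isometry property. Since $\H_0$ admits the orthogonal decomposition $\H_0=\H_0^+\oplus\H_0^-$ (up to the null set $\supp(\F\otimes1)\varphi\subset\{0\}\times\Sigma$, which has measure zero), it suffices to show that $W_\pm$ acts isometrically on $\H_0^\pm$ and trivially on $\H_0^\mp$. For the isometry on $\H_0^\pm$: for $\varphi\in\D_s\cap\H_0^\pm$ and $t\in\R_\pm$,
$$
\big\|\e^{itH}J\e^{-itH_0}\varphi\big\|_\H^2
=\big\|J\e^{-itH_0}\varphi\big\|_\H^2
=\big\langle\e^{-itH_0}\varphi,J^*J\e^{-itH_0}\varphi\big\rangle_{\H_0}
=\|\varphi\|_{\H_0}^2+\big\langle\e^{-itH_0}\varphi,(J^*J-1)\e^{-itH_0}\varphi\big\rangle_{\H_0},
$$
and by Lemma \ref{deAJ} the last term is $O\big((1+|t|)^{-s}\big)\to0$ as $t\to\pm\infty$; hence $\|W_\pm\varphi\|_\H=\|\varphi\|_{\H_0}$, and this extends to $\H_0^\pm$ by density. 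Conversely, for $\varphi\in\H_0^\mp$ one uses the same computation: $J^*J-1=(\j^2-1)\otimes1=\chi_{(-\infty,2)}(\j^2-1)\otimes1$, so as in the proof of Lemma \ref{deAJ},
$$
\big\|J\e^{-itH_0}\varphi\big\|_{\H_0}
\le\big\|\big(\chi_{(-\infty,2)}\e^{-itP^2}\otimes1\big)\varphi\big\|_{\H_0},
$$
and for $\varphi\in\H_0^\mp$ and $t\to\pm\infty$ (the ``wrong'' time direction for the Fourier support) the propagation estimates of \cite{AJ07} used in Lemma \ref{deAJ} force this quantity to tend to $0$; hence $W_\pm\varphi=0$ on a dense subset of $\H_0^\mp$, and therefore on all of $\H_0^\mp$. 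Combining these facts, $W_\pm$ is a partial isometry with initial subspace exactly $\H_0^\pm$. The main obstacle is purely the input Lemmas \ref{majT} and \ref{deAJ}; granting those, the argument above is routine Cook-method bookkeeping, the only mild subtlety being the careful handling of the ``wrong-direction'' decay on $\H_0^\mp$ to pin down the initial subspaces precisely.
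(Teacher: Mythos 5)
Your overall strategy coincides with the paper's: Cook's method combined with Lemma \ref{majT} for existence on a dense set, Lemma \ref{deAJ} for the isometry on $\H_0^\pm$, and a propagation argument to kill $W_\pm$ on $\H_0^\mp$. The existence and isometry parts are correct (and note that the Cook estimate needs no Fourier-support restriction: Lemma \ref{majT} applies to every $\varphi\in\D_s$, so existence on all of $\H_0$ follows at once from the density of $\D_s$ in $\H_0$).

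There is, however, a genuine error in your argument that $W_\pm\H_0^\mp=\{0\}$. The inequality $\big\|J\e^{-itH_0}\varphi\big\|\le\big\|\big(\chi_{(-\infty,2)}\e^{-itP^2}\otimes1\big)\varphi\big\|_{\H_0}$ is false: $J\psi$ depends only on $(\j\otimes1)\psi$, i.e.\ on the restriction of $\psi$ to $(1,\infty)\times\Sigma$, and one has $\|J\psi\|_\H=\|(\j\otimes1)\psi\|_{\H_0}\le\|(\chi_{(1,\infty)}\otimes1)\psi\|_{\H_0}$; the identity $J^*J-1=\chi_{(-\infty,2)}(\j^2-1)\otimes1$ controls $J^*J-1$, not $J$ itself (test your inequality at $t=0$ with $\varphi$ supported in $(3,\infty)\times\Sigma$: the left-hand side equals $\|\varphi\|_{\H_0}$ while the right-hand side vanishes). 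Moreover the cutoff sits on the wrong side even as a heuristic: for $\varphi\in\H_0^-$ and $t\to+\infty$ the freely evolving state concentrates on the left half-line, so $\big\|\big(\chi_{(-\infty,2)}\e^{-itP^2}\otimes1\big)\varphi\big\|_{\H_0}$ tends to $\|\varphi\|_{\H_0}$, not to $0$, and the estimates of \cite{AJ07} invoked in Lemma \ref{deAJ} (Eqs.~17 and 20, which give decay of the \emph{left}-localized mass of a \emph{right}-moving state for $t>0$, and its mirror) say nothing about this quantity. The correct step, which is what the paper does, is to bound $\big\|\e^{itH}J\e^{-itH_0}\varphi_\mp\big\|_\H\le{\rm Const.}\;\!\big\|\big(\chi_{(0,\infty)}\e^{-itP^2}\otimes1\big)\varphi_\mp\big\|_{\H_0}$ and to use the complementary estimates of \cite{AJ07} (Eqs.~18--19): the mass on the right half-line of a state with $\supp\F\psi\subset\R_\mp$ decays like $(1+|t|)^{-s}$ as $t\to\pm\infty$. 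With that replacement (and the cosmetic fix that the norm of $J\e^{-itH_0}\varphi$ is taken in $\H$, not $\H_0$) your proof matches the paper's.
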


\begin{proof}
The existence of the wave operators is based on the Cook-Kuroda method. One first
observes that, since $J\in\B\big(\dom(H_0),\dom(H)\big)$, the following equality
holds for any $\varphi\in\dom(H_0):$
$$
\e^{itH}J\e^{-itH_0}\varphi=J\varphi+i\int_0^t\d s\,\e^{isH}T\e^{-isH_0}\varphi.
$$
Furthermore, if $\varphi\in\D_\mu \subset\dom(H_0)$ it follows from Lemma \ref{majT}
that there exists $\mu'\in(1,\mu)$ such that
$$
\int_{-\infty}^\infty\d s\,\big\|\e^{isH}T\e^{-isH_0}\varphi\big\|_\H
\le{\rm Const.}\int_{-\infty}^\infty\d s\,(1+|s|)^{-\mu'}<\infty.
$$
Since $\D_\mu$ is dense in $\H_0$, this estimate implies the existence of both wave
operators $W_\pm$.

We now show that $W_\pm\H_0^\mp=\{0\}$. Assume that $\varphi_\pm\in\D_s\cap\H_0^\pm$
for some $s>0$. Then, one has
\begin{align*}
\|W_\pm\varphi_\mp\|_\H
=\lim_{t\to\pm\infty}\big\|\e^{itH}J\e^{-itH_0}\varphi_\mp\big\|_\H
&=\lim_{t\to\pm\infty}\big\|J\chi_{(0,\infty)}\e^{-itH_0}\varphi_\mp\big\|_\H\\
&\le{\rm Const.}\;\!\lim_{t\to\pm\infty}
\big\|\big(\chi_{(0,\infty)}\e^{-itP^2}\otimes1\big)\varphi_\mp\big\|_{\H_0}\\
&\le{\rm Const.}\;\!\lim_{t\to\pm\infty}(1+|t|)^{-s}\\
&=0,
\end{align*}
where we have used for the last inequality the extension of the estimates
\cite[Eq.~18~\&~19]{AJ07} to the algebraic tensor product
$\ltwo(\R)\odot\ltwo(\Sigma)$. Since $\D_s\cap\H_0^\mp$ is dense in $\H_0^\mp$, one
infers that $W_\pm\H_0^\mp=\{0\}$.

Finally, we show that $\|W_\pm\varphi_\pm\|_\H=\|\varphi_\pm\|_{\H_0}$ for each
$\varphi_\pm\in\H_0^\pm$. One has for any $\varphi_\pm\in\D_s\cap\H_0^\pm$
\begin{align*}
\big|\|W_\pm\varphi_\pm\|^2_\H-\|\varphi_\pm\|^2_{\H_0}\big|
&=\lim_{t\to\pm\infty}
\big|\big\langle\e^{-itH_0}\varphi_\pm,(J^*J-1)\e^{-itH_0}\varphi_\pm\big\rangle_{\H_0}\big|\\
&\le{\rm Const.}\;\!\lim_{t\to\pm\infty}
\big\|(J^*J-1)\e^{-itH_0}\varphi_\pm\big\|_{\H_0}\\
&=0,
\end{align*}
due to Lemma \ref{deAJ}. So, the statement follows by the density of
$\D_s\cap\H_0^{\pm}$ in $\H_0^\pm$.
\end{proof}

Finally, we present an estimate which is going to play an important role when proving
the existence of the time delay. Its proof relies on estimates obtained so far in
this section.

\begin{Lemma}\label{condL1}
Suppose that Assumption \ref{CondsurgV} holds with $\mu>2$. Then, one has for any
$\varphi_\pm\in\D_\mu\cap\H_0^\pm$
$$
\big\|\big(J^*W_\pm-1\big)\e^{-itH_0}\varphi_\pm\big\|_{\H_0}\in\lone(\R_\pm,\d t).
$$
\end{Lemma}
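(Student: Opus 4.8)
The plan is to write $(J^*W_\pm-1)\e^{-itH_0}\varphi_\pm$ as a telescoping integral and then control its $\H_0$-norm by a function integrable over $\R_\pm$. First I would use the integral representation of the wave operator already appearing in the proof of Proposition \ref{existence}: for $\varphi_\pm\in\D_\mu\subset\dom(H_0)$ one has
$$
W_\pm\e^{-itH_0}\varphi_\pm
=\e^{itH}J\e^{-itH_0}\varphi_\pm
+i\int_t^{\pm\infty}\d s\,\e^{isH}T\e^{-isH_0}\varphi_\pm,
$$
so that, applying $\e^{-itH}J^*$ and rearranging,
$$
\big(J^*W_\pm-1\big)\e^{-itH_0}\varphi_\pm
=\big(J^*J-1\big)\e^{-itH_0}\varphi_\pm
+i\int_t^{\pm\infty}\d s\,\e^{-itH}J^*\e^{isH}T\e^{-isH_0}\varphi_\pm.
$$
The first term is handled directly by Lemma \ref{deAJ}: since $\varphi_\pm\in\D_\mu\cap\H_0^\pm$ with $\mu>2$, that lemma gives the bound ${\rm Const.}\,(1+|t|)^{-\mu}$ for $t\in\R_\pm$, which is $\lone(\R_\pm,\d t)$ because $\mu>2>1$.

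\textbf{The integral remainder.} For the second term, I would estimate in norm, using $\|\e^{-itH}J^*\e^{isH}\|_{\B(\H,\H_0)}\le\|J\|=1$, to get
$$
\Big\|\int_t^{\pm\infty}\d s\,\e^{-itH}J^*\e^{isH}T\e^{-isH_0}\varphi_\pm\Big\|_{\H_0}
\le\int_{|t|}^\infty\d r\,\big\|T\e^{\mp irH_0}\varphi_\pm\big\|_\H .
$$
Now Lemma \ref{majT} applies: with $\varphi_\pm\in\D_\mu$ and $\mu>2$, pick $\mu'\in(2,\mu)$ (possible since $\min\{\mu,\mu\}=\mu>2$), giving $\|T\e^{-isH_0}\varphi_\pm\|_\H\le{\rm Const.}\,(1+|s|)^{-\mu'}$, hence the tail integral is bounded by ${\rm Const.}\,(1+|t|)^{1-\mu'}$. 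Since $\mu'>2$ we have $1-\mu'<-1$, so $(1+|t|)^{1-\mu'}\in\lone(\R_\pm,\d t)$. Adding the two contributions and invoking the triangle inequality yields the claim.

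\textbf{Main obstacle.} The only genuinely delicate point is making the telescoping identity rigorous, i.e. justifying the representation $W_\pm\e^{-itH_0}\varphi_\pm=\e^{itH}J\e^{-itH_0}\varphi_\pm+i\int_t^{\pm\infty}\e^{isH}T\e^{-isH_0}\varphi_\pm\,\d s$ as a strong (Bochner) integral. This follows from the Cook–Kuroda computation already used in Proposition \ref{existence}, combined with the integrability $\int_{\R}\|\e^{isH}T\e^{-isH_0}\varphi_\pm\|_\H\,\d s<\infty$ (from Lemma \ref{majT}, which makes the integrand a genuine $\lone$-valued function and lets one push the limit $t\to\pm\infty$ inside), and with the fact that $T\varphi=(HJ-JH_0)\varphi$ on $\S(\R)\odot C^\infty(\Sigma)$ together with $J\in\B(\dom(H_0),\dom(H))$ (the closing remark after Lemma \ref{preuvepourRafa}) so that $\e^{isH}J\e^{-isH_0}\varphi_\pm$ is strongly $C^1$ in $s$ with derivative $i\e^{isH}T\e^{-isH_0}\varphi_\pm$. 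Everything else is a routine two-term triangle-inequality estimate, and the density of $\D_\mu\cap\H_0^\pm$ is not needed here since the statement is already phrased for $\varphi_\pm$ in that set.
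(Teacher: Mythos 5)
Your proposal is correct and follows essentially the same route as the paper: the decomposition $J^*W_\pm-1=J^*(W_\pm-J)+(J^*J-1)$, with the Cook-type integral term controlled by Lemma \ref{majT} (choosing $\mu'\in(2,\mu)$ so the tail is $O\big((1+|t|)^{1-\mu'}\big)$) and the term $(J^*J-1)\e^{-itH_0}\varphi_\pm$ controlled by Lemma \ref{deAJ}. No gaps; your extra care in justifying the integral representation of $W_\pm$ is implicit in the paper's Cook--Kuroda argument from Proposition \ref{existence}.
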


\begin{proof}
Let $\varphi\in\D_\mu$. Then, we know from Lemma \ref{majT} that there exists
$\mu'\in(2,\mu)$ such that
$$
\big\|J^*(W_--J)\e^{-itH_0}\varphi\big\|_{\H_0}
\le{\rm Const.}\;\!\int_{-\infty}^t\d s\,\big\|T\e^{-isH_0}\varphi\big\|_\H
\le{\rm Const.}\;\!\int_{-\infty}^t\d s\,(1+|s|)^{-\mu'}
\in\lone(\R_-,\d t).
$$
A similar argument shows that $\|J^*(W_+-J)\e^{-itH_0}\varphi\|_{\H_0}$ belongs to
$\lone(\R_+,\d t)$. Furthermore, one obtains from Lemma
\ref{deAJ} that
$$
\big\|(J^*J-1)\e^{-itH_0}\varphi_\pm\big\|_{\H_0}\in\lone(\R_\pm,\d t).
$$
for each $\varphi_\pm\in\D_\mu\cap\H_0^\pm$. Since
$J^*W_\pm-1=J^*(W_\pm-J)+(J^*J-1)$, the combination of both estimates implies the
claim.
\end{proof}

\subsection{Asymptotic completeness of the wave operators}

We establish in this subsection the asymptotic completeness of the wave operators
$W_\pm$ by applying the abstract criterion \cite[Prop.~5.1]{RT12}. To do so, we need
two preliminary lemmas.

\begin{Lemma}\label{small}
One has $\slim_{t\to\pm\infty}(JJ^*-1)\e^{-itH}P_{\rm ac}(H)=0$.
\end{Lemma}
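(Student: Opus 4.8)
The plan is to show that the operator family $(JJ^*-1)\e^{-itH}P_{\rm ac}(H)$ tends strongly to zero by a standard density and uniform-boundedness argument, using the fact that $JJ^*-1$ acts as multiplication by a compactly supported smooth function on $M$. First I would note that $JJ^*-1$ is bounded and that $\|\e^{-itH}P_{\rm ac}(H)\|\le1$, so it suffices to verify the convergence on a dense subset of $\H$; by the density of $\bigcup_{\eta}\eta(H)P_{\rm ac}(H)\H$ (over $\eta\in C^\infty_{\rm c}(\R\setminus\kappa(H))$) in $P_{\rm ac}(H)\H$, and since vectors in the range of $1-P_{\rm ac}(H)$ trivially give zero, it is enough to treat vectors of the form $\psi=\eta(H)\phi$ with $\eta\in C^\infty_{\rm c}(\R\setminus\kappa(H))$.

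For such $\psi$, I would write
\[
(JJ^*-1)\e^{-itH}\psi=\big[(JJ^*-1)(H+i)^{-1}\big]\,(H+i)\,\e^{-itH}\eta(H)P_{\rm ac}(H)\phi .
\]
By Lemma \ref{JJ*}'s proof (or directly Lemma \ref{lemmecompact}, since $JJ^*-1=\chi_\infty\iota^*(\j^2-1)\otimes1$ has compactly supported, hence $\linf(\Sigma)$-vanishing-at-infinity, coefficients), the operator $K:=(JJ^*-1)(H+i)^{-1}$ is compact on $\H$. On the other hand, the vector $\zeta_t:=(H+i)\e^{-itH}\eta(H)P_{\rm ac}(H)\phi=\e^{-itH}\big[(H+i)\eta(H)\big]P_{\rm ac}(H)\phi$ has norm bounded uniformly in $t$ (since $(H+i)\eta(H)$ is bounded) and, being of the form $\e^{-itH}\xi$ with $\xi\in P_{\rm ac}(H)\H$, converges weakly to $0$ as $t\to\pm\infty$ by the Riemann–Lebesgue lemma applied to the absolutely continuous spectral measure of $H$. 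Since a compact operator maps weakly convergent sequences to norm convergent ones, $K\zeta_t\to0$ in norm, which is exactly $(JJ^*-1)\e^{-itH}\psi\to0$.

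Finally I would combine the two observations: on the dense set of vectors $\psi=\eta(H)P_{\rm ac}(H)\phi$ the convergence holds, on $\Ker P_{\rm ac}(H)$ it is trivial, and by the uniform bound $\|(JJ^*-1)\e^{-itH}P_{\rm ac}(H)\|\le\|JJ^*-1\|$ an $\varepsilon/3$-argument extends the strong convergence to all of $\H$. The only mild subtlety — and the one point worth stating carefully — is the weak convergence $\e^{-itH}\xi\rightharpoonup0$ for $\xi$ in the absolutely continuous subspace; this is the standard consequence of the Riemann–Lebesgue lemma (test against $\chi\in\H$, use that $t\mapsto\langle\chi,\e^{-itH}\xi\rangle$ is the Fourier transform of the complex measure $\langle\chi,E^H(\cdot)\xi\rangle$, which is absolutely continuous with respect to Lebesgue measure on the a.c. subspace). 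No genuine obstacle is expected here; the statement is essentially a packaging of compactness plus Riemann–Lebesgue.
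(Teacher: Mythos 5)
Your proof is correct and takes essentially the same route as the paper: both arguments hinge on the compactness of $(JJ^*-1)(H+i)^{-1}$ (via Lemma \ref{lemmecompact}, since $JJ^*-1$ is multiplication by a function in $C^\infty_{\rm c}(M)$, as noted in the proof of Lemma \ref{JJ*}), after which the paper simply cites the classical propagation estimate for the absolutely continuous subspace \cite[Prop.~5.7.(b)]{Amr09}, which is exactly what you re-derive by hand through the Riemann--Lebesgue lemma and the fact that compact operators map weak convergence to norm convergence. The only cosmetic remark is that your restriction to $\eta\in C^\infty_{\rm c}(\R\setminus\kappa(H))$ is superfluous (and slightly delicate in the bare setting of the lemma, where $\kappa(H)$ need not be closed); any $\eta\in C^\infty_{\rm c}(\R)$ already provides the required dense subset of $P_{\rm ac}(H)\H$.
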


\begin{proof}
We know from the proof of Lemma \ref{JJ*} that $(JJ^*-1)(H+i)^{-1}\in\K(\H)$. So, one
can conclude using a classical propagation estimate for vectors in $P_{\rm ac}(H)\H$
(see \cite[Prop.~5.7.(b)]{Amr09}).
\end{proof}

\begin{Lemma}\label{existence_le_retour}
Suppose that Assumption \ref{CondsurgV} holds with $\mu>1$. Then, the following
wave operators exist:
$$
W_\pm(H_0,H,J^*):=\slim_{t\to\pm\infty}\e^{itH_0}J^*\e^{-itH}P_{\rm ac}(H).
$$
\end{Lemma}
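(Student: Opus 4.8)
The plan is to use the Cook--Kuroda method, exactly as in the proof of Proposition \ref{existence}, but with the roles suitably adjusted so that the time evolution $\e^{-itH}$ now carries the vector and the crucial decay is extracted on the manifold side. First I would observe that since $J\in\B\big(\dom(H_0),\dom(H)\big)$ (a consequence of Lemma \ref{preuvepourRafa}), the adjoint $J^*$ maps $\dom(H)$ into $\dom(H_0)$, so for $\psi\in\dom(H)$ the map $t\mapsto\e^{itH_0}J^*\e^{-itH}\psi$ is differentiable with
$$
\frac{\d}{\d t}\;\!\e^{itH_0}J^*\e^{-itH}\psi
=i\;\!\e^{itH_0}\big(H_0J^*-J^*H\big)\e^{-itH}\psi
=-i\;\!\e^{itH_0}T^*\e^{-itH}\psi,
$$
where $T^*=(HJ-JH_0)^*$ is the formal adjoint of the operator $T$ from \eqref{defT0}. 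Hence it suffices to show that $\int_0^{\pm\infty}\d t\,\big\|T^*\e^{-itH}\psi\big\|_{\H_0}<\infty$ for $\psi$ in a dense subset of $P_{\rm ac}(H)\H$, and then conclude by the usual density and uniform-boundedness argument, using $\|\e^{itH_0}J^*\e^{-itH}\|\le1$.

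The natural dense set to work with is $\{\psi=\theta(H)\phi\mid\theta\in C^\infty_{\rm c}(\R\setminus\kappa(H)),\ \phi\in\langle\Phi\rangle^{-s}\H\}$ for suitable $s$, which is dense in $P_{\rm ac}(H)\H$ because $\kappa(H)=\T\cup\sigma_{\rm p}(H)$ is closed with empty interior and $H$ has purely absolutely continuous spectrum off $\kappa(H)$ by Proposition \ref{natureH}. The key input is that $T^*$, by the computation inside Lemma \ref{preuvepourRafa} (specifically the bound \eqref{b_bound}), gains a factor $\langle\Phi\rangle^{-\mu}$: more precisely $\langle\Phi_0\rangle^\gamma T^*$ extends to a bounded operator from $\dom(H)$ to $\H_0$ for $\gamma\in[0,\mu]$, so $\|T^*\e^{-itH}\psi\|_{\H_0}\le{\rm Const.}\,\|\langle H\rangle\langle\Phi\rangle^{-\mu}\e^{-itH}\psi\|_\H$. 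Commuting $\langle H\rangle$ through $\theta(H)$ and absorbing it, one is reduced to estimating $\|\langle\Phi\rangle^{-\mu}\e^{-itH}\theta(H)\phi\|_\H$, and here I would invoke the higher order resolvent estimates in terms of the weights $\langle\Phi\rangle^{-s}$ from Proposition \ref{yep}: with $\mu>1$ one has $\langle\Phi\rangle^{-s}(H-\lambda\mp i0)^{-1}\langle\Phi\rangle^{-s}$ of class $\Lambda^{\sigma}$ locally, which by the standard argument (writing $\e^{-itH}\theta(H)$ via the Stone formula and integrating by parts once in $\lambda$, as in \cite[Sec.~7.1]{ABG} or the classical local-decay lemmas) yields $\|\langle\Phi\rangle^{-s}\e^{-itH}\theta(H)\langle\Phi\rangle^{-s}\|_{\B(\H)}\le{\rm Const.}\,(1+|t|)^{-1-\varepsilon}$ for some $\varepsilon>0$ whenever $\mu>1$ and $\theta$ is supported away from $\kappa(H)$. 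Choosing $s\le\mu$ and $\phi\in\langle\Phi\rangle^{-s}\H$ then gives the required integrable bound on $\big\|T^*\e^{-itH}\psi\big\|_{\H_0}$.

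The main obstacle I anticipate is not the Cook argument itself but ensuring that the local-decay estimate one really needs — namely $\|\langle\Phi\rangle^{-s}\e^{-itH}\theta(H)\phi\|_\H\le{\rm Const.}\,(1+|t|)^{-\rho}$ with $\rho>1$ — genuinely follows from what is available. Proposition \ref{yep} gives the resolvent is locally $\Lambda^{\sigma}$ with $\sigma<\min\{\muL,\muS-k,1\}$; for $k=1$ this needs only $\muL>0$, $\muS>1$, consistent with the hypothesis $\mu>1$ here, and Zygmund-class $\Lambda^{\sigma}$ regularity of the boundary values of the resolvent is exactly the input of the standard theorem giving $t^{-\sigma}$ (or, integrating by parts, $t^{-1-\sigma}$ type) decay of $\langle\Phi\rangle^{-s}\e^{-itH}\theta(H)\langle\Phi\rangle^{-s}$. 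One must be a little careful that $T^*$ costs two derivatives (it is second order), which is why the extra $\langle H\rangle$ appears; but $\theta(H)$ absorbs it harmlessly since $\theta$ is smooth and compactly supported. Modulo tracking these weights and the precise exponent bookkeeping, the proof is a routine combination of Lemma \ref{preuvepourRafa}, Proposition \ref{yep}, and Proposition \ref{natureH}, and one concludes that the wave operators $W_\pm(H_0,H,J^*)$ exist.
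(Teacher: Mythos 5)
There is a genuine gap at exactly the point you flagged as the "main obstacle": the local decay estimate you need does not follow from Proposition \ref{yep} under the hypothesis $\mu>1$. With $\mu>1$ one can only take $k=1$ in Proposition \ref{yep}, which gives that $\lambda\mapsto\langle\Phi\rangle^{-s}(H-\lambda\mp i0)^{-1}\langle\Phi\rangle^{-s}$ is locally of class $\Lambda^{\sigma}$ with $\sigma<\min\{\muL,\muS-1,1\}<1$. The standard Fourier-transform argument then yields only
$\big\|\langle\Phi\rangle^{-s}\e^{-itH}\theta(H)\langle\Phi\rangle^{-s}\big\|_{\B(\H)}\le{\rm Const.}\;\!(1+|t|)^{-\sigma}$,
which is not integrable in $t$. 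Your parenthetical "integrating by parts once in $\lambda$" to upgrade this to $(1+|t|)^{-1-\varepsilon}$ is not available: integration by parts requires the weighted boundary values of the resolvent to be continuously differentiable in $\lambda$, i.e.\ the $k=2$ regime of Proposition \ref{yep}, which forces $\muS>2$ (essentially $\mu>2$). So your Cook--Kuroda argument, as written, proves the lemma only under a strictly stronger decay assumption than the stated $\mu>1$; the set-up (differentiating $\e^{itH_0}J^*\e^{-itH}\psi$, the weight gain of $T^*$, the dense set built from $\theta(H)$ with $\theta\in C^\infty_{\rm c}(\R\setminus\kappa(H))$) is fine, but the quantitative core fails.

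This is precisely why the paper does not use Cook's method here. Instead it factorizes the perturbation in the sesquilinear sense, writing for $\varphi\in\dom(H_0)$, $\psi\in\dom(H)$
$\langle J\varphi,H\psi\rangle_\H-\langle JH_0\varphi,\psi\rangle_\H=\langle G_0\varphi,G\psi\rangle_\H$
with $G(s)=\langle\Phi\rangle^{-s}$ and $G_0(s)=C_s\langle\Phi_0\rangle^{s-\mu}(H_0-i)$ for some $s\in(1/2,\mu-1/2)$, and then invokes the two-Hilbert-space Kato smoothness theorem (\cite[Cor.~4.5.7]{Yaf92}): $G(s)$ is locally $H$-smooth on $\R\setminus\kappa(H)$ by the limiting absorption principle of Proposition \ref{yep} with $k=1$ (mere boundedness of the boundary values suffices, no H\"older upgrade needed), and $G_0(s)$ is locally $H_0$-smooth on $\R\setminus\T$. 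Local smoothness gives square-integrability in time of $\|G\e^{-itH}\psi\|$ and $\|G_0\e^{-itH_0}\varphi\|$, and the wave operators $W_\pm(H_0,H,J^*)$ then exist by the abstract theorem without any pointwise integrable decay --- which is what makes the threshold $\mu>1$ attainable. If you want to salvage your approach, either restrict to $\mu>2$ or replace the pointwise-in-time bound by the $\ltwo$-in-time bounds coming from smoothness; the latter is in substance the paper's proof.
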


\begin{proof}
We follow the standard approach (see \eg \cite[Cor.~4.5.7]{Yaf92}) by showing that
$HJ-JH_0$ admits for all $\varphi\in\dom(H_0)$ and $\psi\in\dom(H)$ a (sesquilinear
form) decomposition
\begin{equation}\label{flifli}
\langle J\varphi,H\psi\rangle_\H-\langle JH_0\varphi,\psi\rangle_\H
=\langle G_0\varphi,G\psi\rangle_\H,
\end{equation}
where $G_0:\H_0\to\H$ is $H_0$-bounded and locally $H_0$-smooth on $\R\setminus\T$
and $G:\H\to\H$ is $H$-bounded and locally $H$-smooth on $\R\setminus\kappa(H)$ (with
$\kappa(H)$ being of measure zero).

For that purpose, one first fixes $s\in(1/2,\mu-1/2)$ and shows as in Lemma
\ref{preuvepourRafa} that the operator
$\langle\Phi\rangle^sT\langle\Phi_0\rangle^{\mu-s}$ defined on
$\S(\R)\odot C^\infty(\Sigma)$ extends continuously to an operator
$B_s\in\B\big(\dom(H_0),\H\big)$. It follows by Proposition \ref{mapping_H_not}.(i)
that there exists an operator $C_s\in\B(\H_0,\H)$ such that one has on
$\S(\R)\odot C^\infty(\Sigma)$
$$
\langle \Phi\rangle ^sT
=B_s(H_0-i)^{-1}(H_0-i)\langle \Phi_0\rangle^{s-\mu}
=C_s\langle \Phi_0\rangle^{s-\mu}(H_0-i).
$$
Thus, one gets for any $\varphi\in\S(\R)\odot C^\infty(\Sigma)$ and $\psi\in\dom(H)$
the equalities
\begin{equation}\label{fluflu}
\langle J\varphi,H\psi\rangle_\H-\langle JH_0\varphi,\psi\rangle_\H
=\big\langle\langle\Phi\rangle^sT\varphi,\langle\Phi\rangle^{-s}\psi\big\rangle_\H\\
=\big\langle C_s\langle\Phi_0\rangle^{s-\mu}(H_0-i)\varphi,
\langle\Phi\rangle^{-s}\psi\big\rangle_\H,
\end{equation}
which extend to all $\varphi\in\dom(H_0)$ due to the density of
$\S(\R)\odot C^\infty(\Sigma)$ in $\dom(H_0)$.

Now, the operator $G(s):=\langle\Phi\rangle^{-s}$ is $H$-bounded and locally
$H$-smooth on $\R\setminus\kappa(H)$ due to Proposition \ref{yep} with $k=1$, and the
operator $G_0(s):=C_s\langle \Phi_0\rangle^{s-\mu}(H_0-i)$ is $H_0$-bounded and
$H_0$-smooth on $\R\setminus\T$ due to a simple calculation. So, the decomposition
\eqref{fluflu} is equivalent to \eqref{flifli}, and the claim is proved.
\end{proof}

We are finally in a position to prove the asymptotic completeness of the wave
operators:

\begin{Proposition}[Asymptotic completeness of the wave operators]\label{asymptotc}
Suppose that Assumption \ref{CondsurgV} holds with $\mu>1$. Then,
$\Ran\big(W_\pm(H,H_0,J)\big)=\H_{\rm ac}(H)$.
\end{Proposition}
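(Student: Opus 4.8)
The plan is to deduce asymptotic completeness, namely $\Ran\bigl(W_\pm(H,H_0,J)\bigr)=\H_{\rm ac}(H)$, from the abstract criterion of \cite[Prop.~5.1]{RT12}. That criterion typically asserts that if the direct wave operators $W_\pm=W_\pm(H,H_0,J)$ exist, if the reverse wave operators $W_\pm(H_0,H,J^*)$ exist on $\H_{\rm ac}(H)$, and if a compatibility condition relating $J$ and $J^*$ (essentially that $JJ^*-1$ disappears asymptotically along the evolution $\e^{-itH}$ on the absolutely continuous subspace) holds, then the ranges of $W_\pm$ fill up $\H_{\rm ac}(H)$. So the strategy is simply to collect the three ingredients already established and feed them into that proposition.

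First I would recall that the direct wave operators $W_\pm$ exist and are partial isometries with initial subspaces $\H_0^\pm$ by Proposition \ref{existence}, which applies since we are assuming $\mu>1$. Second, I would invoke Lemma \ref{existence_le_retour} to guarantee the existence of the reverse wave operators $W_\pm(H_0,H,J^*)=\slim_{t\to\pm\infty}\e^{itH_0}J^*\e^{-itH}P_{\rm ac}(H)$, again under $\mu>1$. Third, I would use Lemma \ref{small}, which gives $\slim_{t\to\pm\infty}(JJ^*-1)\e^{-itH}P_{\rm ac}(H)=0$; this is precisely the hypothesis in \cite[Prop.~5.1]{RT12} ensuring that $J$ and $J^*$ are asymptotically inverse to one another on the relevant subspace. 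I would also note, if the abstract statement requires it, that $\H_{\rm ac}(H)=\H_{\rm c}(H)$ here because Proposition \ref{natureH} rules out singular continuous spectrum, so that $P_{\rm ac}(H)$ coincides with the projection onto the continuous subspace.

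With these three facts in hand, the conclusion is immediate: \cite[Prop.~5.1]{RT12} then yields $\Ran\bigl(W_\pm(H,H_0,J)\bigr)=\H_{\rm ac}(H)$. One should also double-check the bookkeeping that the chain rule for wave operators, $W_\pm(H,H_0,J)\,W_\pm(H_0,H,J^*)\psi=\psi$ for $\psi\in\H_{\rm ac}(H)$, follows from the vanishing of $JJ^*-1$ asymptotically, which again is exactly what Lemma \ref{small} provides together with the isometry properties coming from Lemma \ref{deAJ} embedded in Proposition \ref{existence}.

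The main (and only) obstacle is purely a matter of matching hypotheses: one must verify that the precise formulation of \cite[Prop.~5.1]{RT12} is satisfied verbatim — in particular that the roles of $J$ and $J^*$, the identification of the initial subspaces $\H_0^\pm$, and the absolute-continuity statement line up with what Propositions \ref{existence} and \ref{natureH} and Lemmas \ref{small} and \ref{existence_le_retour} deliver. There is no new analytic estimate to perform; all the hard work was done in establishing the Mourre estimate, the local smoothness of $\langle\Phi\rangle^{-s}$ and of $C_s\langle\Phi_0\rangle^{s-\mu}(H_0-i)$ (used in Lemma \ref{existence_le_retour}), and the propagation estimate behind Lemma \ref{small}. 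Hence the proof is short and consists essentially in citing the abstract criterion and pointing to where each of its hypotheses has been checked.
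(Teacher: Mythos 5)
Your proposal matches the paper's proof: asymptotic completeness is obtained by applying the abstract criterion \cite[Prop.~5.1]{RT12}, whose hypotheses are exactly those checked in Proposition \ref{existence}, Lemma \ref{small} and Lemma \ref{existence_le_retour}. The extra remarks you add (the chain-rule bookkeeping and the appeal to Proposition \ref{natureH}) are harmless but not needed beyond what the cited criterion already requires.
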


\begin{proof}
This result follows from \cite[Prop.~5.1]{RT12}, whose assumptions have been checked
in Proposition \ref{existence}, Lemma \ref{small} and Lemma
\ref{existence_le_retour}.
\end{proof}

\begin{Remark}
Let us collect some information about the spectrum of the operator $H$. Under the Assumption \ref{CondsurgV} with $\muL>0$ and $\muS>1$, a Mourre estimate was obtained in Proposition \ref{Mourre} based on the abstract scheme presented in \cite[Thm.~3.1]{RT12}. It also  follows from this abstract result, together with \cite[Prop.~7.2.6]{ABG}, that $\sigma_{\rm ess}(H)\subset \sigma_{\rm ess}(H_0)=[0,\infty)$,
and thus
$\sigma_{\rm ac}(H)\subset[0,\infty)$, since
$\sigma_{\rm sc}(H)=\varnothing$ due to Proposition \ref{natureH}.

More can be said under (the stronger) Assumption \ref{CondsurgV} with $\mu>1$:
We know by Lemma \ref{Lem_F_0}.(c) that the restrictions
$H_0^\pm:=H_0\upharpoonright\H_0^\pm$ are self-adjoint operators with spectrum
$\sigma\big(H_0^\pm\big)=\sigma_{\rm ac}\big(H_0^\pm\big)=[0,\infty)$. We also know
by Propositions \ref{existence} and \ref{asymptotc} that the absolutely continuous
parts of $H_0^\pm$ and $H$ are unitarily equivalent. So, one has
$\sigma_{\rm ac}(H)=[0,\infty)$, and we deduce from Proposition \ref{natureH} that
$$
\sigma_{\rm ess}(H)=\sigma_{\rm ac}(H)=[0,\infty).
$$
\end{Remark}

\subsection{Stationary formula for the scattering operator}\label{Sec_Sta-Sta}

In simple situations, the scattering operator is defined as the product $W_+^*W_-$
from $\H_0$ to $\H_0$. However, in our setup, this product is not unitary since the
wave operators are partial isometries with nontrivial kernels. Therefore, we define
instead the scattering operator as
$$
S:=W_+^*W_-:\H_0^-\to\H_0^+,
$$
and note that this operator is unitary due to the asymptotic completeness established
in Proposition \ref{asymptotc} (see \eqref{def_H+-} for the definition of the spaces
$\H_0^\pm\subset \H_0$). Since the scattering operator $S$ commutes with the free
evolution group $\{\e^{itH_0}\}_{t\in\R}$, one infers from Lemma \ref{Lem_F_0}.(c)
that $S$ admits a direct integral decomposition
\begin{equation*}
F_0SF_0^{-1}
=\int_{[0,\infty)}^\oplus\d\lambda\,S(\lambda):F_0\H_0^-\to F_0\H_0^+,
\end{equation*}
where $S(\lambda)$ (the scattering matrix at energy $\lambda$) is an operator acting
unitarily from $\H_0^-(\lambda):=\big(F_0\H_0^-\big)(\lambda)$ to
$\H_0^+(\lambda):=\big(F_0\H_0^+\big)(\lambda)$. Here, the subspaces
$\H_0^\pm(\lambda)\subset\H_0(\lambda)$ satisfy
$$
\H_0^-(\lambda)= \bigoplus_{j\in\N(\lambda)}\P_j\;\!\ltwo(\Sigma)\oplus\{0\}
\qquad\hbox{and}\qquad
\H_0^+(\lambda)= \bigoplus_{j\in\N(\lambda)}\{0\}\oplus\P_j\;\!\ltwo(\Sigma),
$$
and are embedded in
$
\H_0^-(\infty):=\bigoplus_{j\in\N}\P_j\;\!\ltwo(\Sigma)\oplus\{0\}
$
and
$
\H_0^+(\infty):=\bigoplus_{j\in\N}\{0\}\oplus\P_j\;\!\ltwo(\Sigma)
$.

In the sequel, we derive a formula for the operators $S(\lambda)$ by using the
stationary scattering theory of \cite[Sec.~5.5]{Yaf92}. Our first step toward that
formula is the following lemma; recall that $\T\equiv\{\tau_j\}_{j\in\N}$ is the
spectrum of $\Delta_\Sigma$ in $\ltwo(\Sigma)$ and that
$G_0(s)\in\B\big(\dom(H_0),\H\big)$, with $s\in(1/2,\mu-1/2)$, was defined in the
proof of Lemma \ref{existence_le_retour}.

\begin{Lemma}\label{lemme_Z_0}
Suppose that Assumption \ref{CondsurgV} holds with $\mu>1$ and let
$\lambda\in[0,\infty)\setminus\T$. Then,
\begin{enumerate}
\item[(a)] for any $s\in(1/2,\mu-1/2)$, the operator
$Z_0\big(\lambda;G_0(s)\big):\H\to\H_0(\lambda)$ given by
$$
Z_0\big(\lambda;G_0(s)\big)\psi:=\big(F_0G_0(s)^*\psi\big)(\lambda),
\quad\psi\in\dom(H),
$$
is well-defined and extends to an element of $\B\big(\H,\H_0(\lambda)\big)$ which we
denote by the same symbol,
\item[(b)] if $\mu>k+1$ for some $k\in\N$, and if $s\in(1/2,\mu-k-1/2)$, the function
$
[0,\infty)\setminus\T\ni\lambda\mapsto Z_0\big(\lambda;G_0(s)\big)
\in\B\big(\H,\H_0(\infty)\big)
$
is locally $k$-times H\"older continuously differentiable,
\item[(c)] for all $s_1,s_2\in(1/2,\mu-1/2)$, one has
$$
Z_0\big(\lambda;G_0(s_1)\big)\langle\Phi\rangle^{-s_1}
=Z_0\big(\lambda;G_0(s_2)\big)\langle\Phi\rangle^{-s_2}.
$$
\end{enumerate}
\end{Lemma}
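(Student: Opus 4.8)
The plan is to reduce all three assertions to the mapping properties of the fibred operators $F_0(\lambda)$ collected in Lemma~\ref{Lem_F_0}, using the explicit form of $G_0(s)$ obtained in the proof of Lemma~\ref{existence_le_retour}. Recall from there that, on the core $\S(\R)\odot C^\infty(\Sigma)$, one has $G_0(s)=\langle\Phi\rangle^sT=C_s\langle\Phi_0\rangle^{s-\mu}(H_0-i)$ with $C_s\in\B(\H_0,\H)$, hence, by duality, $G_0(s)^*=(H_0+i)\langle\Phi_0\rangle^{s-\mu}C_s^*$ with $C_s^*\in\B(\H,\H_0)$. Since $F_0$ intertwines $H_0$ with multiplication by $\lambda$ (Lemma~\ref{Lem_F_0}.(c)), the first step I would carry out is to establish, first on a suitable dense set of $\psi$ and then extending the spectral intertwining of $F_0$ to $\dom(H_0)^*$, the working identity
$$\big(F_0G_0(s)^*\psi\big)(\lambda)=(\lambda+i)\,F_0(\lambda)\langle\Phi_0\rangle^{s-\mu}C_s^*\psi ,\qquad\lambda\in[0,\infty)\setminus\T,~\psi\in\dom(H).$$
Everything else follows from this.

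For (a) I would set $r:=\mu-s$, so that $r>1/2$ by hypothesis and $\langle\Phi_0\rangle^{s-\mu}=\langle\Phi_0\rangle^{-r}$. The operator $\langle\Phi_0\rangle^{-r}C_s^*$ maps $\H$ boundedly into $\H_r(\R)\otimes\ltwo(\Sigma)$, since $\langle\Phi_0\rangle^r\langle\Phi_0\rangle^{-r}C_s^*=C_s^*\in\B(\H,\H_0)$; and Lemma~\ref{Lem_F_0}.(a) with $t=0$ tells us that $F_0(\lambda)$ extends to an element of $\B\big(\H_r(\R)\otimes\ltwo(\Sigma),\H_0(\infty)\big)$ precisely because $r>1/2$. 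Composing these and multiplying by the scalar $\lambda+i$ shows that the right-hand side of the working identity defines an element of $\B\big(\H,\H_0(\infty)\big)$; since its values lie in the closed subspace $\H_0(\lambda)$, this yields the well-definedness of $Z_0\big(\lambda;G_0(s)\big)$ for every $\lambda\notin\T$ and its extension to an element of $\B\big(\H,\H_0(\lambda)\big)$.

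For (b), under $\mu>k+1$ and $s\in(1/2,\mu-k-1/2)$ one has $r=\mu-s>k+1/2$, so Lemma~\ref{Lem_F_0}.(b) with $t=0$ and $s'=r$ gives that $\lambda\mapsto F_0(\lambda)\in\B\big(\H_r(\R)\otimes\ltwo(\Sigma),\H_0(\infty)\big)$ is locally $k$-times H\"older continuously differentiable; composing with the fixed bounded operator $\langle\Phi_0\rangle^{-r}C_s^*$ and multiplying by the scalar polynomial $\lambda+i$ preserve this regularity (Leibniz rule), so $\lambda\mapsto Z_0\big(\lambda;G_0(s)\big)\in\B\big(\H,\H_0(\infty)\big)$ has the stated regularity. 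For (c) I would use that $G_0(s)=\langle\Phi\rangle^sT$ on the core, whence $G_0(s_1)=\langle\Phi\rangle^{s_1-s_2}G_0(s_2)$ there; since $\langle\Phi\rangle^\rho$ is a nowhere-vanishing smooth positive function it maps $C^\infty_{\rm c}(M)\subset\dom(H)$ into itself, so taking adjoints gives $G_0(s_1)^*\langle\Phi\rangle^{-s_1}\psi=G_0(s_2)^*\langle\Phi\rangle^{-s_2}\psi$ for every $\psi\in C^\infty_{\rm c}(M)$; applying $F_0(\lambda)$ and extending by density — both sides being bounded on $\H$ by part (a), and $\langle\Phi\rangle^{-s_i}\in\B(\H)$ — proves (c).

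The main obstacle, and the point where the constraints $s<\mu-1/2$ (resp. $s<\mu-k-1/2$) genuinely enter, is the passage from the a priori fibrewise, almost-everywhere-in-$\lambda$ meaning of $\big(F_0G_0(s)^*\psi\big)(\lambda)$ on $\dom(H)$ to a bona fide bounded operator defined for \emph{every} $\lambda\notin\T$ with the asserted continuity/differentiability in $\lambda$: this is exactly what the working identity, combined with the pointwise-in-$\lambda$ mapping properties of $F_0(\lambda)$ from Lemma~\ref{Lem_F_0}.(a)--(b), provides. The remaining bookkeeping — the extension of the spectral intertwining of $F_0$ to $\dom(H_0)^*$, and the density/approximation arguments — I would carry out as in the corresponding arguments of \cite[Sec.~2]{Tie06}.
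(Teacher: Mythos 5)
Your argument is correct in substance, and for parts (a)--(b) it follows a genuinely different route than the paper. Both proofs start from the factorization obtained in the proof of Lemma~\ref{existence_le_retour}, but the paper identifies $G_0(s)^*\psi=\langle\Phi_0\rangle^{s-\mu}L(s,\mu-s)^*(H+i)\psi$ (so that $G_0(s)^*\psi$ is manifestly an element of a weighted space and the unitary $F_0$ applies without any extension), and then tames the remaining $H_0$-factor by inserting the finite-rank truncation $1\otimes\sum_{j\in\N(\lambda)}\P_j$, replacing $\Delta_\Sigma$ by the bounded $(\Delta_\Sigma)_\lambda$ and using Proposition~\ref{mapping_H_not} to rewrite $(H_0-i)^{-1}L(s,\mu-s)^*(H+i)=R(s,\mu-s)^*$; the price is invoking Lemma~\ref{Lem_F_0}.(a) with negative regularity index $t=-2$, and the reward is the explicit formula \eqref{form_Z_0}, which is quoted verbatim in Theorem~\ref{formule_S} and Corollary~\ref{Cor_diff}. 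You instead use the diagonalization of $H_0$ by $F_0$ to convert the factor $H_0+i$ into the scalar $\lambda+i$, arriving at the alternative closed formula $Z_0\big(\lambda;G_0(s)\big)=(\lambda+i)F_0(\lambda)\langle\Phi_0\rangle^{s-\mu}C_s^*$, which needs Lemma~\ref{Lem_F_0} only with $t=0$ and makes the $\lambda$-regularity in (b) immediate; your duality computation for (c) is essentially the paper's. The one step you should write out rather than defer is the working identity itself: a priori $G_0(s)^*\psi=(H_0+i)\langle\Phi_0\rangle^{s-\mu}C_s^*\psi$ holds only in $\dom(H_0)^*$, so you must either extend $F_0$ by duality to $\dom(H_0)^*$ (the fibres still make sense almost everywhere, the intertwining survives, and for vectors of $\H_t(\R)\otimes\ltwo(\Sigma)$ with $t>1/2$ the almost-everywhere fibres agree with the continuous extension $F_0(\lambda)$ of Lemma~\ref{Lem_F_0}.(a)--(b)), or test the identity weakly against $E^{H_0}(I)\varphi$ with $\varphi\in\S(\R)\odot C^\infty(\Sigma)$, or observe that $G_0(s)^*\psi\in\H_0$ for $\psi\in\dom(H)$ (the paper's $L$-operator computation) forces $\langle\Phi_0\rangle^{s-\mu}C_s^*\psi\in\dom(H_0)$ by self-adjointness, after which Lemma~\ref{Lem_F_0}.(c) applies directly. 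These verifications are routine, so your proof goes through; it is somewhat shorter than the paper's, at the cost of not producing the formula \eqref{form_Z_0} that the paper reuses later.
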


\begin{proof}
The three claims are proved, respectively, in points (a), (b) and (c) below. In the
proofs, we freely use the following inclusions which can be established as in Lemma
\ref{preuvepourRafa}: Given $s_1,s_2\in\R$ with $s_1+s_2\le\mu$, one has
$$
L(s_1,s_2):=\overline{(H-i)^{-1}\langle\Phi\rangle^{s_1}T\langle\Phi_0\rangle^{s_2}
\upharpoonright\S(\R)\odot C^\infty(\Sigma)}\in\B(\H_0,\H)
$$
and
$$
R(s_1,s_2):=\overline{\langle\Phi\rangle^{s_1}T\langle\Phi_0\rangle^{s_2}(H_0+i)^{-1}
\upharpoonright\S(\R)\odot C^\infty(\Sigma)}\in\B(\H_0,\H).
$$

(a) Take $\psi\in\dom(H)$, $\varphi\in\dom(H_0)$ and
$\{\varphi_n\}_{n\in\N}\subset\S(\R)\odot C^\infty(\Sigma)$ such that
$\lim_{n\to\infty}\|\varphi-\varphi_n\|_{\dom(H_0)}=0$. Then, we have for any fixed
$s\in(1/2,\mu-1/2)$
\begin{align*}
\big\langle\psi,G_0(s)\varphi\big\rangle_\H
=\lim_{n\to\infty}\big\langle\psi,\langle\Phi\rangle^sT\varphi_n\big\rangle_\H
&=\lim_{n\to\infty}\big\langle(H+i)\psi,(H-i)^{-1}\langle\Phi\rangle^sT
\langle\Phi_0\rangle^{\mu-s}\langle\Phi_0\rangle^{s-\mu}\varphi_n\big\rangle_\H\\
&=\lim_{n\to\infty}\big\langle(H+i)\psi,L(s,\mu-s)\langle\Phi_0\rangle^{s-\mu}
\varphi_n\big\rangle_{\H}\\
&=\big\langle\langle\Phi_0\rangle^{s-\mu}L(s,\mu-s)^*(H+i)\psi,
\varphi\big\rangle_{\H_0},
\end{align*}
meaning that $G_0(s)^*\psi=\langle\Phi_0\rangle^{s-\mu}L(s,\mu-s)^*(H+i)\psi$. It
follows by Lemma \ref{Lem_F_0}.(a) that for each $\lambda\in[0,\infty)\setminus\T$
\begin{align}
&Z_0\big(\lambda;G_0(s)\big)\psi\nonumber\\
&=F_0(\lambda)\langle\Phi_0\rangle^{s-\mu}L(s,\mu-s)^*(H+i)\psi\label{Form_L}\\
&=\textstyle F_0(\lambda)\big\{1\otimes\big(\sum_{j\in\N(\lambda)}\P_j\big)\big\}
\langle\Phi_0\rangle^{s-\mu}L(s,\mu-s)^*(H+i)\psi\nonumber\\
&=\textstyle F_0(\lambda)\big(\langle Q\rangle^{s-\mu}\otimes1\big)
\big\{1\otimes\big(\sum_{j\in\N(\lambda)}\P_j\big)\big\}
(H_0-i)(H_0-i)^{-1}L(s,\mu-s)^*(H+i)\psi\nonumber\\
&=F_0(\lambda)\big(\langle Q\rangle^{s-\mu}\otimes1\big)
\big\{(P^2-i)\otimes1+1\otimes(\Delta_\Sigma)_\lambda\big\}
(H_0-i)^{-1}L(s,\mu-s)^*(H+i)\psi,\nonumber
\end{align}
where
$
(\Delta_\Sigma)_\lambda
:=\sum_{j\in\N(\lambda)}\tau_j\;\!\P_j\in\B\big(\ltwo(\Sigma)\big)
$.
Now, a direct calculation using Proposition \ref{mapping_H_not}.(iii) shows for all
$\widetilde\varphi\in\S(\R)\odot C^\infty(\Sigma)$ that
$$
\big\langle\widetilde\varphi,(H_0-i)^{-1}L(s,\mu-s)^*(H+i)\psi\big\rangle_{\H_0}
=\big\langle\widetilde\varphi,R(s,\mu-s)^*\psi\big\rangle_{\H_0}.
$$
So, one infers that $(H_0-i)^{-1}L(s,\mu-s)^*(H+i)\psi=R(s,\mu-s)^*\psi$ by the
density of $\S(\R)\odot C^\infty(\Sigma)$ in $\H_0$, and thus
$$
Z_0\big(\lambda;G_0(s)\big)\psi
=F_0(\lambda)\big(\langle Q\rangle^{s-\mu}\otimes1\big)
\big\{(P^2-i)\otimes1+1\otimes(\Delta_\Sigma)_\lambda\big\}R(s,\mu-s)^*\psi.
$$
Now, the operator on the r.h.s. belongs to $\B\big(\H,\H_0(\lambda)\big)$ due to
Lemma \ref{Lem_F_0}.(a). So, one obtains that
\begin{equation}\label{form_Z_0}
\overline{Z_0\big(\lambda;G_0(s)\big)\upharpoonright\dom(H)}
=F_0(\lambda)\big(\langle Q\rangle^{s-\mu}\otimes1\big)
\big\{(P^2-i)\otimes1+1\otimes(\Delta_\Sigma)_\lambda\big\}R(s,\mu-s)^*,
\end{equation}
which proves the first claim.

(b) Write $Z_0\big(\lambda;G_0(s)\big)$ for the closure
$\overline{Z_0\big(\lambda;G_0(s)\big)\upharpoonright\dom(H)}$ and fix an interval
$(\tau_j,\tau_{j+1})$. Then, the function
$$
(\tau_j,\tau_{j+1})\ni\lambda\mapsto Z_0\big(\lambda;G_0(s)\big)
\equiv F_0(\lambda)\big(\langle Q\rangle^{s-\mu}\otimes1\big)
\big\{(P^2-i)\otimes1+1\otimes(\Delta_\Sigma)_\lambda\big\}R(s,\mu-s)^*
\in\B\big(\H,\H_0(\infty)\big)
$$
depends on $\lambda$ only via the factor $F_0(\lambda)$, since
$(\Delta_\Sigma)_\lambda$ is independent of $\lambda$ on $(\tau_j,\tau_{j+1})$.
Therefore, it follows by Lemma \ref{Lem_F_0}.(b) that the function
$
[0,\infty)\setminus\T\ni\lambda\mapsto Z_0\big(\lambda;G_0(s)\big)
\in\B\big(\H,\H_0(\infty)\big)
$
is locally $k$-times H\"older continuously differentiable if $s$ is chosen such that
$\mu-s>k+1/2$. But, we know by hypothesis that $\mu>k+1$. So, the condition
$\mu-s>k+1/2$ is verified for any $s\in\big(1/2,\mu-k-1/2)\subset(1/2,\mu-1/2)$.

(c) Let $s_1,s_2\in(1/2,\mu-1/2)$, $\varphi\in\H_0(\lambda)$ and
$\psi\in C^\infty_{\rm c}(M)$. Then, Formula \eqref{Form_L} implies that
$$
\big\langle\varphi,
Z_0\big(\lambda;G_0(s_1)\big)\langle\Phi\rangle^{-s_1}\psi\big\rangle_{\H_0(\lambda)}
=\big\langle L(s_1,\mu-s_1)\langle\Phi_0\rangle^{s_1-\mu}F_0(\lambda)^*\varphi,
(H+i)\langle\Phi\rangle^{-s_1}\psi\big\rangle_\H.
$$
So, by taking $\{\zeta_n\}_{n\in\N}\subset\S(\R)\odot C^\infty(\Sigma)$ such that
$
\lim_{n\to\infty}
\|\langle\Phi_0\rangle^{s_1-\mu}F_0(\lambda)^*\varphi-\zeta_n\|_{\H_0}=0
$,
one infers that
\begin{align*}
\big\langle\varphi,
Z_0\big(\lambda;G_0(s_1)\big)\langle\Phi\rangle^{-s_1}\psi\big\rangle_{\H_0(\lambda)}
&=\lim_{n\to\infty}\big\langle(H-i)^{-1}\langle\Phi\rangle^{s_1}T
\langle\Phi_0\rangle^{\mu-s_1}\zeta_n,
(H+i)\langle\Phi\rangle^{-s_1}\psi\big\rangle_\H\\
&=\lim_{n\to\infty}\big\langle(H-i)^{-1}\langle\Phi\rangle^{s_2}T
\langle\Phi_0\rangle^{\mu-s_2}\langle\Phi_0\rangle^{s_2-s_1}\zeta_n,
(H+i)\langle\Phi\rangle^{-s_2}\psi\big\rangle_\H\\
&=\lim_{n\to\infty}\big\langle L(s_2,\mu-s_2)\langle\Phi_0\rangle^{s_2-s_1}\zeta_n,
(H+i)\langle\Phi\rangle^{-s_2}\psi\big\rangle_\H\\
&=\big\langle\varphi,Z_0\big(\lambda;G_0(s_2)\big)\langle\Phi\rangle^{-s_2}\psi
\big\rangle_{\H_0(\lambda)}.
\end{align*}
One concludes by noting that $\varphi$ is arbitrary in $\H_0(\lambda)$ and that
$C^\infty_{\rm c}(M)$ is dense in $\H$.
\end{proof}

In the proof of the next theorem, we use the fact that the identification operator
$J$ extends, for each $s\in\R$, to an element of
$\B\big(\dom(\langle\Phi_0\rangle^s),\dom(\langle\Phi\rangle^s)\big)$. We also use
the notation $\widehat\sigma(H_0)$ for a core of the spectrum
$\sigma(H_0)\equiv\sigma_{\rm ac}(H_0)$; namely, a Borel set $\widehat\sigma(H_0)$
such that:
\begin{enumerate}
\item[(i)] $\widehat\sigma(H_0)$ is a Borel support of the spectral measure
$E^{H_0}(\;\!\cdot\;\!)$, \ie $E^{H_0}\big(\R\setminus\widehat\sigma(H_0)\big)=0$,
\item[(ii)] if $I$ is a Borel support of $E^{H_0}(\;\!\cdot\;\!)$, then
$\widehat\sigma(H_0)\setminus I$ has Lebesgue measure zero.
\end{enumerate}
The set $\widehat\sigma(H_0)$ is unique up to a set of Lebesgue measure zero (see
\cite[Sec.~1.3.3]{Yaf92} for more details).

\begin{Theorem}[Stationary formula for the $S$-matrix]\label{formule_S}
Suppose that Assumption \ref{CondsurgV} holds with $\mu>1$. Then, for any
$s_1,s_2,s_3\in(1/2,\mu-1/2)$ and for almost every
$\lambda\in[0,\infty)\setminus\kappa(H)$ we have
\begin{align}\label{formulette}
S(\lambda)
&=-2\pi iF_0(\lambda)J^*\langle\Phi\rangle^{-s_1}Z_0\big(\lambda;G_0(s_1)\big)^*\\
&\qquad+2\pi iZ_0\big(\lambda;G_0(s_2)\big)\langle\Phi\rangle^{-s_2}
(H-\lambda-i0)^{-1}\langle\Phi\rangle^{-s_3}Z_0\big(\lambda;G_0(s_3)\big)^*,\nonumber
\end{align}
with $Z_0\big(\lambda;G_0(\;\!\cdot\;\!)\big)$ given by the r.h.s. of
\eqref{form_Z_0}.
\end{Theorem}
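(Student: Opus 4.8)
The plan is to obtain \eqref{formulette} from the stationary scattering theory of \cite[Sec.~5.5]{Yaf92} and then to rewrite the abstract expression it produces with the help of Lemma \ref{lemme_Z_0}. All the hypotheses of that theory are already at hand: the wave operators $W_\pm\equiv W_\pm(H,H_0;J)$ exist and are asymptotically complete (Propositions \ref{existence} and \ref{asymptotc}); the perturbation factorises as $HJ-JH_0=G(s)^*G_0(s)$, with $G_0(s)=C_s\langle\Phi_0\rangle^{s-\mu}(H_0-i)$ being $H_0$-bounded and locally $H_0$-smooth on $\R\setminus\T$ --- the smoothness following from Lemma \ref{passage} since $\mu-s>1/2$ --- and $G(s):=\langle\Phi\rangle^{-s}$ being bounded and locally $H$-smooth on $\R\setminus\kappa(H)$ by Proposition \ref{yep} with $k=1$; this is exactly the content of Lemma \ref{existence_le_retour}, together with the identity $\langle\Phi\rangle^sT=G_0(s)$ established in its proof. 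Finally, $\kappa(H)$ is Lebesgue-null, so that $[0,\infty)\setminus\kappa(H)$ may be used as a core $\widehat\sigma(H_0)$ of $\sigma(H_0)=\sigma_{\rm ac}(H_0)$, and on it all the boundary values occurring below depend continuously on $\lambda$ by Propositions \ref{yeye}--\ref{yep} and Lemma \ref{lemme_Z_0}.(b).

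First I would reduce $S$ to a single time integral. Since $W_+$ is a partial isometry with initial subspace $\H_0^+$ and $\H_0^+\perp\H_0^-$, one has $W_+^*W_+\varphi=0$ for $\varphi\in\H_0^-$, whence $S\varphi=W_+^*(W_--W_+)\varphi$. On the dense set $\D_\mu$ one has $W_--W_+=-i\int_\R\e^{isH}T\e^{-isH_0}\,\d s$, the integral converging absolutely by the bound of Lemma \ref{majT} (this is the only place $\mu>1$ is really needed); combining this with the intertwining relation $W_+^*\e^{isH}=\e^{isH_0}W_+^*$ and writing $\Omega_+:=W_+-J\in\B(\H_0,\H)$ gives, for $\varphi\in\D_\mu\cap\H_0^-$,
\begin{equation*}
S\varphi=-i\int_\R\e^{isH_0}\big(J^*T+\Omega_+^*T\big)\e^{-isH_0}\varphi\,\d s .
\end{equation*}
Passing to the spectral representation of $H_0$, the $s$-integration extracts $2\pi$ times the ``on-shell'' value at energy $\lambda$ (the diagonal part of the operator sandwiched between the two free evolutions), while the standard Abel-limit reduction of $\Omega_+$ on the energy shell turns $\Omega_+^*T$ into $-T^*(H-\lambda-i0)^{-1}T$. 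One thus arrives, for a.e.\ $\lambda\in[0,\infty)\setminus\kappa(H)$, at
\begin{equation*}
S(\lambda)=-2\pi i\,F_0(\lambda)\,J^*T\,F_0(\lambda)^*+2\pi i\,F_0(\lambda)\,T^*(H-\lambda-i0)^{-1}T\,F_0(\lambda)^* ,
\end{equation*}
each summand being read in the sandwiched sense dictated by $T=G(s)^*G_0(s)$. Inserting $T\,F_0(\lambda)^*=\langle\Phi\rangle^{-s_3}Z_0(\lambda;G_0(s_3))^*$, $F_0(\lambda)\,T^*=Z_0(\lambda;G_0(s_2))\langle\Phi\rangle^{-s_2}$ and $J^*T\,F_0(\lambda)^*=J^*\langle\Phi\rangle^{-s_1}Z_0(\lambda;G_0(s_1))^*$ --- all consequences of $Z_0(\lambda;G_0(s))=(F_0G_0(s)^*\,\cdot\,)(\lambda)$ together with $\langle\Phi\rangle^sT=G_0(s)$ --- converts the two summands into the two terms of \eqref{formulette}; the fact that $s_1,s_2,s_3$ can be chosen independently in $(1/2,\mu-1/2)$ is precisely the weight-independence property of Lemma \ref{lemme_Z_0}.(c), used separately in each slot.

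The main obstacle is the passage from the time-dependent expression to the stationary identity: one has to justify the convergence of the oscillatory integrals using only the polynomial decay of Lemma \ref{majT}, to identify the on-shell value of $J^*T$ with the bounded operator $F_0(\lambda)J^*\langle\Phi\rangle^{-s_1}Z_0(\lambda;G_0(s_1))^*$ while keeping $J^*T$ sandwiched throughout --- indispensable, since $J^*J-1=(\j^2-1)\otimes1$ does not admit an on-shell restriction by itself --- and to identify the $\Omega_+^*T$ contribution with $T^*(H-\lambda-i0)^{-1}T$ carrying the correct $i0$-prescription. This is exactly what \cite[Sec.~5.5]{Yaf92} supplies once the smoothness hypotheses recalled above are in force; the remaining algebraic rewriting and the verification of the $s_i$-independence are then routine and rest only on Lemma \ref{lemme_Z_0}.(c) and Propositions \ref{yeye}--\ref{yep}.
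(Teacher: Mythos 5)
Your proposal is correct and follows essentially the same route as the paper: both arguments rest on the stationary representation of \cite[Sec.~5.5]{Yaf92} for the triple $(H_0,H,J)$ with the factorization $\langle\Phi\rangle^{s}T=G_0(s)$, the smoothness and boundary-value inputs of Lemma \ref{passage}, Lemma \ref{existence_le_retour} and Proposition \ref{yep}, and Lemma \ref{lemme_Z_0}.(c) to decouple the weights $s_1,s_2,s_3$. The only difference is organizational: you dispose of the non-unitarity of $J$ at the outset through $W_+^*W_+\varphi=0$ for $\varphi\in\H_0^-$, whereas the paper invokes Yafaev's formula $(5.5.3_+)$ directly and then shows that the resulting extra term $u_+(\lambda)$, the on-shell value of $\U_+(H_0,H_0;J^*J)=W_+^*W_+=P_0^+$, vanishes on the incoming subspace --- the same orthogonality observation, made rigorous via \cite[Thm.~5.3.6]{Yaf92}, which is also what legitimizes treating your time-dependent reduction as a consequence of the stationary machinery.
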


Before the proof, we recall that the usual scattering operator
$\widetilde S:\H_0\to\H_0$ coincides on $\H_0^-$ with our unitary scattering operator
$S:\H_0^-\to\H_0^+$.

\begin{proof}
Let $s_1\in(1/2,\mu-1/2)$, $\varphi\in\H_0^-\cap\dom(\langle\Phi_0\rangle^{s_1})$ and
$\lambda\in\R\setminus\T$. Then, we know from Lemma \ref{passage} that the following
limits exist in $\H$ (see the proof of Proposition \ref{existence_le_retour} for the
definitions of $G_0(s_1)$ and $C_{s_1}$):
\begin{align*}
&\slim_{\varepsilon\searrow0}G_0(s_1)(H_0-\lambda\mp i\varepsilon)^{-1}\varphi\\
&=\slim_{\varepsilon\searrow0}C_{s_1}\langle\Phi_0\rangle^{s_1-\mu}(H_0-i)
(H_0-\lambda\mp i\varepsilon)^{-1}\langle\Phi_0\rangle^{-s_1}
\langle\Phi_0\rangle^{s_1}\varphi\\
&=\slim_{\varepsilon\searrow0}C_{s_1}\big\{\langle\Phi_0\rangle^{-\mu}
+(\lambda\pm i\varepsilon-i)\langle\Phi_0\rangle^{s_1-\mu}
(H_0-\lambda\mp i\varepsilon)^{-1}\langle\Phi_0\rangle^{-s_1}\big\}
\langle\Phi_0\rangle^{s_1}\varphi.
\end{align*}
Furthermore, the operator $G_0(s_1)$ is $H_0$-smooth in the weak sense since it is
$H_0$-smooth on $\R\setminus\T$ (see Section 5.1 of \cite{Yaf92}), and the operator
$G(s_1)\equiv\langle\Phi\rangle^{-s_1}$ is $|H|^{1/2}$-bounded. Therefore, all the
assumptions of \cite[Thm.~5.5.3]{Yaf92} are verified on the dense set
$\dom(\langle\Phi_0\rangle^{s_1})\subset\H_0$ due to Proposition \ref{yep}. It follows
that the representation \cite[Eq.~(5.5.3$_+$)]{Yaf92} for $\widetilde S(\lambda)$
holds for almost every $\lambda\in\widehat\sigma(H_0)$. So, we have for almost every
$\lambda\in\widehat\sigma(H_0)\setminus\T$ and all
$\varphi\in\H_0^-\cap\dom(\langle\Phi_0\rangle^{s_1})$ the equalities
\begin{align}
\big(F_0\widetilde S\varphi\big)(\lambda)
=\widetilde S(\lambda)F_0(\lambda)\varphi
&=\big\{u_+(\lambda)-2\pi i\big[Z_0\big(\lambda;\widetilde G_0(s_1)\big)
Z_0\big(\lambda;G_0(s_1)\big)^*\label{S_1}\\
&\qquad-Z_0\big(\lambda;G_0(s_1)\big)B_{s_1}(\lambda+i0)
Z_0\big(\lambda;G_0(s_1)\big)^*\big]\big\}F_0(\lambda)\varphi,\nonumber
\end{align}
with the operators $u_+(\lambda)$, $Z_0\big(\lambda;\widetilde G_0(s_1)\big)$ and
$B_{s_1}(\lambda+i0)$ defined in points (i), (ii) and (iii) that follow:

(i) We know from \cite[Thm.~5.3.6]{Yaf92} (which applies in our case) that the
stationary wave operator $\U_+(H,H_0;J)$ coincides with the wave operator $W_+$. It
then follows from \cite[Eq.~(2.7.16)]{Yaf92} that
$$
\U_+(H_0,H_0;J^*J)=\U_+(H,H_0;J)^*\;\!\U_+(H,H_0;J)=W_+^*W_+=P_0^+,
$$
with $P_0^+$ the orthogonal projection onto $\H_0^+$. Since $\H_0^+$ and $\H_0^-$ are
orthogonal and since $u_+(\lambda):\H_0(\lambda)\to\H_0(\lambda)$ is defined by the
relation
$$
u_+(\lambda)F_0(\lambda)\varphi=\big[F_0\;\!\U_+(H_0,H_0;J^*J)\varphi\big](\lambda),
$$
it follows that
$$
u_+(\lambda)F_0(\lambda)\varphi=\big(F_0P_0^+\varphi\big)(\lambda)=0.
$$

(ii) One has $\widetilde G_0(s_1):=G(s_1)J$ with $G(s_1)=\langle\Phi\rangle^{-s_1}$.
Therefore, the operator $Z_0\big(\lambda;\widetilde G_0(s_1)\big):\H\to\H_0(\lambda)$
(defined as $Z_0\big(\lambda;G_0(s_1)\big)$, but with $G_0(s_1)$ replaced by
$\widetilde G_0(s_1)$) satisfies for all $\psi\in\H$
$$
Z_0\big(\lambda;\widetilde G_0(s_1)\big)\psi
=F_0(\lambda)\big\{\widetilde G_0(s_1)\big\}^*\psi
=F_0(\lambda)J^*\langle\Phi\rangle^{-s_1}\psi.
$$
Lemma \ref{Lem_F_0}.(a) and the inclusion
$J^*\in\B\big(\dom(\langle\Phi\rangle^{s_1}),\dom(\langle\Phi_0\rangle^{s_1})\big)$
implies that
$Z_0\big(\lambda;\widetilde G_0(s_1)\big)\in\B\big(\H,\H_0(\lambda)\big)$.

(iii) The operator
$$
B_{s_1}(\lambda+i0)
:=G(s_1)(H-\lambda-i0)^{-1}G(s_1)^*
=\langle\Phi\rangle^{-s_1}(H-\lambda-i0)^{-1}\langle\Phi\rangle^{-s_1}
$$
belongs to $\B(\H)$ for all $\lambda\in\R\setminus\kappa(H)$ due to Proposition
\ref{yep}.

Now, by replacing the expressions of points (i), (ii) and (iii) into \eqref{S_1} and
then by using Lemma \ref{lemme_Z_0}.(c), one gets for any $s_1,s_2,s_3\in(1/2,\mu-1/2)$
and for almost every $\lambda\in\widehat\sigma(H_0)\setminus\kappa(H)$ that
\begin{align*}
\widetilde S(\lambda)F_0(\lambda)\varphi
&=-2\pi i\big\{F_0(\lambda)J^*\langle\Phi\rangle^{-s_1}
Z_0\big(\lambda;G_0(s_1)\big)^*\\
&\hspace{45pt}-Z_0\big(\lambda;G_0(s_1)\big)\langle\Phi\rangle^{-s_1}
(H-\lambda-i0)^{-1}\langle\Phi\rangle^{-s_1}Z_0\big(\lambda;G_0(s_1)\big)^*\big\}
F_0(\lambda)\varphi.\\
&=-2\pi i\big\{F_0(\lambda)J^*\langle\Phi\rangle^{-s_1}
Z_0\big(\lambda;G_0(s_1)\big)^*\\
&\hspace{45pt}-\lim_{\varepsilon\searrow0}Z_0\big(\lambda;G_0(s_1)\big)\langle\Phi\rangle^{-s_1}
(H-\lambda-i\varepsilon)^{-1}\langle\Phi\rangle^{-s_1}Z_0\big(\lambda;G_0(s_1)\big)^*\big\}
F_0(\lambda)\varphi.\\
&=-2\pi i\big\{F_0(\lambda)J^*\langle\Phi\rangle^{-s_1}
Z_0\big(\lambda;G_0(s_1)\big)^*\\
&\hspace{45pt}-Z_0\big(\lambda;G_0(s_2)\big)\langle\Phi\rangle^{-s_2}
(H-\lambda-i0)^{-1}\langle\Phi\rangle^{-s_3}Z_0\big(\lambda;G_0(s_3)\big)^*\big\}
F_0(\lambda)\varphi.
\end{align*}
Furthermore, Lemma \ref{Lem_F_0}.(a), Lemma \ref{lemme_Z_0} and Proposition \ref{yep}
imply that the operator within the curly brackets is well-defined on
$\H_0^-(\lambda)$ for all $\lambda\in[0,\infty)\setminus\kappa(H)$. So, since
$\widetilde S$ and $S$ are equal on $\H_0^-$, it follows that  \eqref{formulette}
holds for almost every $\lambda\in[0,\infty)\setminus\kappa(H)$.
\end{proof}

In the next corollary, we identify (without loss of generality) the operator
$S(\lambda)$ with the r.h.s. of Formula \eqref{formulette} for all
$\lambda\in[0,\infty)\setminus\kappa(H)$.

\begin{Corollary}[Differentiability of the $S$-matrix]\label{Cor_diff}
Suppose that Assumption \ref{CondsurgV} holds with $\mu>k+1$ for some $k\in\N$.
Then, the function
$
[0,\infty)\setminus\kappa(H)\ni\lambda\mapsto S(\lambda)
\in\B\big(\H_0^-(\infty),\H_0^+(\infty)\big)
$
is locally $k$-times H\"older continuously differentiable.
\end{Corollary}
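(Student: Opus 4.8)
The plan is to read off the regularity of $\lambda\mapsto S(\lambda)$ directly from the stationary formula \eqref{formulette}, differentiating it term by term and invoking the three differentiability statements already at hand: Lemma \ref{Lem_F_0}.(b) for the fibres $F_0(\lambda)$ of the spectral transformation of $H_0$, Lemma \ref{lemme_Z_0}.(b) for the operators $Z_0\big(\lambda;G_0(\;\!\cdot\;\!)\big)$, and Proposition \ref{yep} for the boundary values $\langle\Phi\rangle^{-s}(H-\lambda\mp i0)^{-1}\langle\Phi\rangle^{-s}$ of the resolvent. The preliminary observation that gives the necessary flexibility is Lemma \ref{lemme_Z_0}.(c): the operator $Z_0\big(\lambda;G_0(s)\big)\langle\Phi\rangle^{-s}$ is independent of $s\in(1/2,\mu-1/2)$, and for any such $s$ one has $Z_0\big(\lambda;G_0(s)\big)\langle\Phi\rangle^{-s}\langle\Phi\rangle^{r}=Z_0\big(\lambda;G_0(r)\big)$ for every $r\in(1/2,\mu-1/2)$. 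Combined with the fact that $J$ extends to an element of $\B\big(\dom(\langle\Phi_0\rangle^r),\dom(\langle\Phi\rangle^r)\big)$ for all $r\in\R$ — so that $J^*\langle\Phi\rangle^{-r}$ is $\langle\Phi_0\rangle^{-r}$ composed with a bounded operator — this lets one insert, at each occurrence of $Z_0\big(\lambda;G_0(\;\!\cdot\;\!)\big)$ and, crucially, separately in each term produced below by the Leibniz rule, whichever weight $\langle\Phi\rangle^{\pm r}$ is convenient.

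Next I would fix $\sigma\in\big(0,\min\{\muL,\muS-k-1,1\}\big)$, a nonempty set, small enough that $\mu>k+1+\sigma$, and differentiate each of the two terms of \eqref{formulette} up to order $k$ (all derivatives understood in the operator norm of the relevant weighted spaces). For the first term the Leibniz rule produces summands $F_0^{(a)}(\lambda)\,J^*\,\big[Z_0\big(\lambda;G_0(\;\!\cdot\;\!)\big)^*\big]^{(c)}$ with $a+c=k$; choosing $r\in\big(a+\tfrac12,\mu-c-\tfrac12\big)$, a nonempty interval precisely because $\mu>a+c+1=k+1$, and inserting $\langle\Phi\rangle^{\pm r}$ writes such a summand as the product of $\big[F_0(\lambda)\langle\Phi_0\rangle^{-r}\big]^{(a)}$, a fixed bounded operator, and $\big[Z_0\big(\lambda;G_0(r)\big)^*\big]^{(c)}$, the former being locally H\"older continuous by Lemma \ref{Lem_F_0}.(b) (since $r>a+\tfrac12$) and the latter by Lemma \ref{lemme_Z_0}.(b) (since $r<\mu-c-\tfrac12$). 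For the second term, the Leibniz rule applied to the three $\lambda$-dependent factors $Z_0\big(\lambda;G_0(s_2)\big)$, $(H-\lambda-i0)^{-1}$ and $Z_0\big(\lambda;G_0(s_3)\big)^*$ yields, after the same rebalancing with $w_b:=b+\tfrac12+\sigma$, summands of the form
\[
\big[Z_0\big(\lambda;G_0(w_b)\big)\big]^{(a)}
\big[\langle\Phi\rangle^{-w_b}(H-\lambda-i0)^{-(b+1)}\langle\Phi\rangle^{-w_b}\big]
\big[Z_0\big(\lambda;G_0(w_b)\big)^*\big]^{(c)}
\]
with $a+b+c=k$. Here the middle factor is locally H\"older continuous by Proposition \ref{yep} applied with the integer $b+1$ (for the weight $w_b=(b+1)+\sigma-\tfrac12$ it is the $b$-th derivative of a map locally of class $\Lambda^{b+\sigma}$), while the outer factors are so by Lemma \ref{lemme_Z_0}.(b), the inequalities $w_b<\mu-a-\tfrac12$ and $w_b<\mu-c-\tfrac12$ being $\mu>a+b+1+\sigma$ and $\mu>b+c+1+\sigma$, both satisfied since $a+b\le k$ and $b+c\le k$. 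Summing over $a,b,c$ and adding the two contributions gives that $\lambda\mapsto S(\lambda)$ is locally $k$-times H\"older continuously differentiable on $[0,\infty)\setminus\kappa(H)$.

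The main obstacle is exactly this weight bookkeeping, and it is where the hypothesis $\mu>k+1$ must be used efficiently: a single fixed choice of the weights $s_1,s_2,s_3$ in \eqref{formulette} would only deliver the result under the much stronger assumption $\mu>2k+1$, because Proposition \ref{yep} makes the weight carried by the resolvent grow with the number $b$ of derivatives falling on it, whereas Lemmas \ref{Lem_F_0}.(b) and \ref{lemme_Z_0}.(b) make the weight tolerated by $F_0(\lambda)$ and by $Z_0\big(\lambda;G_0(\;\!\cdot\;\!)\big)$ shrink with the number of derivatives falling on them. The two demands become compatible only because the three derivative counts sum to $k$ in every Leibniz term while, by Lemma \ref{lemme_Z_0}.(c), the weight may be chosen afresh in each summand; the care needed is in organizing this rearrangement so that the expressions above genuinely represent the successive derivatives of $S(\lambda)$ — that is, in carrying out the product rule through a whole scale of weighted spaces at once rather than iteratively — which is the technically delicate (if routine) step of the proof.
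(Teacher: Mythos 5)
Your proof is correct and follows essentially the same route as the paper: differentiate the stationary formula \eqref{formulette} term by term and, for each Leibniz summand, choose the weights afresh (the paper encodes this freedom by noting that \eqref{formulette} holds for all $s_1,s_2,s_3\in(1/2,\mu-1/2)$, which is the same flexibility you extract from Lemma \ref{lemme_Z_0}.(c)), controlling the three kinds of factors by Lemma \ref{Lem_F_0}.(b), Lemma \ref{lemme_Z_0}.(b) and Proposition \ref{yep} with exactly your weight windows, e.g.\ $s_2,s_3\in(\ell_2+\tfrac12,\ell_2+\tfrac12+\mu-k-1)$. The only point the paper adds, which you omit but which is immediate, is the closing observation that the derivatives map $\H_0^-(\lambda)$ into $\H_0^+(\lambda)$ (via $S(\lambda)\H_0^-(\lambda)=P_0^+(\lambda)S(\lambda)\H_0^-(\lambda)$), so the function is valued in $\B\big(\H_0^-(\infty),\H_0^+(\infty)\big)$ and not merely in $\B\big(\H_0^-(\infty),\H_0(\infty)\big)$.
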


\begin{proof}
We first show that $\lambda\mapsto S(\lambda)$ is locally $k$-times H\"older
continuously differentiable from $[0,\infty)\setminus\kappa(H)$ to
$\B\big(\H_0^-(\infty),\H_0(\infty)\big)$. For that purpose, we let
$s_1,s_2,s_3\in(1/2,\mu-1/2)$ and note from Formula \eqref{formulette} that it is
sufficient to prove that the terms
$$
A_{\ell_1,\ell_2}(\lambda)
:=\left\{\frac{\d^{\ell_1}}{\d\lambda^{\ell_1}}\;\!F_0(\lambda)J^*
\langle\Phi\rangle^{-s_1}\right\}\left\{\frac{\d^{\ell_2}}
{\d\lambda^{\ell_2}}\;\!Z_0\big(\lambda;G_0(s_1)\big)^*\right\}
$$
exist and are locally H\"older continuous for all
$\lambda\in[0,\infty)\setminus\kappa(H)$ and all integers $\ell_1,\ell_2\ge0$
satisfying $\ell_1+\ell_2\le k$, and that the terms
$$
B_{\ell_1,\ell_2,\ell_3}(\lambda)
:=\left\{\frac{\d^{\ell_1}}{\d\lambda^{\ell_1}}\;\!
Z_0\big(\lambda;G_0(s_2)\big)\right\}\left\{\frac{\d^{\ell_2}}{\d\lambda^{\ell_2}}
\;\!\langle\Phi\rangle^{-s_2}(H-\lambda-i0)^{-1}\langle\Phi\rangle^{-s_3}\right\}
\left\{\frac{\d^{\ell_3}}{\d\lambda^{\ell_3}}\;\!
Z_0\big(\lambda;G_0(s_3)\big)^*\right\}
$$
exist and are locally H\"older continuous for all
$\lambda\in[0,\infty)\setminus\kappa(H)$ and all integers $\ell_1,\ell_2,\ell_3\ge0$
satisfying $\ell_1+\ell_2+\ell_3\le k$.

Now, the factors in $B_{\ell_1,\ell_2,\ell_3}(\lambda)$ satisfy
\begin{align*}
\frac{\d^{\ell_3}}{\d\lambda^{\ell_2}}\;\!Z_0\big(\lambda;G_0(s_3)\big)^*
\in\B\big(\H_0^-(\infty),\H\big) \quad&\hbox{for }s_3 \in (1/2,\mu-\ell_3-1/2),\\
\frac{\d^{\ell_2}}{\d\lambda^{\ell_2}}\;\!\langle\Phi\rangle^{-s_2}
(H-\lambda-i0)^{-1}\langle\Phi\rangle^{-s_3}\in\B(\H)
\quad&\hbox{for }s_2,s_3>\ell_2+1/2,\\
\frac{\d^{\ell_1}}{\d\lambda^{\ell_1}}\;\!Z_0\big(\lambda;G_0(s_2)\big)
\in\B\big(\H,\H_0(\infty)\big) \quad&\hbox{for }s_2 \in (1/2,\mu-\ell_1-1/2),
\end{align*}
and are locally H\"older continuous due to Proposition \ref{yep} and Lemma
\ref{lemme_Z_0}. Therefore, if
$$
s_2,s_3\in(\ell_2+1/2,\ell_2+1/2 +\mu-k-1)\subset(1/2,\mu-1/2),
$$
then $B_{\ell_1,\ell_2,\ell_3}(\lambda)$ exists and is locally H\"older continuous
for all $\lambda\in[0,\infty)\setminus\kappa(H)$. Since similar arguments apply to
the term $A_{\ell_1,\ell_2}(\lambda)$ if $s_1\in(\ell_1+1/2,\ell_1+1/2 +\mu-k-1)$,
the announced differentiability is proved.

To conclude the proof, it only remains to note that all the derivatives
$\frac{\d^\ell}{\d\lambda^\ell}S(\lambda)$, $\ell\in\{1,\ldots,k\}$, map
$\H_0^-(\lambda)$ into $\H_0^+(\lambda)$ due to the formula
$$
S(\lambda)\;\!\H_0^-(\lambda)
=\big(F_0P_0^+S\;\!\H_0^-\big)(\lambda)
=P_0^+(\lambda)S(\lambda)\;\!\H_0^-(\lambda)
$$
with $P_0^+(\lambda):=\big(F_0P_0^+F_0^{-1}\big)(\lambda)$.
\end{proof}

\subsection{Mapping properties of the scattering operator}

In this subsection, we define and give some properties of a subset $\E\subset\H_0^-$
which will be useful when proving the existence of time delay.

Let $\varphi\in\H_0^-$ satisfy $F_0(\lambda)\varphi=\rho(\lambda)h(\lambda)$ for each
$\lambda\in[0,\infty)\setminus\T$, where $\rho\in C^\infty\big([0,\infty)\big)$ has
compact support in $[0,\infty)\setminus\kappa(H)$ and
$[0,\infty)\setminus\kappa(H)\ni\lambda\mapsto h(\lambda)\in\H_0(\lambda)$ is
$\lambda$-independent on each interval of $[0,\infty)\setminus \kappa(H)$. Then, the
finite span $\E$ of such vectors is dense in $\H_0^-$ if Assumption \ref{CondsurgV}
holds with $\mu>1$ (see Proposition \ref{natureH}), and we have the following
inclusions:

\begin{Proposition}\label{Prop_map}
Suppose that Assumption \ref{CondsurgV} holds with $\mu>4$. Then $\E\subset\D_3$ and
$S\;\!\E\subset\D_3$.
\end{Proposition}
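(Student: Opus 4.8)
The plan is to establish both inclusions on each vector $\varphi$ spanning $\E$ (the statements being stable under finite linear combinations), by combining the explicit inverse spectral transformation of Lemma~\ref{Lem_F_0}.(c) with the differentiability of the scattering matrix from Corollary~\ref{Cor_diff}. The inclusion $\E\subset\D_3$ will be elementary; the real content is $S\,\E\subset\D_3$, and this is where the hypothesis $\mu>4$ is used.

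First I would treat $\E\subset\D_3$. Fix a generating vector $\varphi$, so that $F_0(\lambda)\varphi=\rho(\lambda)h(\lambda)$ with $\rho\in C^\infty_{\rm c}\big([0,\infty)\big)$ supported in a compact set $K\subset[0,\infty)\setminus\kappa(H)$ and $h(\lambda)\in\H_0^-(\lambda)$ locally constant on the intervals of $[0,\infty)\setminus\kappa(H)$. Since $\kappa(H)\supset\T$, one can choose $\eta\in C^\infty_{\rm c}(\R\setminus\T)$ with $\eta\equiv1$ on $K$; then $F_0\big(\eta(H_0)\varphi\big)(\lambda)=\eta(\lambda)F_0(\lambda)\varphi=F_0(\lambda)\varphi$, so that $\varphi=\eta(H_0)\varphi$ by Lemma~\ref{Lem_F_0}.(c). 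It remains to show $\varphi\in\H_3(\R)\otimes\ltwo(\Sigma)$. Applying the formula for $F_0^{-1}$ of Lemma~\ref{Lem_F_0}.(c), and using that $\varphi\in\H_0^-$ so only the ``$-$''\,components of $F_0(\lambda)\varphi$ are nonzero, one gets $\varphi=(\F^{-1}\otimes1)\widetilde\phi$ with $\widetilde\phi$ vanishing for $\xi\ge0$ and, for $\xi<0$, equal to the finite sum $\sqrt{2|\xi|}\sum_j\rho(\xi^2+\tau_j)\,h(\xi^2+\tau_j)_j^-$. Because $K$ lies at positive distance from $\T\ni0$, for each relevant $j$ the set $\{\xi<0:\xi^2+\tau_j\in K\}$ is a compact subset of $(-\infty,0)$ on which $\xi\mapsto\xi^2+\tau_j$ is a diffeomorphism into $[0,\infty)\setminus\kappa(H)$ and $\xi\mapsto h(\xi^2+\tau_j)_j^-$ is locally constant; hence $\widetilde\phi\in C^\infty_{\rm c}(\R)\odot\ltwo(\Sigma)\subset\S(\R)\otimes\ltwo(\Sigma)$, so $\varphi\in\S(\R)\otimes\ltwo(\Sigma)\subset\H_3(\R)\otimes\ltwo(\Sigma)$. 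Together with $\varphi=\eta(H_0)\varphi$ this gives $\varphi\in\D_3$.

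For $S\,\E\subset\D_3$ I would run the parallel argument, now exploiting the structure of $S$. As $S$ commutes with $\eta(H_0)$, one has $S\varphi=\eta(H_0)S\varphi$ with the same $\eta$, so it suffices to show $S\varphi\in\H_3(\R)\otimes\ltwo(\Sigma)$. From $F_0SF_0^{-1}=\int_{[0,\infty)}^{\oplus}\d\lambda\,S(\lambda)$ and $S(\lambda)\,\H_0^-(\lambda)\subset\H_0^+(\lambda)$ one gets $\big(F_0S\varphi\big)(\lambda)=\rho(\lambda)S(\lambda)h(\lambda)\in\H_0^+(\lambda)$, so by Lemma~\ref{Lem_F_0}.(c) $S\varphi=(\F^{-1}\otimes1)\Psi$ with $\Psi$ vanishing for $\xi<0$ and, for $\xi\ge0$, equal to the finite sum $\Psi(\xi,\;\!\cdot\;\!)=\sqrt{2\xi}\sum_j\rho(\xi^2+\tau_j)\big(S(\xi^2+\tau_j)h(\xi^2+\tau_j)\big)_j^+$. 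Now I invoke Corollary~\ref{Cor_diff} with $k=3$ (admissible since $\mu>4$): the map $\lambda\mapsto S(\lambda)$ is locally $3$-times H\"older continuously differentiable from $[0,\infty)\setminus\kappa(H)$ to $\B\big(\H_0^-(\infty),\H_0^+(\infty)\big)$, hence, $h$ being locally constant, $\lambda\mapsto S(\lambda)h(\lambda)$ is locally $3$-times continuously differentiable with values in $\ltwo(\Sigma)$ in a neighbourhood of each of the relevant compact sets. Composing with the smooth maps $\xi\mapsto\xi^2+\tau_j$ (whose relevant $\xi$-domains avoid $0$ and whose images lie in $[0,\infty)\setminus\kappa(H)$) and multiplying by the smooth compactly supported scalars $\sqrt{2\xi}\,\rho(\xi^2+\tau_j)$, one finds that $\Psi$ is a $C^3$, compactly supported, $\ltwo(\Sigma)$-valued function of $\xi$, hence $\Psi\in\H^3(\R)\otimes\ltwo(\Sigma)$. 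Since $\F^{-1}$ restricts to a topological isomorphism of $\H^3(\R)=\H^3_0(\R)$ onto $\H_3(\R)=\H^0_3(\R)$, it follows that $S\varphi=(\F^{-1}\otimes1)\Psi\in\H_3(\R)\otimes\ltwo(\Sigma)$, so $S\varphi\in\D_3$.

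The first inclusion and the commutation with $\eta(H_0)$ are routine bookkeeping; the main obstacle is the second inclusion, where producing a vector of $\D_3$ requires three classical derivatives of the $S$-matrix, and these are exactly supplied by Corollary~\ref{Cor_diff} under $\mu>4$, after which only a compactness-and-smoothness check remains. The single point requiring mild care is keeping the compositions with $\xi\mapsto\xi^2+\tau_j$ and the factors $\sqrt{2\xi}$ in the smooth regime, which holds because $\supp\rho$ is disjoint from $\T$ (in particular from $0$), so the corresponding $\xi$-supports are compact subsets of $\R\setminus\{0\}$.
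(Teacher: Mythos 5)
Your argument is correct, and it reaches the conclusion by a route that differs in implementation from the paper's, although both proofs turn on the same two ingredients: Lemma \ref{Lem_F_0}.(c) and Corollary \ref{Cor_diff} with $k=3$ (whence $\mu>4$). The paper never inverts $F_0$ explicitly; instead it shows $\varphi\in\dom(Q^3\otimes1)$ by duality: it computes $F_0(Q^3\otimes1)\psi$ for $\psi\in\S(\R)\odot\ltwo(\Sigma)$ as a third-order differential operator in $\lambda$ with weights $(\lambda-\tau_j)^{\pm k/2}$ (formulas \eqref{Q3}--\eqref{Q3bis}), integrates by parts to throw the $\lambda$-derivatives onto $F_0(\lambda)\varphi=\rho(\lambda)h(\lambda)$ (resp.\ onto $\rho(\lambda)S(\lambda)h(\lambda)$, using Corollary \ref{Cor_diff}), obtains the bound $|\langle(Q^3\otimes1)\psi,\varphi\rangle_{\H_0}|\le{\rm Const.}\,\|\psi\|_{\H_0}$, and concludes via the essential self-adjointness of $Q^3\otimes1$ on $\S(\R)\odot\ltwo(\Sigma)$. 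You instead work on the Fourier side: using the explicit formula for $F_0^{-1}$ you exhibit $\varphi$ and $S\varphi$ as $(\F^{-1}\otimes1)$ of functions of $\xi$ that are $C^3$ (indeed $C^\infty_{\rm c}$ for $\varphi$) with compact support away from $\xi=0$, and then invoke the isomorphism $\F^{-1}:\H^3_0(\R)\to\H^0_3(\R)$. The duality route spares the paper the verification that the $\xi$-space representatives are genuinely $C^3$ with the right supports, at the cost of the explicit weight computation and the integration by parts; your route is more concrete and even gives a bit more (the generators of $\E$ lie in $\S(\R)\odot\ltwo(\Sigma)$), and the verification it requires is indeed available: the support of $\rho$ stays at positive distance from $\kappa(H)\supset\T$ (so the relevant $\xi$-supports are compact in $\R\setminus\{0\}$, only finitely many $j$ contribute, and $\sqrt{2|\xi|}$ is smooth there), the projections $\P_j$ are one-dimensional so each term is a scalar $C^3_{\rm c}$ function times a fixed vector, and the local $3$-fold H\"older differentiability of $\lambda\mapsto S(\lambda)$ from Corollary \ref{Cor_diff} transfers to $\xi\mapsto S(\xi^2+\tau_j)h$ by composition with the smooth maps $\xi\mapsto\xi^2+\tau_j$. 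The only points you pass over silently are harmless: the a.e.\ identification of the direct-integral $S(\lambda)$ with its continuous representative (also glossed in the paper just before Corollary \ref{Cor_diff}) and the identification of $\H^3(\R)\otimes\ltwo(\Sigma)$ with the corresponding vector-valued Sobolev space, which is immediate here since your $\Psi$ is a finite sum of simple tensors.
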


\begin{proof}
If $\varphi\in\E$, there exists a compact set $I$ in $[0,\infty)\setminus\kappa(H)$
such that $E^{H_0}(I)\varphi=\varphi$. Thus, in order to show that $\varphi\in\D_3$
one has to verify that $\varphi\in\H_3(\R)\otimes\ltwo(\Sigma)=\dom(Q^3\otimes1)$.
So, let $\psi\in\S(\R)\odot\ltwo(\Sigma)$. Then, using \eqref{eq_T_0} and Lemma
\ref{Lem_F_0}.(c), we obtain for each $\lambda\in[0,\infty)\setminus\T$
\begin{equation}\label{Q3}
\big[F_0\big(Q^3\otimes1\big)\psi\big](\lambda)_j
=\big\{i\zeta(\lambda)_j^-,-i\zeta(\lambda)_j^+\big\},
\end{equation}
where
\begin{align}
\zeta^\pm_j(\lambda)
&:=\frac38(\lambda-\tau_j)^{-3/2}(F_0\psi)(\lambda)^\pm_j
+\frac32(\lambda-\tau_j)^{-1/2}
\frac\d{\d\lambda}(F_0\psi)(\lambda)^\pm_j\nonumber\\
&\qquad+18(\lambda-\tau_j)^{1/2}
\frac{\d^2}{\d\lambda^2}(F_0\psi)(\lambda)^\pm_j
+8(\lambda-\tau_j)^{3/2}
\frac{\d^3}{\d\lambda^3}(F_0\psi)(\lambda)^\pm_j.\label{Q3bis}
\end{align}
The r.h.s. of \eqref{Q3}-\eqref{Q3bis} with $\psi\in\S(\R)\odot\ltwo(\Sigma)$
replaced by $\varphi\in\E$ defines a vector $\widetilde\varphi$ belonging to
$\widehat\H_0$. Thus, using partial integration for the terms involving derivatives
with respect to $\lambda$, one finds that
$$
\left|\big\langle(Q^3\otimes1)\psi,\varphi\big\rangle_{\H_0}\right|
=\left|\big\langle F_0\psi,\widetilde\varphi\big\rangle_{\widehat \H_0}\right|
\le{\rm Const.}\;\!\|\psi\|_{\H_0}
$$
for all $\psi\in\S(\R)\odot\ltwo(\Sigma)$. Since $Q^3\otimes1$ is essentially
self-adjoint on $\S(\R)\odot\ltwo(\Sigma)$, this implies that
$\varphi\in\dom(Q^3\otimes1)$, and therefore the inclusion $\E\subset\D_3$.

For the second inclusion $S\;\!\E\subset\D_3$, one observes that the function
$
[0,\infty)\setminus\kappa(H)\ni\lambda\mapsto S(\lambda)
\in\B\big(\H_0^-(\infty),\H_0^+(\infty)\big)
$
is locally $3$-times H\"older continuously differentiable due to Corollary
\ref{Cor_diff}. Thus, the above argument with $\varphi$ replaced by $S\varphi$ gives
the result.
\end{proof}

\begin{Remark}
We believe that the statement of Proposition \ref{Prop_map} could be replaced by the
following more general statement but we could not find a simple proof for it: Suppose
that Assumption \ref{CondsurgV} holds with $\mu>3$, then there exists $s>2$ such that
$\E\subset\D_s$ and $S\;\!\E\subset\D_s$. Such a result would lead to better mapping
properties of the scattering operator, and thus the necessary assumption on $\mu$ for
the existence of the time delay in the next section could be weakened accordingly.
\end{Remark}

\subsection{Time delay}\label{Sec_Time}

We introduce in this section the notion time delay defined in terms of sojourn times,
and then we prove its existence and its equality with the so-called Eisenbud-Wigner
time delay. All proofs are based on the abstract framework developed in \cite{RT11}
and on the various estimates obtained in the previous sections.

We define the sojourn times by particularising to our present model the definitions
of \cite{RT11}. For that purpose, we start by choosing a position observable in
$\H_0$ which satisfies the special relations with respect to $H_0$ required in
\cite[Sec.~2]{RT11}. The most natural choice is the position operator
$\Phi_0\equiv Q\otimes1$ along the $\R$-axis of $\R\times\Sigma$ already introduced
in \eqref{defphi0}. Then, we define the sojourn time for the free evolution
$\e^{-itH_0}$ as follows: Given $\chi_{[-1,1]}$ the characteristic function for the
set $[-1,1]$, we set for $\varphi\in \D_0$ and $r>0$
$$
T_r^0(\varphi):=\int_\R\d t\,\big\langle\e^{-itH_0}\varphi,
\chi_{[-1,1]}(\Phi_0/r)\e^{-itH_0}\varphi\big\rangle_{\H_0},
$$
where the integral has to be understood as an improper Riemann integral. The operator
$\chi_{[-1,1]}(\Phi_0/r)$ is the projection onto the subspace
$E^{\Phi_0}([-r,r])\H_0$ of states localised on the cylinder $[-r,r]\times\Sigma$.
Therefore, if $\|\varphi\|_{\H_0}=1$, then $T_r^0(\varphi)$ can be interpreted as the
time spent by the evolving state $\e^{-itH_0}\varphi$ inside $[-r,r]\times\Sigma$.

When defining the sojourn time for the full evolution $\e^{-itH}$, one faces the
problem that the localisation operator $\chi_{[-1,1]}(\Phi_0/r)$ acts in $\H_0$ while
the operator $\e^{-itH}$ acts in $\H$. The simplest solution to this problem is to
consider the operator $\chi_{[-1,1]}(\Phi_0/r)$ injected in $\H$ via $J$, \ie
$J\chi_{[-1,1]}(\Phi_0/r)J^*\in\B(\H)$, and for the present model this solution turns
out to be appropriate (see nonetheless \cite[Sec.~4]{RT11} for a more general
approach). It is then natural to define the sojourn time for the full evolution
$\e^{-itH}$ by the expression
\begin{equation*}
T_{r,1}(\varphi):=\int_\R\d t\,\big\langle \e^{-itH}W_-\varphi,
J\chi_{[-1,1]}(\Phi_0/r)J^*\e^{-itH}W_-\varphi\big\rangle_\H.
\end{equation*}
Another sojourn time appearing in this context is
\begin{equation*}
T_2(\varphi):=\int_\R\d t\,\big\langle\e^{-itH}W_-\varphi,\big(1-JJ^*\big)
\e^{-itH}W_-\varphi\big\rangle_\H.
\end{equation*}
The finiteness of these expressions is proved below for suitable $\varphi$ under
Assumption \ref{CondsurgV} with $\mu$ big enough. The term $T_{r,1}(\varphi)$ can be
interpreted as the time spent by the scattering state $\e^{-itH}W_-\varphi$, injected
in $\H_0$ via $J^*$, inside $E^{\Phi_0}([-r,r])\H_0$. The term $T_2(\varphi)$ can be
seen as the time spent by the scattering state $\e^{-itH}W_-\varphi$ inside the
subset $\big(1-JJ^*\big)\H$ of $\H$. Roughly speaking, this corresponds to the time
spent by the state in the relatively compact set $\Mc\subset M$.
Within this framework, we say that
\begin{equation*}
\tau_r(\varphi):=T_r(\varphi)-\12\big\{T_r^0(\varphi)+T_r^0(S\varphi)\big\},
\end{equation*}
with $T_r(\varphi):=T_{r,1}(\varphi)+T_2(\varphi)$, is the symmetrized time delay of
the scattering system $(H_0,H,J)$ with incoming state $\varphi$. This symmetrized
version of the usual time delay
$$
\tau_r^{\rm in}(\varphi):=T_r(\varphi)-T_r^0(\varphi)
$$
is known to be the only time delay having a well-defined limit as $r\to\infty$ for
complicated scattering systems (see for example
\cite{AJ07,BO79,GT07,Mar75,Mar81,SM92,Smi60,Tie06}). Our main result, properly stated
below, is thus the existence of the limit $\lim_{r\to\infty}\tau_r(\varphi)$ and its
identity with the Eisenbud-Wigner time delay which we now define.

Given a localisation function $f:\R\to[0,\infty)$ and an abstract pair of operators
$(H_0,\Phi_0)$ satisfying some compatibility assumptions, it is shown in \cite{RT10}
how to construct a natural time operator $T_f$ for $H_0$. Now, for the localisation
function $f=\chi_{[-1,1]}$ and for our pair $(H_0,\Phi_0)$ of operators, this
abstract construction simplifies drastically, and a rapid inspection of
\cite[Prop.~2.6.(b)]{Tie06} and \cite[Thm.~5.5]{RT10} shows that the general time
operator $T_f$ introduced in \cite[Sec.~5]{RT10} reduces to the operator $T$ given by
\begin{equation}\label{nemenveutpas}
\langle \varphi,T\varphi\rangle_{\H_0}:=
\textstyle\big\langle\varphi,
\frac14\big(QP^{-1}+P^{-1}Q\big)\otimes1
\varphi\big\rangle_{\H_0},\quad\varphi\in \D_1.
\end{equation}
The operator $\frac14\big(QP^{-1}+P^{-1}Q\big)$,
known as the Aharonov-Bohm operator, is the usual time operator for the
one-dimensional Schr\"odinger operator $P^2$ (see \cite[Sec.~1]{AB61} and
\cite[Sec.~1]{Miy01}).

We are now in a suitable position to prove the existence of the limit
$\lim_{r\to\infty}\tau_r(\varphi)$ for incoming states $\varphi$ in the dense subset
$\E\subset\H_0^-$ introduced in the previous section:

\begin{Theorem}[Existence of time delay]\label{thm_main}
Suppose that Assumption \ref{CondsurgV} holds with $\mu> 4$. Then, one has for each
$\varphi\in\E$
\begin{equation}\label{Eisenbud_sym}
\lim_{r\to\infty}\tau_r(\varphi)
=-\big\langle\varphi,S^*\big[T,S\big]\varphi\big\rangle_{\H_0},
\end{equation}
with $T$ given by \eqref{nemenveutpas}.
\end{Theorem}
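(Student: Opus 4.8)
The plan is to deduce \eqref{Eisenbud_sym} from the abstract time delay theorem of \cite{RT11}, whose hypotheses we verify one by one using the material assembled in Sections \ref{sec_spect} and \ref{Sec_Scatt}. That abstract framework requires four ingredients: a pair $(H_0,\Phi_0)$ compatible with the construction of a time operator as in \cite[Sec.~2]{RT11}; the existence of the wave operators together with their asymptotic completeness; enough regularity of the incoming state $\varphi$ and of its image $S\varphi$; and an $\lone$-type condition on the difference between the free and the full evolutions, which secures the finiteness of the sojourn times $T_{r,1}(\varphi)$, $T_2(\varphi)$ and their convergence, after symmetrisation, as $r\to\infty$.

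For the first ingredient, one notes that $H_0=P^2\otimes1+1\otimes\Delta_\Sigma$ and $\Phi_0=Q\otimes1$ act nontrivially only on the first tensor factor, so that the commutator relations required in \cite[Sec.~2]{RT11} reduce to those of the textbook pair $(P^2,Q)$ on $\ltwo(\R)$, already worked out in \cite[Sec.~2.6]{Tie06} and \cite[Sec.~5]{RT10}. In particular, for the localisation function $f=\chi_{[-1,1]}$ the associated time operator is exactly the operator $T$ of \eqref{nemenveutpas}, the Aharonov-Bohm operator tensored with the identity on $\ltwo(\Sigma)$, as recorded just after \eqref{nemenveutpas}. The regularity ingredient is furnished by Proposition \ref{Prop_map}: since $\mu>4$, one has $\E\subset\D_3$ and $S\;\!\E\subset\D_3$, so that $\varphi$ and $S\varphi$ both belong to $\H_3(\R)\otimes\ltwo(\Sigma)\subset\dom(\Phi_0^3)$, which is more than the $\Phi_0$-weighted regularity (of order $>2$, by the Remark following Proposition \ref{Prop_map}) required by the abstract theorem, and also guarantees the finiteness of the free sojourn times $T_r^0(\varphi)$ and $T_r^0(S\varphi)$.

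It then remains to supply the dynamical inputs. The wave operators $W_\pm$ exist and have initial subspaces $\H_0^\pm$ by Proposition \ref{existence}; the asymptotic completeness $\Ran\big(W_\pm(H,H_0,J)\big)=\H_{\rm ac}(H)$ holds by Proposition \ref{asymptotc}; the propagation estimate $\slim_{t\to\pm\infty}(JJ^*-1)\e^{-itH}P_{\rm ac}(H)=0$ holds by Lemma \ref{small}, which controls the ``interaction-region'' sojourn time $T_2(\varphi)$; and the integrability $\big\|(J^*W_\pm-1)\e^{-itH_0}\varphi_\pm\big\|_{\H_0}\in\lone(\R_\pm,\d t)$ for $\varphi_\pm\in\D_\mu\cap\H_0^\pm$ holds by Lemma \ref{condL1}, which only needs $\mu>2$ and so is not the restrictive hypothesis. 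Feeding these facts into \cite{RT11}, the limit $\lim_{r\to\infty}\tau_r(\varphi)$ exists for every $\varphi\in\E$ and equals $-\langle\varphi,S^*[T,S]\varphi\rangle_{\H_0}$, which is precisely \eqref{Eisenbud_sym}.

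I expect the one genuinely delicate point -- hidden inside the invocation of \cite{RT11} -- to be the justification that the full sojourn time $T_{r,1}(\varphi)+T_2(\varphi)$, built with the localisation operator $\chi_{[-1,1]}(\Phi_0/r)$ injected into $\H$ via $J$, is finite and that its difference with $\tfrac12\big(T_r^0(\varphi)+T_r^0(S\varphi)\big)$ converges; this is where the three estimates above interlock, and it is the propagation of the decay rate $\mu$ through Proposition \ref{Prop_map} (rather than through Lemmas \ref{small} and \ref{condL1}) that forces the hypothesis $\mu>4$. No new estimate beyond those already established is needed; the proof consists in matching the present scattering system to the abstract one of \cite{RT11} and reading off the conclusion.
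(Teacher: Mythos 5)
Your proposal follows essentially the same route as the paper: Theorem \ref{thm_main} is obtained by invoking the abstract result \cite[Thm.~4.3]{RT11} and verifying its hypotheses through the tensorial structure of the pair $(H_0,\Phi_0)$ (the operators $H_0(x)=\e^{-ix\Phi_0}H_0\e^{ix\Phi_0}=(P+x)^2\otimes1+1\otimes\Delta_\Sigma$ mutually commute), Proposition \ref{Prop_map} (which is indeed where $\mu>4$ enters), and Lemma \ref{condL1}, the wave operator results and Lemma \ref{small} providing the standing framework.

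The one step you pass over, and which the paper treats as the opening point of its proof, is that the localisation function $f=\chi_{[-1,1]}$ used to define the sojourn times is not smooth, whereas \cite[Thm.~4.3]{RT11} is formulated for smooth localisation functions. Your citation of \cite[Sec.~2.6]{Tie06} and \cite[Sec.~5]{RT10} only covers the construction of the time operator $T$ in \eqref{nemenveutpas}, not the applicability of the abstract time delay theorem itself; the paper observes that smoothness of $f$ is used in the proof of \cite[Thm.~4.3]{RT11} only through Theorem 3.4 and Lemma 4.2 of that reference, that the former extends to $\chi_{[-1,1]}$ by \cite[Prop.~2.6.(b)]{Tie06}, and that the proof of the latter goes through unchanged for $\chi_{[-1,1]}$. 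Without this remark the invocation of the abstract theorem is not licensed as stated, so it should be added; a further routine check made in the paper and left implicit in your argument is that the set of critical values $\kappa(H_0)$ in the sense of \cite[Def.~2.3]{RT11} coincides with $\T$, so that $\E$ consists of admissible incoming states. Apart from these points your verification matches the paper's.
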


\begin{proof}
The proof consists in an application of the abstract result \cite[Thm.~4.3]{RT11}.
However, we first have to note that this theorem also applies to our non-smooth
localisation function $f=\chi_{[-1,1]}$. Indeed, the only points where the smoothness
of the localisation $f$ is required in the proof of \cite[Thm.~4.3]{RT11} is for
applying Theorem 3.4 and Lemma 4.2 of \cite{RT11}. Now, the result of
\cite[Thm.~3.4]{RT11} also holds for $f=\chi_{[-1,1]}$ due to
\cite[Prop.~2.6.(b)]{Tie06}, and a rapid inspection of \cite[Lemma 4.2]{RT11} shows
that its proof also holds for $f=\chi_{[-1,1]}$. So, Theorem 4.3 of \cite{RT11} can
be applied, and we are left with the verification of its assumptions.

For that purpose, one first observes that with our choice of operator $\Phi_0$, one
has for each $x\in\R$
$$
H_0(x):=\e^{-ix\Phi_0}H_0\e^{ix\Phi_0}
=(P+x)^2\otimes1+1\otimes\Delta_\Sigma.
$$
Therefore, the operators $H_0(x)$, $x\in\R$, mutually commute (Assumption 2.1 of
\cite[Thm.~4.3]{RT11}), and the regularity of $H_0$ with respect to $\Phi_0$ is
easily checked (Assumption 2.2 of \cite[Thm.~4.3]{RT11}). In addition, a direct
calculation using \eqref{decompo} shows that the set $\kappa(H_0)$ of critical values
of $H_0$, introduced in \cite[Def.~2.3]{RT11}, coincides with $\T$. Furthermore, it
follows from Proposition \ref{Prop_map} that $\varphi\in\H_0^-\cap\D_3$ and
$S\varphi\in\D_3$. Finally, since $S\varphi$ also belongs to $\H_0^+$, it follows
from Lemma \ref{condL1} that both conditions of \cite[Eq.~(4.6)]{RT11} are satisfied.
Thus, Theorem 4.3 of \cite{RT11} applies and leads to the claim.
\end{proof}

The interest of the equality between both definitions of time delay  is twofold. It
generalizes and unifies various results on time delay scattered in the literature.
And it establishes a relation between the two formulations of scattering theory:
Eisenbud-Wigner time delay is a product of the stationary formulation while
expressions involving sojourn times are defined using the time-dependent formulation.
An equality relating these two formulations is always welcome.

\begin{Remark}
Since $T$ is equal to the Aharonov-Bohm operator \eqref{nemenveutpas}, the r.h.s. of
\eqref{Eisenbud_sym} can be even further simplified. Indeed, following
\cite[Rem.~2.7]{Tie06} one can check that the operator $F_0TF_0^{-1}$ acts as
$i\frac\d{\d\lambda}$ in the spectral representation of $H_0$. Thus, under the
hypotheses of Theorem \ref{thm_main}, the relation \eqref{Eisenbud_sym} reads
\begin{equation*}
\lim_{r\to\infty}\tau_r(\varphi)
=\int_0^\infty \d \lambda\,\left\langle(F_0\varphi)(\lambda),
-iS(\lambda)^*\left(\frac{\d S(\lambda)}{\d \lambda}\right)(F_0\varphi)(\lambda)
\right\rangle_{\H_0(\lambda)}.
\end{equation*}
\end{Remark}

\begin{Remark}
We emphasize that the symmetrized time delay is the only global time delay existing
in our framework. Indeed, as in the case of quantum waveguides \cite{Tie06}, the
scattering process does preserve the total energy $H_0$ but does not preserve the
longitudinal kinetic energy $P^2\otimes1$ alone (rearrangements between the
transverse and longitudinal components of the energy occur during the scattering).
This is in agreement with the general criterion \cite[Thm.~5.3]{RT11} which, here,
implies that the unsymmetrized time delay with incoming state $\varphi\in\E$ exists
if $\big[P^2\otimes1,S\big]\varphi=0$.
\end{Remark}

\section{Appendix}\label{Sec_Appendix}
\setcounter{equation}{0}

We prove in this section various mapping properties of the operators $H_0$ and $H$.
We start with a rather elementary lemma on the position operator $Q$ and the momentum
operator $P$ in $\ltwo(\R)$.

\begin{Lemma}\label{lemme_borne}
Take $s,\tau\ge0$ and $z\in\C\setminus[0,\infty)$. Then, there exists a constant
$\textsc c\equiv\textsc c(s,z)>0$ independent of $\tau$ such that
$$
\big\|(P^2+\tau-z)\langle Q\rangle^{-s}(P^2+\tau-z)^{-1}
\langle Q\rangle^s\big\|_{\B(\ltwo(\R))}\le\textsc c.
$$
\end{Lemma}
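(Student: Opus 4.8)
The plan is to prove the estimate by combining the explicit integral kernel of the one–dimensional resolvent $(P^2+\tau-z)^{-1}$ with Schur's test. Set $A_\tau:=(P^2+\tau-z)\langle Q\rangle^{-s}(P^2+\tau-z)^{-1}\langle Q\rangle^s$. Since multiplication by $\langle Q\rangle^{-s}$ maps $\H^2(\R)=\dom(P^2)$ into itself, $A_\tau$ is well defined on $\dom(\langle Q\rangle^s)$, with $C^\infty_{\rm c}(\R)$ as a core, and there the commutator identity $(P^2+\tau-z)\langle Q\rangle^{-s}=\langle Q\rangle^{-s}(P^2+\tau-z)+[P^2,\langle Q\rangle^{-s}]$ together with $(P^2+\tau-z)(P^2+\tau-z)^{-1}=1$ gives
$$
A_\tau=1+R_\tau,\qquad R_\tau:=\big[P^2,\langle Q\rangle^{-s}\big](P^2+\tau-z)^{-1}\langle Q\rangle^s,
$$
where $[P^2,\langle Q\rangle^{-s}]=-2i(\langle Q\rangle^{-s})'P-(\langle Q\rangle^{-s})''$ and the functions $(\langle Q\rangle^{-s})'$, $(\langle Q\rangle^{-s})''$ are bounded and $O(\langle x\rangle^{-s-1})$. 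Hence it suffices to bound $\|R_\tau\|_{\B(\ltwo(\R))}$ uniformly in $\tau\ge0$.

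Next I would use that $(P^2+\tau-z)^{-1}$ is convolution by the kernel $G_\tau(x,y)=\frac{i}{2k_\tau}\e^{ik_\tau|x-y|}$, where $k_\tau:=\sqrt{z-\tau}$ is chosen with $\im k_\tau>0$ (legitimate since $z-\tau\in\C\setminus[0,\infty)$ whenever $\tau\ge0$). From $\partial_xG_\tau(x,y)=-\tfrac12\sgn(x-y)\e^{ik_\tau|x-y|}$ one gets $|G_\tau(x,y)|\le\frac{1}{2|k_\tau|}\e^{-(\im k_\tau)|x-y|}$ and $|\partial_xG_\tau(x,y)|\le\tfrac12\e^{-(\im k_\tau)|x-y|}$, so $R_\tau$ acts on $C^\infty_{\rm c}(\R)$ as the integral operator with kernel
$$
r_\tau(x,y)=\Big(-2(\langle x\rangle^{-s})'\,\partial_xG_\tau(x,y)-(\langle x\rangle^{-s})''\,G_\tau(x,y)\Big)\langle y\rangle^s,
$$
and therefore $|r_\tau(x,y)|\le{\rm Const.}(s,z)\,\langle x\rangle^{-s-1}\e^{-(\im k_\tau)|x-y|}\langle y\rangle^s$, granted $|k_\tau|$ stays bounded away from $0$.

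The uniformity in $\tau$ — which I expect to be the main obstacle — rests on the two elementary facts that $|k_\tau|=|z-\tau|^{1/2}\ge\mathrm{dist}\big(z,[0,\infty)\big)^{1/2}>0$ and that $\im k_\tau\ge\delta_z>0$, both uniformly in $\tau\ge0$. The first is just $\inf_{\tau\ge0}|z-\tau|=\mathrm{dist}\big(z,[0,\infty)\big)>0$. For the second, observe that as $\tau$ runs over $[0,\infty)$ the point $w=z-\tau$ stays in the half–plane $\{\re\,w\le\re\,z\}$ and at distance $\ge\mathrm{dist}(z,[0,\infty))$ from $[0,\infty)$: where $\re\,w\le0$ one has $\im\sqrt w\ge|w|^{1/2}/\sqrt2$, and the remaining portion (which can occur only if $\re\,z>0$, hence $\im\,z\neq0$) is a compact subset of $\C\setminus[0,\infty)$ on which $\im\sqrt{\;\cdot\;}$ is continuous and strictly positive. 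So $\e^{-(\im k_\tau)|x-y|}\le\e^{-\delta_z|x-y|}$ with $\delta_z>0$ depending only on $z$; this is precisely where the hypothesis $\tau\ge0$ is indispensable, since it keeps $z-\tau$ bounded to the left and uniformly away from $[0,\infty)$, which is exactly what prevents $\sqrt{z-\tau}$ from approaching the real axis (equivalently, the resolvent kernel $G_\tau$ from decaying arbitrarily slowly).

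Finally, Schur's test with Peetre's inequality $\langle y\rangle\le\sqrt2\,\langle x\rangle\langle x-y\rangle$ (and $\langle x\rangle^{-1}\le\sqrt2\,\langle y\rangle^{-1}\langle x-y\rangle$) gives
$$
\sup_{x\in\R}\int_\R|r_\tau(x,y)|\,\d y\le{\rm Const.}\int_\R\e^{-\delta_z|u|}\langle u\rangle^s\,\d u\le{\rm Const.}(s,z),
$$
$$
\sup_{y\in\R}\int_\R|r_\tau(x,y)|\,\d x\le{\rm Const.}\int_\R\e^{-\delta_z|u|}\langle u\rangle^{s+1}\,\d u\le{\rm Const.}(s,z),
$$
the exponential decay compensating the growth of $\langle y\rangle^s$ in both integrals (it forces $|x-y|$ to be small, so $\langle y\rangle^s$ combines with $\langle x\rangle^{-s-1}$ into an integrable negative power of $\langle x\rangle$). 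Hence $\|R_\tau\|_{\B(\ltwo(\R))}\le{\rm Const.}(s,z)$ and $\|A_\tau\|_{\B(\ltwo(\R))}\le1+{\rm Const.}(s,z)=:\textsc c(s,z)$, independently of $\tau\ge0$. The only routine point left aside here is the verification that the integral operator of kernel $r_\tau$ really represents $A_\tau-1$ on $C^\infty_{\rm c}(\R)$, which follows at once from $\langle Q\rangle^{-s}\H^2(\R)\subset\H^2(\R)$ and differentiation under the integral sign.
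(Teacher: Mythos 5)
Your proof is correct, and it takes a genuinely different route from the paper for the crucial uniform-in-$\tau$ estimate. Both arguments start from the same commutator identity, writing the operator as $1+[P^2,\langle Q\rangle^{-s}](\cdots)$; but the paper then inserts $\langle P\rangle^{-1}\langle P\rangle$ so as to split off a $\tau$-independent bounded factor $B=[P^2,\langle Q\rangle^{-s}]\langle P\rangle^{-1}\langle Q\rangle^s$ and reduces the claim to the uniform boundedness of $\langle Q\rangle^{-s}\langle P\rangle(P^2+\tau-z)^{-1}\langle Q\rangle^s$, which it disposes of only by indicating an induction on $s$ or an iterative commutator computation, with the details left to the reader. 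You instead exploit the explicit one-dimensional resolvent kernel $G_\tau(x,y)=\frac{i}{2k_\tau}\e^{ik_\tau|x-y|}$, $k_\tau=\sqrt{z-\tau}$, and run a Schur test with Peetre's inequality; the uniformity in $\tau\ge0$ is then isolated in the two elementary facts $|z-\tau|\ge\mathrm{dist}(z,[0,\infty))$ and $\im k_\tau\ge\delta_z>0$, both of which you justify correctly (the half-plane estimate $\im\sqrt w\ge|w|^{1/2}/\sqrt2$ for $\re w\le0$ plus compactness on the remaining segment of the ray $\{z-\tau\}_{\tau\ge0}$). Your kernel computation is right: $[P^2,\langle Q\rangle^{-s}]=-2i(\langle Q\rangle^{-s})'P-(\langle Q\rangle^{-s})''$ with coefficients $O(\langle x\rangle^{-s-1})$, $|\partial_xG_\tau|\le\frac12\e^{-\im k_\tau|x-y|}$, $|G_\tau|\le\frac1{2|k_\tau|}\e^{-\im k_\tau|x-y|}$, and the two Schur integrals close exactly as you say. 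What your approach buys is a complete, self-contained and fully quantitative proof valid for every real $s\ge0$ in one stroke, with the dependence $\textsc c(s,z)$ visible; its limitation is that it leans on the explicit kernel, hence is tied to the one-dimensional constant-coefficient situation (which is all the lemma requires), whereas the paper's purely operator-theoretic manipulation is the kind that generalizes to settings without explicit resolvent kernels, at the price of leaving the $\tau$-uniform estimate as an exercise. The only points you defer (well-definedness on a core and representing $R_\tau$ by the kernel via differentiation under the integral) are indeed routine for $\varphi\in C^\infty_{\rm c}(\R)$.
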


\begin{proof}
First, one observes that
$
(P^2+\tau-z)\langle Q\rangle^{-s}(P^2+\tau-z)^{-1}\langle Q\rangle^s
$
belongs to $\B\big(\ltwo(\R)\big)$ due standard properties of the weighted Sobolev
spaces defined in terms of $\langle Q\rangle$ and $\langle P\rangle$ (see
\cite[Sec.~4.1]{ABG}). Furthermore, one has on $\S(\R)$ the equalities
\begin{align*}
(P^2+\tau-z)\langle Q\rangle^{-s}(P^2+\tau-z)^{-1}
\langle Q\rangle^s
&=1+(P^2+\tau-z)\big[\langle Q\rangle^{-s},(P^2+\tau-z)^{-1}\big]
\langle Q\rangle^s\\
&=1+\big[P^2,\langle Q\rangle^{-s}\big]\langle P\rangle^{-1}
\langle Q\rangle^s\langle Q\rangle^{-s}\langle P\rangle
(P^2+\tau-z)^{-1}\langle Q\rangle^s\\
&=1+B\langle Q\rangle^{-s}\langle P\rangle(P^2+\tau-z)^{-1}
\langle Q\rangle^s,
\end{align*}
with $B:=\big[P^2,\langle Q\rangle^{-s}\big]\langle P\rangle^{-1}\langle Q\rangle^s$
bounded and independent of $\tau$. Therefore, in order to prove the claim it is
sufficient to show that the bounded operator
$
\langle Q\rangle^{-s}\langle P\rangle(P^2+\tau-z)^{-1}\langle Q\rangle^s
$
has its norm dominated by a constant independent of $\tau$. This can easily be done
either by induction on $s$ or by computing iteratively the commutator of
$(P^2+\tau-z)^{-1}$ with $\langle Q\rangle^s$. Details are left to the reader.
\end{proof}

For the next proposition, we recall that $H_0$ and $\Phi_0$ satisfy
$H_0=P^2\otimes1+1\otimes\Delta_\Sigma$ and $\Phi_0=Q\otimes1$ in $\H_0$.

\begin{Proposition}\label{mapping_H_not}
Let $z\in\C\setminus[0,\infty)$, then
\begin{enumerate}
\item[(i)] for any $s\ge0$ the operator
$
\langle H_0\rangle\langle\Phi_0\rangle^{-s}(H_0-z)^{-1}
\langle\Phi_0\rangle^s
$,
defined on $\dom\big(\langle\Phi_0\rangle^s\big)$, is well-defined and extends
continuously to an element of $\B(\H_0)$,
\item[(ii)] $(H_0-z)^{-1}$ belongs to
$\B\big(\dom(\langle\Phi_0\rangle^t),\dom(\langle\Phi_0\rangle^t)\big)$ for each
$t\in\R$,
\item[(iii)] one has the inclusion
$
(H_0-z)^{-1}\big(\S(\R)\odot C^\infty(\Sigma)\big)
\subset\big(\S(\R)\odot C^\infty(\Sigma)\big)
$.
\end{enumerate}
\end{Proposition}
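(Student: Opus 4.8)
The plan is to exploit the tensorial structure $H_0 = P^2 \otimes 1 + 1 \otimes \Delta_\Sigma$ together with the spectral decomposition \eqref{decompo}, which reduces every statement about $H_0$ to a family of statements about the one-dimensional operators $P^2 + \tau_j$ in $\ltwo(\R)$, uniformly in $j$. The uniformity is exactly what Lemma \ref{lemme_borne} provides, so that lemma is the real engine of the proof. Throughout, I would work on the core $\S(\R)\odot C^\infty(\Sigma)$ of $H_0$ and only pass to closures at the end.

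For (i), I would write, using \eqref{decompo} and $\langle\Phi_0\rangle = \langle Q\rangle \otimes 1$,
\begin{equation*}
\langle H_0\rangle\langle\Phi_0\rangle^{-s}(H_0-z)^{-1}\langle\Phi_0\rangle^s
= \sum_{j\in\N}\Big(\langle P^2+\tau_j\rangle\langle Q\rangle^{-s}(P^2+\tau_j-z)^{-1}\langle Q\rangle^s\Big)\otimes\P_j
\end{equation*}
on the dense set $\S(\R)\odot C^\infty(\Sigma)$. Since $\langle P^2+\tau_j\rangle$ and $(P^2+\tau_j-z)$ differ by a bounded operator with norm controlled uniformly in $j$, and since $\langle Q\rangle^{\pm s}$ commutes with $P^2$ modulo lower-order terms, each summand is bounded in $\B(\ltwo(\R))$ by a constant independent of $j$ — this is precisely the content of Lemma \ref{lemme_borne} after absorbing $\langle P^2+\tau_j\rangle(P^2+\tau_j-z)^{-1}$, which is itself uniformly bounded. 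The orthogonal sum of uniformly bounded operators is bounded, which gives the claim. For (ii), I would treat $t\ge 0$ first: boundedness of $(H_0-z)^{-1}$ on $\dom(\langle\Phi_0\rangle^t)$ is equivalent to boundedness of $\langle\Phi_0\rangle^t(H_0-z)^{-1}\langle\Phi_0\rangle^{-t}$ on $\H_0$, which follows from the same tensorial reduction and Lemma \ref{lemme_borne} (with the roles of the weights interchanged, i.e. estimating $\langle Q\rangle^t(P^2+\tau_j-z)^{-1}\langle Q\rangle^{-t}$, equally covered by that lemma's proof). The case $t<0$ follows by taking adjoints, since $(H_0-\bar z)^{-1}$ is of the same type and $\langle\Phi_0\rangle^t$ is self-adjoint. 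For (iii), given $\varphi\in\S(\R)\odot C^\infty(\Sigma)$, I would note that $\varphi$ is a finite sum $\sum_j \varphi_j\otimes\P_j$ with $\varphi_j\in\S(\R)$, hence $(H_0-z)^{-1}\varphi = \sum_j (P^2+\tau_j-z)^{-1}\varphi_j\otimes\P_j$ is again a finite sum, and $(P^2+\tau_j-z)^{-1}$ preserves $\S(\R)$ because its symbol $(\xi^2+\tau_j-z)^{-1}$ is a Schwartz multiplier (smooth, bounded with all derivatives bounded, since $\tau_j - z$ stays away from $[0,\infty)$). That $\P_j\ltwo(\Sigma)\subset C^\infty(\Sigma)$ is clear by elliptic regularity for $\Delta_\Sigma$.

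The main obstacle, such as it is, is bookkeeping rather than conceptual: one must be careful that all the constants coming out of the commutator manipulations in Lemma \ref{lemme_borne} are genuinely independent of the eigenvalue parameter $\tau_j$ (which runs to $+\infty$), so that the orthogonal sum over $j$ converges in operator norm. This is already guaranteed by the statement of Lemma \ref{lemme_borne}, so the proof is essentially a matter of assembling the pieces; I would keep it short and refer to that lemma for the uniform bounds, leaving the routine commutator computations to the reader as the paper already does elsewhere.
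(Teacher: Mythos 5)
Your parts (i) and (ii) are sound. Part (i) is essentially the paper's own proof: the uniform-in-$\tau_j$ bound from Lemma \ref{lemme_borne}, combined with the uniform boundedness of $\langle P^2+\tau_j\rangle(P^2+\tau_j-z)^{-1}$, applied blockwise along the decomposition over $1\otimes\P_j$ (the paper just phrases the summation more carefully, via the partial sums $F_N$ and a strong limit on $\dom(\langle\Phi_0\rangle^s)$). For (ii) you take a genuinely different route: the paper invokes the abstract result \cite[Prop.~5.3.1]{ABG}, using that $H_0$ is of class $C^\infty(\Phi_0)$, whereas you reduce to the uniform bound on $\langle Q\rangle^t(P^2+\tau_j-z)^{-1}\langle Q\rangle^{-t}$ — which indeed follows from Lemma \ref{lemme_borne} after multiplying by the uniformly bounded $(P^2+\tau_j-z)^{-1}$ and taking adjoints — and then handle $t<0$ by duality. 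Your argument is more self-contained; the paper's is shorter but leans on the $C^\infty(\Phi_0)$ machinery.

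The genuine gap is in (iii). A general element of the \emph{algebraic} tensor product $\S(\R)\odot C^\infty(\Sigma)$ is a finite sum of elementary tensors $\varphi_k\otimes\psi_k$ with $\psi_k\in C^\infty(\Sigma)$ arbitrary; it is \emph{not} a finite sum over the eigenprojections $\P_j$ of $\Delta_\Sigma$, since a smooth function on $\Sigma$ generically has infinitely many nonzero components $\P_j\psi_k$ (already on the circle, a smooth function has a full Fourier series). Consequently $(H_0-z)^{-1}\varphi=\sum_{j}\big((P^2+\tau_j-z)^{-1}\otimes\P_j\big)\varphi$ is an infinite series, and knowing that each summand is a nice elementary tensor does not place the sum in the algebraic tensor product; your "finite sum" step fails, and no control of the series will land you literally in $\S(\R)\odot C^\infty(\Sigma)$ by this route. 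The paper argues differently: smoothness of $(H_0-z)^{-1}\varphi$ in the $\R$-variable (resp. the $\Sigma$-variable) follows from the commutation of $(H_0-z)^{-1}$ with $\langle P\rangle^{-1}\otimes1$ (resp. $1\otimes\langle\Delta_\Sigma\rangle^{-1}$), and the rapid decay in the $\R$-variable follows from point (ii); that is, the inclusion is established (and should be read) as smoothness together with rapid decay, not as membership in a finite sum of product functions. Your argument as written is only valid for vectors whose transverse part is a finite linear combination of eigenfunctions of $\Delta_\Sigma$, a strictly smaller set than $\S(\R)\odot C^\infty(\Sigma)$.
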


\begin{proof}
(i) Let $\tau_j\in\T$. Then one has
\begin{align*}
&\big\|\langle P^2+\tau_j\rangle\langle Q\rangle^{-s}(P^2+\tau_j-z)^{-1}
\langle Q\rangle^s\big\|_{\B(\ltwo(\R))}\\
&\le\big\|\langle P^2+\tau_j\rangle(P^2+\tau_j-z)^{-1}\big\|_{\B(\ltwo(\R))}
\cdot\big\|(P^2+\tau_j-z)\langle Q\rangle^{-s}
(P^2+\tau_j-z)^{-1}\langle Q\rangle^s\big\|_{\B(\ltwo(\R))}\\
&\le\textsc c.
\end{align*}
for some constant $\textsc c>0$ independent of $\tau_j$, due to Lemma
\ref{lemme_borne}. Therefore, for each $N\in\N$ the operator
$$
F_N:=\sum_{j\le N}\langle P^2+\tau_j\rangle\langle Q\rangle^{-s}
(P^2+\tau_j-z)^{-1}\langle Q\rangle^s\otimes\P_j,
$$
with $\P_j$ the orthogonal projection in $\ltwo(\Sigma)$ associated with $\tau_j$, is
bounded in $\H_0$. Furthermore, a direct calculation using the fact that
$\slim_{N\to\infty}\sum_{j\le N}(1\otimes\P_j)=1$ shows that the norm of $F_N$ is
bounded by a constant independent of $N$ and that the limit $\slim_{N\to\infty}F_N$
exists and is equal to
$
\langle H_0\rangle\langle\Phi_0\rangle^{-s}(H_0-z)^{-1}
\langle\Phi_0\rangle^s$ on $\dom\big(\langle\Phi_0\rangle^s\big)$.
This implies the claim.

(ii) This statement is a direct consequence of \cite[Prop.~5.3.1]{ABG}, which can be
applied since $H_0$ is of class $C^\infty(\Phi_0)$.

(iii) Let $\varphi\in\S(\R)\odot C^\infty(\Sigma)$. Then $(H_0-z)^{-1}\varphi$ is
$C^\infty$ over $\R$ (resp. $\Sigma$) due to the commutation of $(H_0-z)^{-1}$ with
$\langle P\rangle^{-1}\otimes1$ (resp. $1\otimes\langle\Delta_\Sigma\rangle^{-1}$).
The fast decay of $(H_0-z)^{-1}\varphi$ in the $\R$-coordinate follows from point
(ii).
\end{proof}

\begin{Lemma}\label{estimation}
Let $z\in\C\setminus\sigma(H)$, $m\in\N$ and $s\in[0,2m]$. Then, the operator
$\langle A\rangle^{s}(H-z)^{-m}\langle\Phi\rangle^{-s}$ belongs to $\B(\H)$.
\end{Lemma}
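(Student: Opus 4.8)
The plan is to establish the endpoint case $s=2m$ by a direct computation and then obtain all intermediate $s$ by complex interpolation. Throughout I work under the hypotheses on $\muL,\muS$ ($\muL\ge0$, $\muS\ge1$) which make the expression \eqref{A_geo} for $A$ available, and I assume $m\ge1$, the case $m=0$ being trivial. The two structural facts I would record first are the following. By \eqref{A_geo} and \eqref{magic} one has on $C_{\rm c}^\infty(M)$ the identity $A=-i\Phi D+B$, where $B=-\tfrac i2\chi_\infty\div_\dv X\in C_{\rm b}^\infty(M)$ and $D$ is the first order differential operator which in the product coordinates of $\Minf$ equals $\partial/\partial x$ (cf.\ \eqref{magic}); its coefficients are in $C_{\rm b}^\infty(M)$, and the apparent singularity of $\chi_\infty$ is harmless since $\Phi$ (and $\chi_\infty\div_\dv X$) vanish on the collar $\Mc\cap\Minf$. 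Since in addition $\Phi,\langle\Phi\rangle^{\pm1}\in C_{\rm b}^\infty(M)$ with $\langle\Phi\rangle\ge1$, a Leibniz expansion shows that for every $j\in\N$ the operator $A^{2j}\langle\Phi\rangle^{-2j}$, initially defined on $C_{\rm c}^\infty(M)$, extends to a differential operator of order at most $2j$ with coefficients in $C_{\rm b}^\infty(M)$, hence — using consequence (iii) of Assumption \ref{CondsurgV} and \cite[Lemma~1.6]{Sal01} — to an element of $\B\big(\H^{2j}(M),\H\big)$. Similarly, for any $a\in\R$ the commutator $[\langle\Phi\rangle^a,H]$ is a first order differential operator whose coefficients are $O(\langle\Phi\rangle^{a-1})$, so $[\langle\Phi\rangle^a,H]\langle\Phi\rangle^{1-a}$ extends to a first order operator with coefficients in $C_{\rm b}^\infty(M)$.

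Using these facts I would prove the endpoint estimate $\langle A\rangle^{2m}(H-z)^{-m}\langle\Phi\rangle^{-2m}\in\B(\H)$. Writing $\langle A\rangle^{2m}=(1+A^2)^m=\sum_{j=0}^m\binom mj A^{2j}$ and inserting $\langle\Phi\rangle^{-2j}\langle\Phi\rangle^{2j}=1$ gives
\begin{equation*}
\langle A\rangle^{2m}(H-z)^{-m}\langle\Phi\rangle^{-2m}
=\sum_{j=0}^m\binom mj\big(A^{2j}\langle\Phi\rangle^{-2j}\big)\,
\big(\langle\Phi\rangle^{2j}(H-z)^{-m}\langle\Phi\rangle^{-2m}\big),
\end{equation*}
so by the first structural fact it suffices to check that $\langle\Phi\rangle^{2j}(H-z)^{-m}\langle\Phi\rangle^{-2m}\in\B\big(\H,\H^{2j}(M)\big)$ for $0\le j\le m$. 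This is a finite iterated-commutator computation: one commutes $\langle\Phi\rangle^{2j}$ successively through the $m$ resolvents via $[\langle\Phi\rangle^{2j},(H-z)^{-1}]=-(H-z)^{-1}[\langle\Phi\rangle^{2j},H](H-z)^{-1}$, then moves the right-hand weight $\langle\Phi\rangle^{-2m}$ into position past the resolvents (each such move producing an extra commutator with the rapidly decaying $\langle\Phi\rangle^{-2m}$); in every resulting term each factor $[\langle\Phi\rangle^{2j},H]$ — whose coefficients grow at most like $\langle\Phi\rangle^{2j-1}$ — is dominated by the available decay $\langle\Phi\rangle^{-2m}$ (indeed $2m\ge2j>2j-1$, so $[\langle\Phi\rangle^{2j},H]\langle\Phi\rangle^{-2m}$ has $C_{\rm b}^\infty(M)$ coefficients), while each commutator with $H$ costs one order of derivative and each resolvent supplies two, so every term maps $\H$ into $\H^{2m+1}(M)\subset\H^{2j}(M)$; the only ``main'' term is $(H-z)^{-m}\langle\Phi\rangle^{2j-2m}$, which maps $\H$ into $\H^{2m}(M)\subset\H^{2j}(M)$ directly. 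Composing the two factors gives the endpoint estimate. (The computation is valid for any $z\in\C\setminus\sigma(H)$, since it only uses $(H-z)^{-1}\in\B\big(\H,\H^2(M)\big)$.)

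Finally I would pass to arbitrary $s\in[0,2m]$ by the Hadamard three-lines theorem applied to the family $F(\zeta):=\langle A\rangle^\zeta(H-z)^{-m}\langle\Phi\rangle^{-\zeta}$ in the strip $0\le\re\zeta\le2m$. By the endpoint estimate, $(H-z)^{-m}\langle\Phi\rangle^{-s}f$ lies in $\dom(\langle A\rangle^{s})$ as soon as $f\in\dom(\langle\Phi\rangle^{2m-s})$, so $F(s)$ is densely defined for each such $s$. For $g\in\dom(\langle A\rangle^{2m})$ and $f\in\dom(\langle\Phi\rangle^{2m})$ the function $\zeta\mapsto\langle g,F(\zeta)f\rangle=\big\langle\langle A\rangle^{\bar\zeta}g,(H-z)^{-m}\langle\Phi\rangle^{-\zeta}f\big\rangle$ is analytic in the open strip and bounded on the closed strip (using $\|\langle A\rangle^{\bar\zeta}g\|\le\|g\|+\|\langle A\rangle^{2m}g\|$ and $\|\langle\Phi\rangle^{-\zeta}f\|\le\|f\|$ there, since $\langle\Phi\rangle\ge1$); on the edge $\re\zeta=0$ it is bounded by $\|(H-z)^{-m}\|\,\|g\|\,\|f\|$ (as $\langle A\rangle^{iy}$ and $\langle\Phi\rangle^{-iy}$ are unitary), and on the edge $\re\zeta=2m$ by $\|\langle A\rangle^{2m}(H-z)^{-m}\langle\Phi\rangle^{-2m}\|\,\|g\|\,\|f\|$. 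The three-lines theorem then yields $|\langle g,F(s)f\rangle|\le\|(H-z)^{-m}\|^{1-s/2m}\,\|\langle A\rangle^{2m}(H-z)^{-m}\langle\Phi\rangle^{-2m}\|^{s/2m}\,\|g\|\,\|f\|$, and since $g$ and $f$ range over dense subspaces the densely defined operator $\langle A\rangle^s(H-z)^{-m}\langle\Phi\rangle^{-s}$ extends to an element of $\B(\H)$, as claimed. I expect the genuinely delicate point to be the commutator bookkeeping in the middle step — keeping track, as the weights $\langle\Phi\rangle^{\pm}$ are moved across the resolvents, of the growth of each intermediate coefficient and of the Sobolev order $\H^k(M)$ into which each term maps — which is precisely where the bounded-geometry consequences (i)--(iii) of Assumption \ref{CondsurgV} and the boundedness of $\nabla\Phi$ enter.
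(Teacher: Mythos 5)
Your proposal is correct and follows essentially the same route as the paper: the endpoint $s=2m$ rests on the observation that $A$ is $\Phi$ times a first-order operator plus a bounded term, so that powers of $A$ paired with the weights $\langle\Phi\rangle^{-\cdot}$ become differential operators with $C_{\rm b}^\infty(M)$ coefficients (bounded on the Sobolev scale by \cite[Lemma~1.6]{Sal01}) while the remaining weights are commuted through the resolvents, and the intermediate values of $s$ are then obtained by interpolation — the paper cites \cite[Prop.~6.17]{Amr09} where you run the Hadamard three-lines argument by hand. The one point you gloss over, and on which the paper spends most of its proof, is the domain justification that $(H-z)^{-m}\langle\Phi\rangle^{-2m}f$ actually lies in $\dom(A^{2j})$ so that the self-adjoint powers of $A$ may be identified with the differential expressions; the paper secures this with the cutoffs $\chi_n$ and elliptic regularity (Lemma \ref{ellipticite}), a routine repair fully compatible with your scheme.
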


\begin{proof}
(i) We start by proving the boundedness of
$\langle A\rangle^{2m}(H-z)^{-m}\langle\Phi\rangle^{-2m}$.

Consider the family of multiplication operators $\chi_n\in\B(\H)$ defined in the
proof of Lemma \ref{regul}. Then $\slim_{n\to\infty}\chi_n=1$, and one has for each
$\varphi\in C^\infty_{\rm c}(M)$ and $n\in\N^*$ that
$\chi_n(H-z)^{-m}\langle\Phi\rangle^{-1}\varphi\in C^\infty_{\rm c}(M)$ due to Lemma
\ref{ellipticite}. Therefore,
\begin{align*}
\big[(H-z)^{-m},\langle\Phi\rangle^{-1}\big]\varphi
&=\langle\Phi\rangle^{-1}(H-z)^{-m}\big[(H-z)^m,\langle\Phi\rangle\big](H-z)^{-m}
\langle\Phi\rangle^{-1}\varphi\\
&=\lim_{n\to\infty}\langle\Phi\rangle^{-1}(H-z)^{-m}
\big[(H-z)^m,\langle\Phi\rangle\big]\chi_n(H-z)^{-m}\langle\Phi\rangle^{-1}\varphi,\\
&=\lim_{n\to\infty}\langle\Phi\rangle^{-1}(H-z)^{-m}L_{2m-1}\;\!\chi_n(H-z)^{-m}
\langle\Phi\rangle^{-1}\varphi,
\end{align*}
with $L_{2m-1}$ a differential operator of order $2m-1$ on $C^\infty_{\rm c}(M)$ with
coefficients in $C_{\rm b}^\infty(M)$. Now, $L_{2m-1}$ extends continuously to a
bounded operator (denoted similarly) from $\H^{2m-1}(M)$ to $\H$ by
\cite[Lemma~1.6]{Sal01}. So, $(H-z)^{-m}L_{2m-1}\in\B(\H)$ and
$L_{2m-1}(H-z)^{-m}\in\B(\H)$, which implies
$$
\big[(H-z)^{-m},\langle\Phi\rangle^{-1}\big]\varphi
=\langle\Phi\rangle^{-1}(H-z)^{-m}L_{2m-1}(H-z)^{-m}\langle\Phi\rangle^{-1}\varphi
$$
and
\begin{align*}
(H-z)^{-m}\langle\Phi\rangle^{-1}\varphi
&=\langle\Phi\rangle^{-1}(H-z)^{-m}
+\big[(H-z)^{-m},\langle\Phi\rangle^{-1}\big]\varphi\\
&=\langle\Phi\rangle^{-1}(H-z)^{-m}
\big\{1+L_{2m-1}(H-z)^{-m}\langle\Phi\rangle^{-1}\big\}\varphi.
\end{align*}
Obviously, one can reproduce those computations to calculate
$(H-z)^{-m}\langle\Phi\rangle^{-k}\varphi$ for any $k=1,2,\ldots,2m$. The result for
$k=2m$ is the following: There exists an operator $B_{2m}\in\B(\H)$ and a sequence
$\big\{B_{2m}^{(n)}\big\}\subset\B(\H)$ with (i)
$B_{2m}^{(n)}C^\infty_{\rm c}(M)\subset C^\infty_{\rm c}(M)$ and (ii)
$\slim_{n\to\infty}B_{2m}^{(n)}=B_{2m}$ on $C^\infty_{\rm c}(M)$ such that
$$
(H-z)^{-m}\langle\Phi\rangle^{-2m}\varphi
=\langle\Phi\rangle^{-2m}(H-z)^{-m}B_{2m}\varphi
$$
for each $\varphi\in C^\infty_{\rm c}(M)$. In particular, one has
$\chi_k(H-z)^{-m}B_{2m}^{(n)}\varphi\in C^\infty_{\rm c}(M)$ for each $k,n\in\N^*$
and $\varphi\in C^\infty_{\rm c}(M)$, and
\begin{align}
(H-z)^{-m}\langle\Phi\rangle^{-2m}\varphi
&=\lim_{k\to\infty}\lim_{n\to\infty}(A+i)^{-2m}(A+i)^{2m}\langle\Phi\rangle^{-2m}
\chi_k(H-z)^{-m}B_{2m}^{(n)}\varphi\nonumber\\
&=\lim_{k\to\infty}\lim_{n\to\infty}
(A+i)^{-2m}L_{2m}\;\!\chi_k(H-z)^{-m}B_{2m}^{(n)}\varphi,\label{pifette}
\end{align}
with $L_{2m}$ a differential operator of order $2m$ on $C^\infty_{\rm c}(M)$ with
coefficients in $C_{\rm b}^\infty(M)$. Now, the extension (denoted similarly) of
$L_{2m}$ to an element of $\B\big(\H^{2m}(M),\H\big)$ satisfies
$(A+i)^{-2m}L_{2m}\in\B(\H)$ and $L_{2m}(H-z)^{-m}\in\B(\H)$. Therefore, one infers
from \eqref{pifette} that
$$
(H-z)^{-m}\langle\Phi\rangle^{-2m}\varphi
=(A+i)^{-2m}L_{2m}(H-z)^{-m}B_{2m}\varphi
=(A+i)^{-2m}B\varphi,
$$
with $B:=L_{2m}(H-z)^{-m}B_{2m}\in\B(\H)$. Since all operators are bounded, this last
equality extends to all $\varphi\in\H$. So, the operator
$\langle A\rangle^{2m}(H-z)^{-m}\langle\Phi\rangle^{-2m}$ can be written as the
product of two bounded operators:
$$
\langle A\rangle^{2m}(H-z)^{-m}\langle\Phi\rangle^{-2m}
\equiv\langle A\rangle^{2m}(A+i)^{-2m}\cdot B.
$$

(ii) Let $R_1:=\langle\Phi\rangle^{-2m}$, $X:=(H-\bar z)^{-m}$ and
$R_2:=\langle A\rangle^{2m}$. Then, point (i) implies that the closure of
$R_1XR_2\upharpoonright\dom(R_2)$ belongs to $\B(\H)$. Since $R_1,R_2$ are positive
invertible self-adjoint operators with $R_1\in\B(\H)$, and $X\in\B(\H)$, one can
apply interpolation (see for example \cite[Prop.~6.17]{Amr09}) to infer that $R_2^\nu
X^*R_1^\nu\in\B(\H)$
for all $\nu\in[0,1]$. However, this implies nothing else but the desired
inclusion;
namely, $\langle A\rangle^{s}(H-z)^{-m}\langle\Phi\rangle^{-s}\in\B(\H)$
for all
$s\in[0,2m]$.
\end{proof}


\def\cprime{$'$}

\end{document}